\documentclass[12pt]{article}
\usepackage{amsmath, amsfonts, amsthm}
\DeclareSymbolFont{bbold}{U}{bbold}{m}{n}
\DeclareSymbolFontAlphabet{\mathbbold}{bbold}

\usepackage{enumerate}
\usepackage{hyperref}
\hypersetup{colorlinks=true, linkcolor=blue, citecolor=blue}
\usepackage[capitalize, nameinlink]{cleveref}
\usepackage{graphicx}
\usepackage{xcolor}

\usepackage{thmtools, thm-restate}
\declaretheorem[name=Lemma]{lemma}
\declaretheorem[name=Definition]{definition}
\declaretheorem[name=Theorem]{theorem}
\declaretheorem[name=Theorem]{priortheorem}

\declaretheorem[name=Corollary]{corollary}

\title{The Quantum Query Complexity of Finding a Tarski Fixed Point on the 2D Grid\footnote{This research was supported by US National Science Foundation grant CCF-2238372.}}
\author{Reed Phillips}

\begin{document}
\maketitle 

\begin{abstract}
    Tarski's theorem states that every monotone function from a complete lattice to itself has a fixed point. We specifically consider the two-dimensional lattice $\mathcal{L}^2_n$ on points $\{1, \ldots, n\}^2$ and where $(x_1, y_1) \leq (x_2, y_2)$ if $x_1 \leq x_2$ and $y_1 \leq y_2$. We show that the quantum query complexity of finding a fixed point given query access to a monotone function on $\mathcal{L}^2_n$ is $\Omega((\log n)^2)$, matching the classical deterministic upper bound of \cite{dang2011computational}. The proof consists of two main parts: a lower bound on the quantum query complexity of a composition of a class of functions including ordered search, and an extremely close relationship between finding Tarski fixed points and nested ordered search.
\end{abstract}

\section{Introduction}

Tarski's fixed-point theorem states that every monotone function from a complete lattice to itself has a fixed point. It has applications in game theory, such as showing the existence of Nash equilibria in supermodular games \cite{etessami2019tarski}. With a careful choice of lattice, it can also formalize the definition of certain recursive functions in denotational semantics.

We consider the grid lattice $\mathcal{L}_n^k = [n]^k$, where we have $x \leq y$ if $x_i \leq y_i$ for all $i \in [k]$. This family of lattices is the most-studied setting for the computational difficulty of finding Tarski fixed points. In the classical black-box model, where the unknown function can only be accessed by querying its values at individual vertices, the best known query complexity upper bound for constant $k$ is $O((\log n)^{\lceil (k+1)/2 \rceil})$ due to \cite{chen2022improved}. The best known lower bounds are $\Omega(k)$ due to \cite{BPR24} and $\Omega(k (\log n)^2/\log k)$ due to \cite{BPR25_arxiv}.

Early study of the problem centered around connections to nested ordered search. The first sub-polynomial upper bound for constant $k$ was the $O((\log n)^k)$-query algorithm of \cite{dang2011computational}, which works by recursively finding fixed points in slices of the grid. For the two-dimensional grid $\mathcal{L}_n^2$ in particular, their algorithm does a series of ``inner" binary searches along one axis to provide the information for an ``outer" binary search along the other axis. This nested ordered search structure was shown to be necessary by \cite{etessami2019tarski}. They constructed a family of \emph{herringbone functions} which hide the unique fixed point on a random path called the \emph{spine}. The most straightforward algorithm for finding such a fixed point is to run binary search on the spine, but finding each spine vertex requires a binary search perpendicular to the spine. They proved an $\Omega((\log n)^2)$ lower bound by showing that any correct algorithm must, with high probability, find $\Omega(\log n)$ spine vertices at a cost of $\Omega(\log n)$ queries each.

We extend this $\Omega((\log n)^2)$ lower bound to the quantum query model by tightening the connection to nested ordered search. Rather than using the nested ordered search structure of herringbone functions to bound what an algorithm must find, we construct a specific family of herringbone functions that exactly embed nested ordered search. We then apply a novel composition theorem to get a quantum lower bound on nested ordered search. While our construction does give a black-box reduction, we choose to use the structure of the spectral adversary method of \cite{BSS01} to ``natively" handle the reduction. This proof strategy is possibly of independent interest, as we use the fact that the adversary matrix is nonnegative. The more general adversary method of \cite{hoyer07negative}, which does allow negative weights, cannot directly be used in this way.

\subsection{Model}

For $k,n \in \mathbb{N}$, let $\mathcal{L}_{n}^k = [n]^k$ be the $k$-dimensional grid of side length $n$.
Let $\leq$ be the binary relation where for vertices ${a} = (a_1, \ldots, a_k) \in \mathcal{L}_n^k$ and ${b} = (b_1, \ldots, b_k) \in \mathcal{L}_n^k$, we have  ${a} \leq {b}$ if and only if $a_i \leq b_i$ for each $i \in [k]$. 
We consider the lattice $(\mathcal{L}_n^k, \leq)$.
A  function $f :\mathcal{L}_n^k \to \mathcal{L}_n^k$ is monotone if ${a} \leq {b}$ implies that $f({a}) \leq f({b})$. 

Let 
 ${TARSKI}(n,k)$ denote the Tarski search problem on the $k$-dimensional grid of side length $n$. 
\begin{definition} [${TARSKI}(n,k)$]
Let $k,n \in \mathbb{N}$.
     Given black-box access to an unknown monotone function $f : \mathcal{L}_n^k \to \mathcal{L}_n^k$, find a vertex $x \in \mathcal{L}_n^k$ with $f(x) = x$ using as few queries as possible.
 \end{definition}

We will work in the quantum query model. While our proofs do not interact directly with the specifics of this model, we provide a formal definition in \cref{sec:preliminaries} for reference. 

An algorithm is said to \emph{succeed} on an input function $f: \mathcal{L}_n^k \to \mathcal{L}_n^k$ if it outputs a fixed point of $f$  with probability at least $2/3$. 
The bounded-error quantum query complexity of $TARSKI(n, k)$ is the minimum number of queries used by a quantum algorithm that succeeds on every input function for  the lattice $\mathcal{L}_n^k$.

\subsection{Our contributions}

Our main result is the following theorem.

\begin{theorem} \label{thm:intro_main}
The  quantum query complexity of ${TARSKI}(n,2)$ is $\Omega((\log n)^2)$.
\end{theorem}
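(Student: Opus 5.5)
The plan has two parts. First, embed nested ordered search into herringbone instances of $TARSKI(n,2)$. Second, prove a quantum lower bound of $\Omega((\log n)^2)$ for nested ordered search using a composition argument built on the spectral adversary method of \cite{BSS01}.

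Define nested ordered search $\mathrm{OS}_m \circ \mathrm{OS}_m$ as follows. The input is an outer monotone bit string of length $m$ whose $i$-th bit is itself given only implicitly. That bit is determined by an inner ordered-search instance: a monotone string of length $m$ whose transition position encodes it, for instance by its parity or by whether it lies left or right of a prescribed threshold. The task is to find the transition position of the outer string. Here $m=\Theta(n)$.

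For the reduction I build a herringbone function on $\mathcal{L}_n^2$. The spine is a monotone staircase path from $(1,1)$ to $(n,n)$ with a unique fixed point on it. Off the spine, $f$ points back toward the spine in the perpendicular direction, and on the spine, $f$ moves forward or backward along it. The key design choice is to let the $i$-th anti-diagonal-like slice correspond to the $i$-th inner instance, so that:
\begin{enumerate}[(i)]
\item the position where the spine crosses slice $i$ equals the inner transition point, and querying $f$ at any vertex of slice $i$ reveals only whether that vertex is before or after the crossing, which is exactly one inner query;
\item the direction $f$ points along the spine at the crossing encodes the outer bit, with the fixed point located at the outer transition.
\end{enumerate}
I must check that monotonicity holds. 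This is the fiddly part: the spine's shape from slice to slice has to be compatible with arbitrary independent inner answers. I would handle this by spacing the slices, letting each inner instance live on a segment of length about $n/2$ within its slice, and using the freedom in the spine between slices to reconnect. Each query to $f$ is then computable from at most one query to the nested instance, and a fixed point determines the outer answer. So a $T$-query quantum algorithm for TARSKI yields a $T$-query algorithm for nested search.

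The main obstacle is the composition lower bound, namely $Q(\mathrm{OS}\circ \mathrm{OS}) = \Omega(Q(\mathrm{OS})^2)$. Composition theorems are standard for the general negative-weight adversary, but ordered search is a search problem with many valid outputs, not a boolean function, and the inner gadget here is a non-boolean encoding. My approach is to take a nonnegative spectral adversary matrix $\Gamma$ for ordered search achieving $\Omega(\log m)$. Such a matrix exists, for example from the Høyer–Neerbek–Shi lower bound, which is a positive-weight bound. I then form $\Gamma_{\mathrm{outer}} \otimes$ (inner adversary matrices), following the Ambainis or Høyer–Lee–Špalek tensor construction. Because every entry is nonnegative, the spectral norms multiply and the ratio $\|\Gamma\|/\max_j\|\Gamma\circ D_j\|$ multiplies. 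The nonnegativity is exactly what lets me bound $\|\Gamma\circ D_j\|$ entrywise, since for nonnegative matrices a smaller entrywise matrix has a smaller norm.

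Finally, I fold the herringbone reduction directly into the adversary matrix. Each TARSKI query position corresponds to an inner-query position, so $D_v$ for a grid vertex $v$ is dominated entrywise by $D_j$ for the corresponding inner index $j$. Nonnegativity again gives $\|\Gamma\circ D_v\|\le\|\Gamma\circ D_j\|$, which yields the claimed $\Omega((\log n)^2)$.
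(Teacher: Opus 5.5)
There is a genuine gap, and it is in the embedding step. Your top-level plan does match the paper's: a nonnegative-weight composition bound for nested ordered search, a herringbone embedding, and entrywise domination of distinguisher matrices turned into norm bounds via nonnegativity of $\Gamma$.

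\textbf{Where the embedding fails.} You claim every query to $f$ ``is computable from at most one query to the nested instance,'' i.e.\ $D_v \le D_j$ for a single inner index $j$. This fails where the spine travels from its crossing $k_i$ of slice $i$ to its crossing $k_{i+1}$ of slice $i+1$, because there $f$ depends jointly on $k_i$ and $k_{i+1}$.
\begin{itemize}
\item For a straight reconnection, whether the spine passes left of, through, or right of a vertex midway between the slices is decided by roughly $k_i+k_{i+1}$. As $k_i$ varies, the positions of block $i+1$ that separate two instances disagreeing at that vertex sweep across the whole block. So no fixed set of $O(1)$ inner positions dominates $D_v$, already on pairs the composed adversary matrix weights. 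Neither your simulation nor your entrywise bound goes through.
\item The obvious way to decouple the endpoints is to funnel the spine to a fixed place and then fan it out. That is worse: a spine vertex at a predictable location reveals the flow direction, i.e.\ the outer bit, without solving any inner instance, and the family becomes solvable with $O(\log n)$ queries.
\item Your fixed point, placed between slices at a location depending on two inner answers, has the same problem.
\end{itemize}
Monotonicity, which you flag as the fiddly part, is not the obstacle: herringbone functions are monotone for every monotone connected spine. The obstacle is this information leakage.

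\textbf{The missing idea: the chunked spine (\cref{def:chunked-spine}).}
\begin{itemize}
\item Each chunk is cut into $n+2$ regions. The spine runs straight along lane $C_i$, switches to lane $C_{i+1}$ only inside the \emph{special} region $C_i+1$, then runs straight along lane $C_{i+1}$. The fixed point sits on a chunk boundary at $B^{\textsc{Bound}(i)}_{C_i}$.
\item Tying the switching region to the entry lane is what makes it work. Take two instances that disagree at a point in region $\beta$ on lane $\gamma$, but agree at both lane-$\gamma$ boundary points and at $B^{\textsc{Bound}(\alpha)}_{\beta-1}$. They must both have $C_\alpha=\beta-1$, so the lower endpoint is pinned.
\item Contiguity of grid lines (\cref{lem:point-sliding-contiguous-regions}) then leaves at most four threshold lanes on the next boundary. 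This gives a fixed $V$ with $|V|\le 7$ and the factor $1/7$.
\end{itemize}

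\textbf{Why the inner function is three-valued.} A boundary query that hits the spine reveals forward, backward, or fixed. The paper therefore uses the three-valued $HSOS$ so that each boundary query is exactly one inner query. That is why it needs \cref{thm:composition-with-generalized-search} rather than the Boolean-inner-output argument of \cite{hoyer2006composition}.

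\textbf{Secondary points.}
\begin{itemize}
\item Your Boolean parity encoding is a reasonable alternative for the composition step. It costs a constant factor on the slices, since hitting the transition exactly takes two adjacent bit queries.
\item Your other suggested encoding (left or right of a threshold) makes each inner instance a one-query problem and yields only $\Omega(\log n)$.
\item Reconnecting arbitrary crossings needs $\Omega(m)$ anti-diagonals between consecutive slices, so $m=\Theta(n)$ is impossible. You can get $m=n^{\Omega(1)}$ (the paper has $m\approx n^{1/3}$), which is harmless for the logarithms.
\end{itemize}
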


The lower bound of \cref{thm:intro_main} extends to any number of dimensions.

\begin{corollary} \label{cor:k-geq-2-tarski-lower-bound}
 The quantum query complexity of ${TARSKI}(n,k)$ is $\Omega((\log n)^2)$ for all $k \geq 2$.
\end{corollary}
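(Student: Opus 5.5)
The plan is to reduce $TARSKI(n,2)$ to $TARSKI(n,k)$ by embedding: any quantum algorithm for the $k$-dimensional problem yields one for the $2$-dimensional problem with the same number of queries. Then \cref{thm:intro_main} gives the bound directly.

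\textbf{The embedding.} Given a monotone $f:\mathcal{L}_n^2\to\mathcal{L}_n^2$, define $g:\mathcal{L}_n^k\to\mathcal{L}_n^k$ by
\[ g(x_1,\ldots,x_k) = \bigl(f_1(x_1,x_2),\, f_2(x_1,x_2),\, 1,\, \ldots,\, 1\bigr), \]
where $f=(f_1,f_2)$.

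\textbf{Properties of $g$.} Two facts need checking.
\begin{itemize}
\item Monotonicity: if $a\le b$ in $\mathcal{L}_n^k$, then $(a_1,a_2)\le(b_1,b_2)$. Monotonicity of $f$ gives the inequality on the first two coordinates, and the remaining coordinates are constant.
\item Fixed points: $x$ is a fixed point of $g$ if and only if $x_3=\cdots=x_k=1$ and $(x_1,x_2)$ is a fixed point of $f$. So fixed points of $g$ correspond bijectively to fixed points of $f$ padded with ones.
\end{itemize}

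\textbf{Query simulation.} One quantum query to $g$ is simulated by one quantum query to $f$. On the input register, apply the oracle for $f$ to the first two coordinates and write $f(x_1,x_2)$ into the first two output slots. Write the constant $1$ into the remaining slots. This is a unitary using a single $f$-query and no garbage, so coherence is preserved.

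\textbf{Conclusion.} Run a $k$-dimensional algorithm $A$ on $g$ and output the first two coordinates of its answer. It succeeds with probability at least $2/3$ on every $f$ and uses the same number of queries as $A$. Hence the complexity of $TARSKI(n,k)$ is at least that of $TARSKI(n,2)$, which is $\Omega((\log n)^2)$ by \cref{thm:intro_main}.

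The only step needing care is confirming that the simulated oracle is a valid unitary query oracle in the model of \cref{sec:preliminaries}. Since each output coordinate of $g$ is either a coordinate of $f$ or a constant, this is routine.
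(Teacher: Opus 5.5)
Your proof is correct and is essentially the paper's argument. The paper combines \cref{thm:intro_main} with \cref{lem:complexity-monotone}, whose proof runs the $k$-dimensional algorithm on $f$ composed with a coordinate clamp map, and with $n'=n$, $k'=2$ that is exactly your padding construction; your version is slightly more explicit, since it states the padding of the extra coordinates by $1$ that the paper's proof leaves implicit when it treats $f \circ g$ as a self-map of $\mathcal{L}^k_n$.
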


Since there are $O((\log n)^2)$ deterministic upper bounds for the 2D and 3D grids due to \cite{dang2011computational} and \cite{fearnley2022faster}, respectively, \cref{cor:k-geq-2-tarski-lower-bound} implies that the quantum query complexity on the 2D and 3D grids is $\Theta((\log n)^2)$.

\begin{corollary}
    The quantum query complexity of ${TARSKI}(n,k)$ for $k \in \{2, 3\}$ is $\Theta((\log n)^2)$.
\end{corollary}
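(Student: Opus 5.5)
The plan is to combine two ingredients already on the table: the $\Omega((\log n)^2)$ quantum lower bound of \cref{cor:k-geq-2-tarski-lower-bound}, and the classical deterministic $O((\log n)^2)$ upper bounds of \cite{dang2011computational} for $k=2$ and \cite{fearnley2022faster} for $k=3$.

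For the lower bound, I will invoke \cref{cor:k-geq-2-tarski-lower-bound} with $k \in \{2,3\}$. (That corollary in turn follows from \cref{thm:intro_main} by embedding a monotone function on $\mathcal{L}_n^2$ into $\mathcal{L}_n^k$, fixing the extra coordinates to $1$ so that fixed points correspond.) It gives that any bounded-error quantum algorithm for ${TARSKI}(n,k)$ needs $\Omega((\log n)^2)$ queries.

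For the upper bound, I observe that a deterministic classical query algorithm is a special case of a quantum query algorithm. Each classical query $x \mapsto f(x)$ can be implemented by one application of the standard quantum oracle on a computational basis state, and the subsequent classical processing is a permutation of basis states. The simulation therefore uses the same number of queries and outputs a fixed point with probability $1 \geq 2/3$. Applying this to the $O((\log n)^2)$ deterministic algorithms of \cite{dang2011computational} (for $k=2$) and \cite{fearnley2022faster} (for $k=3$) gives quantum upper bounds of $O((\log n)^2)$.

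Combining the two bounds yields $\Theta((\log n)^2)$. There is no real obstacle here. The only point needing care is that the simulation argument matches the formal oracle model of \cref{sec:preliminaries}, which it does for a standard oracle; all of the substance lies in \cref{thm:intro_main}.
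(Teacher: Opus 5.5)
Your proposal is correct and matches the paper's argument: the lower bound comes from \cref{cor:k-geq-2-tarski-lower-bound}, and the upper bound from the deterministic $O((\log n)^2)$ algorithms of \cite{dang2011computational} for $k=2$ and \cite{fearnley2022faster} for $k=3$. Your remark that a deterministic classical algorithm can be simulated query-for-query in the quantum oracle model is a step the paper leaves implicit, and it is sound.
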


Our proof is based on a composition theorem for lower bounds generated by the spectral adversary method of \cite{BSS01}. Loosely speaking, we define a \emph{generalized search function} to be a problem for which each query either gives no information about the answer or completely reveals the answer. Letting $SA(f)$ denote the optimal lower bound the spectral adversary method can achieve on a function $f$, we show the following.

\begin{restatable}{theorem}{generalizedSearchComposition}
    \label{thm:composition-with-generalized-search}
    Let $h$ be a composite function $f \circ (g_1, \ldots, g_k)$ as in \cref{def:composite-function}. Suppose that, for all $i \in [k]$, the function $g_i$ is a generalized search function as in \cref{def:generalized-search-function}. Then:
    \begin{align}
        SA(h) &\geq SA(f) \cdot \min_{i \in [k]} SA(g_i)
    \end{align}
\end{restatable}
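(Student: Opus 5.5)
We take an optimal spectral adversary matrix for $f$ and one for each $g_i$, and combine them by a tensor-product construction in the style of the composition theorems for the (positive) adversary bound. Recall that $SA(f) = \max_\Gamma \|\Gamma\| / \max_j \|\Gamma \circ D_j\|$. Here $\Gamma$ ranges over nonnegative symmetric matrices indexed by inputs, with $\Gamma[x,y]=0$ whenever $f(x)=f(y)$, and $D_j[x,y] = 1$ iff $x_j \neq y_j$.

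\textbf{Setup.} Fix an optimal $\Gamma_f$ for $f$ with principal eigenvector $\delta_f \geq 0$ (Perron--Frobenius applies because $\Gamma_f$ is nonnegative). For each $i$ and each pair of values $a \neq b$ of $g_i$, let $\Gamma_{g_i}^{a,b}$ be the block of an optimal adversary matrix for $g_i$ between inputs with output $a$ and inputs with output $b$. Each composite input $z = (z^1,\ldots,z^k)$ determines the outer input $x = (g_1(z^1),\ldots,g_k(z^k))$. Define
\begin{align}
\Gamma_h[z,w] = \Gamma_f[x,y] \cdot \prod_{i=1}^k M_i[z^i,w^i],
\end{align}
where $M_i = \Gamma_{g_i}^{x_i,y_i}$ if $x_i \neq y_i$. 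If $x_i = y_i$, we take $M_i[z^i,w^i] = \delta_{g_i}[z^i]\delta_{g_i}[w^i]$ (suitably normalized), or in the simplest form the identity.

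\textbf{Spectral norm.} We show $\|\Gamma_h\| \geq \|\Gamma_f\| \prod_i \|\Gamma_{g_i}\|$ (up to normalization) by evaluating the quadratic form on the test vector $\delta_h[z] = \delta_f[x]\prod_i \delta_{g_i}[z^i]$, as in Høyer--Lee--Špalek. Nonnegativity makes every term of this sum nonnegative, so the lower bound cannot be spoiled by cancellation.

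\textbf{Query columns.} For a query to coordinate $(i,j)$ we must bound $\|\Gamma_h \circ D_{(i,j)}\|$. Entries with $z^i_j \neq w^i_j$ split into two kinds. When $x_i = y_i$, the generalized search hypothesis is used: the query reveals no information about $g_i$ beyond a possible complete reveal. We would aim to choose $M_i$ on the diagonal blocks so that these entries either vanish or are dominated. When $x_i \neq y_i$, the matrix factors as $(\Gamma_f \circ D_i) \otimes (\Gamma_{g_i}\circ D_j) \otimes \bigotimes_{l\neq i} M_l$, and submultiplicativity of the norm under tensor products bounds it by $\|\Gamma_f\circ D_i\|\,\|\Gamma_{g_i}\circ D_j\| \prod_{l \neq i}\|\Gamma_{g_l}\|$. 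Dividing then yields $SA(f)\cdot SA(g_i) \geq SA(f)\min_i SA(g_i)$.

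\textbf{Main obstacle.} The hard step is the diagonal case $x_i = y_i$, where $z^i \neq w^i$ but both have the same $g_i$-value. In the general adversary composition theorem this case is exactly where negative weights or dual certificates are required. Here we would exploit the generalized search structure. Inputs of $g_i$ sharing an answer are distinguishable only through queries that reveal the answer, so within an answer class we can take $M_i$ to be rank one ($\delta\delta^T$ restricted to the class). A query's Hadamard product with that block can then be bounded via the reveal structure, which must be shown to contribute at most a constant multiple of the off-diagonal term. If this fails, the fallback is to set $M_i$ to the identity on diagonal blocks. This forces $z^i = w^i$ there, so those entries vanish under $D_{(i,j)}$, and the whole burden moves to the norm lower bound. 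That bound should still go through using the nonnegative principal eigenvectors, since the generalized search property lets us choose $\delta_{g_i}$ uniform within each answer class.
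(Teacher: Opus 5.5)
There is a genuine gap. Your architecture (outer weight $\Gamma_f[x,y]$ times a product of inner blocks, identity on diagonal blocks, a product test vector, and a factorization of $\Gamma_h\circ D$) is the paper's. However, it only closes when all off-diagonal blocks of each inner adversary matrix are the \emph{same} matrix, and you instead take the blocks $\Gamma^{a,b}_{g_i}$ of an arbitrary optimal adversary matrix.

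For $|\Sigma|>2$ this fails outright. Take $HSOS_m$ with optimal tile $A$, and let $\Gamma$ equal $A$ between the $\uparrow$- and $\downarrow$-classes and zero elsewhere. It has norm $\|A\|$ and masked norms $\|A\circ D^A_q\|$, so it is also optimal. Yet feeding its blocks into your construction for $OS\circ(HSOS_m,\ldots,HSOS_m)$ gives $\Gamma_h=0$. The reason is that any two distinct $OS$ inputs differ in some coordinate as $*$ versus $\uparrow$ or $\downarrow$, and your block there is zero.

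In general, the inner factors $(\delta_{g_i}|_a)^T\Gamma^{a,b}_{g_i}(\delta_{g_i}|_b)$ in your test-vector computation vary with $(a,b)$. So the quadratic form does not factor as $\|\Gamma_f\|$ times a product. In the denominator, the only generic bound is $\|\Gamma^{a,b}_{g_i}\circ D_j\|\le\|\Gamma_{g_i}\circ D_j\|$, a whole-matrix quantity that can exceed the per-pair numerator contribution by a factor of order $|\Sigma|-1$. In the Boolean setting of \cite{hoyer2006composition} there is a single off-diagonal block, so this never arises.

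Your rank-one option for diagonal blocks also fails. Same-answer inputs of $g_i$ \emph{can} differ at positions revealing nothing (e.g.\ $\rightarrow$ versus $\leftarrow$ in $HSOS$). Those entries survive the query mask while carrying weight $\Gamma_f[x,y]$ with $x_i=y_i$, which no $D^f$ factor controls.

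The missing idea is the paper's uniformization lemma. For a generalized search function there is an optimal adversary matrix with $\Gamma_g[(\sigma_1,a),(\sigma_2,b)]=A[a,b]$ whenever $\sigma_1\ne\sigma_2$. That is, $\Gamma_g$ is the tensor product of the all-ones-minus-identity matrix on $\Sigma$ with a single tile $A$. It is proved by an automorphism-principle average:
\begin{itemize}
\item average over permutations $(\sigma,j)\mapsto(\pi(\sigma),j)$ of $\Sigma$; the generalized-search property is exactly what makes each permuted matrix an adversary matrix with the same distinguisher matrices;
\item reweight by the permuted principal eigenvectors and renormalize entrywise by $\beta[x]\beta[y]$, where $\beta[x]=(\sum_\pi\delta_\pi[x]^2)^{1/2}$;
\item use the Schur product theorem to keep every masked norm at most $1$.
\end{itemize}

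With one tile per $g_i$, the distinguisher $D^{A_i}_q$ is well defined and $\|\Gamma_g\|/\|\Gamma_g\circ D_q\|=\|A\|/\|A\circ D^A_q\|$. The diagonal blocks must then be the \emph{scaled} identity $\|A_i\|I$, so that $\delta_{A_i}^T\Gamma^{(a,b)}_{g_i}\delta_{A_i}=\|A_i\|$ for all $a,b$.

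Finally, $\Gamma_h$ is not a global tensor product, so tensor submultiplicativity is not enough for the denominator. The paper instead observes that $\Gamma_h\circ D^h$ is itself the composition matrix generated by $\Gamma_f\circ D^f_p$, $A_p\circ D^{A_p}_q$ and the other $A_d$. It then bounds the norm via the sub-vector argument $u^T\Gamma_h u\le\prod_i\|A_i\|\sum_{a,b}\Gamma_f[a,b]\|u_a\|\|u_b\|$.
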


\section{Related Work}\label{sec:related-work}

Tarski's fixed-point theorem, sometimes called the Knaster-Tarski theorem, was proven in the general form we use by \cite{tarski1955lattice}. It has applications in the study of supermodular games as seen in \cite{etessami2019tarski}.

The problem $TARSKI(n, k)$ was first studied by \cite{dang2011computational}. They showed that, for constant $k \geq 1$, that the query complexity of $TARSKI(n, k)$ is $O((\log n)^k)$. Their algorithm recursively uses binary search to find fixed points of progressively lower-dimensional slices of the grid. A surprising breakthrough came from \cite{fearnley2022faster}, who showed that for constant $k \geq 2$ the query complexity of $TARSKI(n, k)$ is $O\left((\log n)^{\lceil 2k/3 \rceil}\right)$. The core of their algorithm is a subroutine that extracts useful information from a two-dimensional slice in only $O(\log n)$ queries. \cite{chen2022improved} isolated that improvement and used it to construct an $O\left((\log n)^{\lceil (k+1)/2 \rceil}\right)$-query algorithm. \cite{haslebacher26levelset} gave an alternative $O\left((\log n)^{\lceil 2k/3 \rceil}\right)$-query algorithm that uses diagonal slices rather than axis-aligned slices.

The first lower bound beyond $k=1$ was given by \cite{etessami2019tarski}, who formalized a way to embed nested ordered search into $TARSKI(n, k)$. Their \emph{herringbone function} construction gave an $\Omega((\log n)^2)$ randomized lower bound for $TARSKI(n, 2)$ and is the basis for our lower bound. \cite{BPR24} gave the first $k$-dependent lower bounds of $\Omega(k)$ and $\Omega(k \log(n)/\log k)$. \cite{BPR25_arxiv} gave a generalization of herringbone functions to arbitrary $k \geq 2$ and proved a randomized lower bound of $\Omega(k (\log n)^2/\log k)$.

In the white-box model, the position of $TARSKI(n, k)$ within TFNP has been studied. \cite{etessami2019tarski} showed that it is in both PLS and PPAD, which by the results of \cite{fearnley2022cls} implies it is in CLS. The reduction of \cite{CLY23} showed that the special case where the monotone function has a unique fixed point is no easier than the general case.

The spectral adversary method was introduced by \cite{BSS01}. It was subsequently shown by \cite{spalek2006quantumadversaries} to be equivalent to many other methods, in terms of the optimal query complexity lower bounds it could show. A version of the spectral adversary method that allows negative weights was shown to be stronger by \cite{hoyer07negative}, then shown to be optimal up to logarithmic factors by \cite{reichardt09span}.

The spectral adversary method's behavior under function composition has been extensively studied in the Boolean case. \cite{ambainis06polynomial} and \cite{laplante06adversary} showed that composing a function with itself $d$ times raises the resulting adversary bound to the $d$th power. \cite{hoyer2006composition} generalized this result to arbitrary compositions. Their result does not directly apply to our problem because we require non-Boolean function outputs, but their proof does form the basis of ours. \cite{hoyer07negative} partly extended the composition theorem of \cite{hoyer2006composition} to the version with negative weights.

\section{Preliminaries}\label{sec:preliminaries}

\paragraph{Notation.} Let $[n]$ denote the set $\{1, 2, \ldots, n\}$. For a vector $v$, let $\|v\|$ denote its 2-norm. For a matrix $A$, let $\|A\|$ denote its spectral norm. As we use this notation only with non-negative symmetric matrices, this is also the largest eigenvalue of $A$. For two matrices $A$ and $B$, let $A \circ B$ denote the element-wise (or Hadamard) product, which is the matrix with entries 
\begin{align}
    (A \circ B)[x, y] = A[x,y] \cdot B[x,y]\,.
\end{align}
Our proofs in \cref{sec:composition-lower-bounds} feature products of sets of strings, both directly and as the indices of tensor products of matrices. In each case we combine the strings by concatenation, e.g.:
\begin{align}
    \{``ab"\} \times \{``c", ``db"\} = \{``abc", ``abdb"\}
\end{align}

\paragraph{Quantum query model.}
A problem in the quantum query model consists of finite alphabets $G$ and $H$, a domain $S \subseteq G^m$ for some $m \in \mathbb{N}$, and a function $f:S \to H$. The input is a string $s \in S$ and the goal is to compute $f(s)$ by querying characters of $s$. For example, in $TARSKI(n, k)$, we can have $s$ represent a monotone function $\mathcal{L}^k_n \to \mathcal{L}^k_n$ by taking $G=H=[n]^k$ and $m=n^k$. This formulation can only accommodate one fixed point as the output $f(s)$, but we will only use instances with a unique fixed point in our proofs.

The algorithm has a working state $|\phi \rangle$ of the form $|\phi \rangle = \sum_{i, a, z} \alpha_{i, a, z} | i, a, z \rangle $, where $i$ is the label of an index in $[n]$, $a$ is a string representing the answer register, $z$ is a string representing the workplace register, and $\alpha_{i, a, z}$ is a complex amplitude. 
The amplitudes  satisfy the constraint $\sum_{i, a, z} |\alpha_{i, a, z}|^2 = 1$.

Starting from an arbitrary fixed initial state $|\phi_0\rangle$, the algorithm works as an alternating sequence of \emph{query} and \emph{algorithm} steps. If the current state is $| \phi \rangle = \sum_{i,a,z} \alpha_{i,a,z} \left | i, a, z\rangle \right. $, a query transforms it  as follows:
\begin{align} \label{eq:def_U_x}
    \sum_{i,a,z} \alpha_{i,a,z} \left | i, a, z\rangle \right. \rightarrow \sum_{i,a,z} \alpha_{i,a,z} | i, a \oplus s_i, z\rangle,
\end{align}
Where $\oplus$ denotes the bitwise exclusive OR operation, assuming characters of $G$ are represented in binary.
An algorithm step multiplies the vector of $\alpha_{i,a,z}$'s by a unitary matrix that does not depend on $s$.
Thus a $T$-query quantum query algorithm is a sequence of operations
\begin{align}
    U_0 \rightarrow \mathcal{O}_s \rightarrow U_1 \rightarrow \mathcal{O}_s \rightarrow \ldots \rightarrow  U_{T-1} \rightarrow \mathcal{O}_s \rightarrow U_T, 
\end{align}
where $\mathcal{O}_s$ is the oracle gate defined in \eqref{eq:def_U_x} and $U_0, U_1, \ldots, U_T$ are arbitrary  unitary operations that are independent of the input string $s$.

The algorithm is said to \emph{succeed} if at the end it gives a correct answer with probability at least $2/3$, that is:
\begin{align} 
\sum_{i,a,z: i=f(s)} |\alpha_{i,a,z}|^2 \geq 2/3\,.
\end{align}
The bounded-error quantum query complexity is the minimum number of queries used by a quantum algorithm that succeeds on every input string $s \in S$.

\paragraph{Spectral adversary method.} We use the spectral adversary method developed by \cite{BSS01}, in the form stated by \cite{spalek2006quantumadversaries}. It uses the concept of a \emph{distinguisher matrix}, which we will extensively use in our proofs.

\begin{definition}[Distinguisher matrix]
    \label{def:distinguisher-matrix}
    Let $G$ and $H$ be finite alphabets and $m \in \mathbb{N}$. Let $S \subseteq G^m$ and $f : S \to H$ a function. For each $i \in [m]$, the $|S| \times |S|$ \emph{distinguisher matrix} $D^f_i$ (indexed by the elements of $S$) is the matrix with entries:
    \begin{align}
        D^f_i [x, y] = \begin{cases}
            1 & \text{if $x_i \neq y_i$} \\
            0 & \text{if $x_i = y_i$}
        \end{cases}
    \end{align}
\end{definition}

\begin{priortheorem}[Spectral adversary method, see Theorem 3.1 in \cite{spalek2006quantumadversaries}]
\label{thm:spectral-adversary-method}
Let $G$ and $H$ be finite alphabets and $m \in \mathbb{N}$. Let $S \subseteq G^m$ and $f : S \to H$ a function. Let $\Gamma$ denote an $|S| \times |S|$ non-negative symmetric matrix such that for all $x,y \in S$, we have $\Gamma[x, y] = 0$ if $f(x) = f(y)$. Let 
\begin{align}
    SA(f) = \max_{\Gamma} \frac{\|\Gamma\|}{\max_i \|\Gamma \circ D^f_i\|} \,.
\end{align} 
Then for all $\epsilon \in (0,1/2)$, the $\epsilon$-error quantum query complexity of $f$ is at least:
    \begin{align} \label{eq:SA_lb_theorem}
\left(1 - 2 \sqrt{\epsilon(1-\epsilon)}\right) \cdot SA(f) \,.
    \end{align}
\end{priortheorem}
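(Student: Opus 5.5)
The plan is the standard progress-measure (weight-scheme) argument. Fix a non-negative symmetric $\Gamma$ that vanishes on pairs with $f(x)=f(y)$, and let $\delta$ be a principal eigenvector of $\Gamma$. By Perron--Frobenius we may take $\delta$ non-negative with $\|\delta\|=1$, so that $\delta^{T}\Gamma\delta=\|\Gamma\|$.

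For a $T$-query algorithm let $|\psi_x^t\rangle$ be its state on input $x$ after $t$ queries, and define the progress
\begin{align}
W_t=\sum_{x,y}\Gamma[x,y]\,\delta_x\delta_y\,\bigl|\langle\psi_x^t|\psi_y^t\rangle\bigr| \,.
\end{align}
The proof has three steps.
\begin{enumerate}[(i)]
\item \emph{Start.} Since $|\psi_x^0\rangle$ does not depend on $x$, every inner product equals $1$, so $W_0=\|\Gamma\|$.
\item \emph{End.} If the algorithm has error at most $\epsilon$, then for every pair with $f(x)\neq f(y)$ the standard fidelity bound for two states that must be told apart with success $1-\epsilon$ gives $|\langle\psi_x^T|\psi_y^T\rangle|\le 2\sqrt{\epsilon(1-\epsilon)}$. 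Only such pairs carry weight, so $W_T\le 2\sqrt{\epsilon(1-\epsilon)}\,\|\Gamma\|$.
\item \emph{Per-query loss.} Show $W_{t}-W_{t+1}\le \beta\max_i\|\Gamma\circ D^f_i\|$ with $\beta=1$.
\end{enumerate}
Combining the three gives $T\ge (1-2\sqrt{\epsilon(1-\epsilon)})\,\|\Gamma\|/\max_i\|\Gamma\circ D^f_i\|$, and maximizing over $\Gamma$ yields the statement.

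For step (iii), the unitaries $U_t$ do not change inner products, so only the oracle matters. Decompose $|\psi_x\rangle=\sum_i|\psi_{x,i}\rangle$ according to the index register. Blocks with $x_i=y_i$ contribute identically before and after the query. Set $a_i[x]=\delta_x\|\psi_{x,i}\|$, so that $\sum_i\|a_i\|^2=1$. The loss is then at most
\begin{align}
\sum_i\sum_{x,y}(\Gamma\circ D^f_i)[x,y]\, c_{x,y,i} \,,
\end{align}
where $c_{x,y,i}$ bounds the change $\bigl|\langle\psi_{x,i}|(I-\mathcal{O}_x^{\dagger}\mathcal{O}_y)|\psi_{y,i}\rangle\bigr|$ on block $i$. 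If $c_{x,y,i}\le\beta\|\psi_{x,i}\|\|\psi_{y,i}\|$, the total is at most $\beta\sum_i a_i^{T}(\Gamma\circ D^f_i)a_i\le\beta\max_i\|\Gamma\circ D^f_i\|$. This step uses non-negativity of $\Gamma$ and of the $a_i$.

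The main obstacle is getting $\beta=1$ rather than $\beta=2$. The triangle inequality bounds $\|I-\mathcal{O}_x^{\dagger}\mathcal{O}_y\|$ only by $2$, which gives the weaker constant with an extra factor $1/2$. I would try two routes.
\begin{itemize}
\item Replace $|\langle\cdot\rangle|$ in $W_t$ by a real part with phases chosen once for the whole run, and exploit that $I-\mathcal{O}_x^{\dagger}\mathcal{O}_y$ on block $i$ is $I-P$ for a fixed-point-free XOR permutation $P$. Pairing $x$ with $y$ and $y$ with $x$ symmetrically, one would hope the averaged loss over $(x,y)$ and $(y,x)$ is at most $\|\psi_{x,i}\|\|\psi_{y,i}\|$.
\item Failing that, quote the constant exactly as in Theorem~3.1 of \cite{spalek2006quantumadversaries}, with the normalization conventions that reference uses, since the statement here is cited from there.
\end{itemize}
In either case, the constant factor is irrelevant for the asymptotic $\Omega((\log n)^2)$ application.
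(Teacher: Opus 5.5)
\textbf{There is a genuine gap at step (iii), and it cannot be closed, because the displayed constant is too large by a factor of $2$.}

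Your steps (i) and (ii) are correct. Step (iii) with $\beta=2$, using the triangle-inequality bound $|\langle\psi_{x,i}|(I-P)|\psi_{y,i}\rangle|\le 2\|\psi_{x,i}\|\,\|\psi_{y,i}\|$, is a complete weight-scheme proof. What it proves is $Q_\epsilon(f)\ge\frac{1-2\sqrt{\epsilon(1-\epsilon)}}{2}\,SA(f)$, not the constant as printed.

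The missing step $\beta=1$ is impossible, because the statement as written is false. Take $S=\{e_1,\dots,e_N\}\subseteq\{0,1\}^N$ with $f(e_j)=j$, and $\Gamma=J-I$.
\begin{itemize}
\item Then $\|\Gamma\|=N-1$, and each $\Gamma\circ D^f_i$ is the adjacency matrix of a star with norm $\sqrt{N-1}$. Hence $SA(f)\ge\sqrt{N-1}$.
\item Grover's algorithm finds the marked index with at most $\frac{\pi}{4}\sqrt{N}$ queries and error at most $1/N$.
\item The printed bound at $\epsilon=1/N$ demands at least $\bigl(1-2\sqrt{N-1}/N\bigr)\sqrt{N-1}\ge\sqrt{N-1}-2$ queries.
\item Already at $N=100$ this is about $7.97$, while Grover uses $7$ queries with error about $0.005$.
\end{itemize}

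Your first route also fails for a local reason. On block $i$ the operator $P$ is an involution, so $I-P=2\Pi_-$, where $\Pi_-$ is the projector onto its $(-1)$-eigenspace. Phase-kickback states lie exactly in that eigenspace and attain the loss $2\|\psi_{x,i}\|\,\|\psi_{y,i}\|$. Swapping $(x,y)$ with $(y,x)$ only complex-conjugates the term, and fixing phases in advance does not prevent an adaptive algorithm from realizing this worst case.

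The paper gives no proof of its own; it imports the theorem from \cite{spalek2006quantumadversaries}. The standard form of the positive-weight spectral adversary bound carries exactly the factor $\tfrac12$ that your $\beta=2$ argument produces. So your second fallback, quoting the source, would establish the corrected constant rather than the displayed one.

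As you observe, this does not affect the paper's results. The bound is only used at $\epsilon=1/3$ inside $\Omega(\cdot)$, so \cref{thm:intro_main} and its corollaries stand. The right repair is to insert the factor $\tfrac12$ into the statement; with that change, your proof with $\beta=2$ is complete.
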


The statement of \cref{thm:spectral-adversary-method} applies generally to $\epsilon$-error quantum query complexity. Our notion of bounded-error quantum query complexity is the $\epsilon=1/3$ case. We will refer to matrices $\Gamma$ satisfying the conditions of \cref{thm:spectral-adversary-method} as \emph{adversary matrices}.

\section{Proof of the Lower Bound}

In a similar way to \cite{etessami2019tarski}, we embed instances of nested ordered search into $TARSKI(n, 2)$. However, the way we connect the difficulty of the embedded instance to the difficulty of $TARSKI(n, 2)$ is different. In particular, the structure of the spectral adversary method allows us to essentially show a reduction from nested ordered search to $TARSKI(n, 2)$.

Our proof will be organized in three sections. In \cref{sec:composition-lower-bounds}, we show our main general result: an extension of the composition lower bound of \cite{hoyer2006composition} to some cases with non-Boolean functions. In \cref{sec:ordered-search}, we apply that result to ordered search to obtain a quantum lower bound for nested ordered search. Finally, in \cref{sec:reduction-to-tarski} we prove the connection between $TARSKI(n, 2)$ and nested ordered search. Complete proofs of the results in these sections are in \cref{app:composition-lower-bounds}, \cref{app:ordered-search-deferred-proofs}, and \cref{app:reduction-deferred-proofs}, respectively.

\subsection{Composition Lower Bounds}
\label{sec:composition-lower-bounds}

We first define the composition of functions in the context of the quantum query model.

\begin{definition}
    \label{def:composite-function}
    Let $\Phi$, $\Sigma$, $\Psi$ be finite alphabets and $k \in \mathbb{N}$. Consider a domain $S_f \subseteq \Sigma^k$ and let $f : S_f \to \Psi$ be a function. For each $i \in [k]$, let $n_i \in \mathbb{N}$. Consider the domains $S_{g_i} \subseteq \Phi^{n_i}$ and let $g_i : S_{g_i} \to \Sigma$ be functions. Let $n = \sum_{i=1}^k n_i$.

    Then define the domain:
    \begin{align}
        S_h &= \bigcup_{x \in S_f} \prod_{i \in [k]} g_i^{-1}(x_i) \subseteq \Phi^n
    \end{align}
    Each $x \in S_h$ can be partitioned into $x^i \in S_{g_i}$ such that $x = x^1\ldots x^k$. Then the composition $h = f \circ (g_1, \ldots, g_k) : S_h \to \Psi$ is the function defined by:
    \begin{align}
        \label{eq:composition-explicit-definition}
        h(x^1\ldots x^k) = f(g_1(x^1)\ldots g_k(x^k))
    \end{align}
    For $x \in S_h$, we will use the notation $\tilde{x} \in S_f$ for the string such that \eqref{eq:composition-explicit-definition} could be written as $h(x) = f(\tilde{x})$.
\end{definition}

Our arguments in this section are similar to \cite{hoyer2006composition}, who investigated $SA(h)$ compared to $SA(f)$ and the $SA(g_i)$ in the case where $\Phi = \Sigma = \Psi = \{0, 1\}$. In particular, the assumption $\Sigma = \{0, 1\}$ was crucial in their arguments. We obtain similar results for general $\Phi$, $\Sigma$, and $\Psi$ under extra assumptions on the inner functions $g_i$. Specifically, we assume them to be \emph{generalized search functions}, which we now define.

\begin{definition}
    \label{def:generalized-search-function}
    Let $\Phi, \Sigma$ be finite alphabets. Let $n \in \mathbb{N}$. Let $S_g \subseteq \Phi^n$ be a domain and consider a function $g : S_g \to \Sigma$. Suppose there exists $m \in \mathbb{N}$ such that $|g^{-1}(\sigma)| = m$ for all $\sigma \in \Sigma$, and label the instances in $S_g$ by unique ordered pairs $(\sigma, j)$ for $\sigma \in \Sigma$ and $j \in [m]$. If this can be done so that:
    \begin{itemize}
        \item For all $(\sigma, j) \in S_g$, we have $g((\sigma, j)) = \sigma$.
        \item For all $i \in [n]$ and all $(\sigma_1, j_1), (\sigma_2, j_2) \in S_g$, whether or not $(\sigma_1, j_1)_i = (\sigma_2, j_2)_i$ depends only on $j_1$, $j_2$, and whether or not $\sigma_1 = \sigma_2$.
    \end{itemize}
    Then we say $g$ is a \emph{generalized search function} with \emph{$m$ variants}, and we will typically refer to instances of such a function by these ordered pairs.
\end{definition}

The intuition for the name comes from how both ordered and unordered search fit this description. More generally, any function where the answer is completely determined by a single (but possibly unknown) query location is a generalized search function. This includes many ``hidden-bit" constructions that have been used to turn search problems into decision problems (see e.g. \cite{Aaronson06}). For example, instead of asking for the location of a fixed point in $TARSKI(n, k)$, one could embed a secret answer $s$ from a finite set $S$ at each fixed point and then ask for $s$.

A natural kind of adversary matrix for generalized search functions is one that treats all of the function's possible outputs symmetrically. We call these matrices \emph{uniform} and formally define them as follows.

\begin{definition}
    \label{def:uniform-adversary-matrix}
    Let $\Phi, \Sigma$ be finite alphabets and let $n, m \in \mathbb{N}$. Let $g$ be a generalized search function with $m$ variants, domain $S_g \subseteq \Phi^n$, and range $\Sigma$. Then an adversary matrix $\Gamma_g \in \mathbb{R}^{|S_g| \times |S_g|}$ is called \emph{uniform} if for some nonnegative symmetric matrix $A \in \mathbb{R}^{m \times m}$, the entries of $\Gamma_g$ satisfy:
    \begin{align}
        \Gamma_g [(\sigma_1, a), (\sigma_2, b)] &= \begin{cases}
            A[a, b] & \text{if $\sigma_1 \neq \sigma_2$} \\
            0 & \text{if $\sigma_1 = \sigma_2$}
        \end{cases}
    \end{align}
    The corresponding $A$ is called the \emph{tile} of $\Gamma_g$.

    The best spectral adversary bound (see \cref{thm:spectral-adversary-method}) achievable by a uniform adversary matrix is denoted $SA^U(g)$.
\end{definition}

While uniform adversary matrices suffice for our lower bound on $TARSKI$, a general composition theorem would be more useful if it covered all adversary matrices. Fortunately, as the following lemma shows, we may consider only uniform adversary matrices without loss of generality. Its proof is based on the automorphism principle of \cite{hoyer07negative} and deferred to \cref{app:composition-lower-bounds}.

\begin{restatable}{lemma}{uniformMatrixSuffices}
\label{lem:uniform-adversary-optimal}
    Let $\Phi, \Sigma$ be finite alphabets and let $n \in \mathbb{N}$. Let $g$ be a generalized search function with $m$ variants, domain $S_g \subseteq \Phi^n$, and range $\Sigma$. Then $SA^U(g) = SA(g)$; that is, there exists a uniform adversary matrix that achieves the optimal bound.
\end{restatable}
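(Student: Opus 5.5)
We prove that every adversary matrix $\Gamma$ for $g$ can be replaced by a uniform one with at least the same ratio $\|\Gamma\|/\max_i\|\Gamma\circ D^g_i\|$. The argument is a symmetrization over a group acting on the outputs, in the spirit of the automorphism principle of \cite{hoyer07negative}. One inequality is trivial: $SA^U(g)\le SA(g)$, since uniform adversary matrices are adversary matrices. So we only need the reverse inequality.

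First we set up the symmetry. Let $\mathrm{Sym}(\Sigma)$ be the group of permutations $\pi$ of $\Sigma$. Each $\pi$ acts on $S_g$ by $P_\pi:(\sigma,j)\mapsto(\pi(\sigma),j)$, which is a permutation matrix on $|S_g|$ coordinates. By the second bullet of \cref{def:generalized-search-function}, whether $(\sigma_1,j_1)_i=(\sigma_2,j_2)_i$ depends only on $j_1$, $j_2$ and $[\sigma_1=\sigma_2]$. Since $\pi$ preserves equality of $\sigma$'s, we get
\begin{align}
D^g_i[P_\pi x,P_\pi y]=D^g_i[x,y]
\end{align}
for every $i$. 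Thus $P_\pi^{T} D^g_i P_\pi = D^g_i$: each distinguisher matrix is invariant, and no permutation of the query indices is even needed. Likewise, if $\Gamma$ vanishes whenever $g(x)=g(y)$, then so does $\Gamma_\pi:=P_\pi^T\Gamma P_\pi$, because $g(P_\pi x)=\pi(g(x))$ and $\pi$ is injective.

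Next we average. Take an optimal $\Gamma$ and rescale it so that $\max_i\|\Gamma\circ D^g_i\|=1$. Let $v\ge 0$ be a principal eigenvector of $\Gamma$ (Perron--Frobenius; $\Gamma$ is nonnegative symmetric). Define
\begin{align}
\Gamma'=\frac{1}{|\Sigma|!}\sum_\pi \Gamma_\pi .
\end{align}
Then $\Gamma'$ is nonnegative and symmetric and vanishes on equal-output pairs. By the triangle inequality and invariance,
\begin{align}
\|\Gamma'\circ D^g_i\|\le \frac{1}{|\Sigma|!}\sum_\pi\|(\Gamma\circ D^g_i)_\pi\| = \|\Gamma\circ D^g_i\|\le 1 .
\end{align}
For the numerator, $\|\Gamma'\|$ could in principle drop, so we use the standard trick: consider the block-diagonal matrix $\bigoplus_\pi \Gamma_\pi$, or instead lower-bound $\|\Gamma'\|$ through a test vector. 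Here nonnegativity saves us. Set $w=\frac{1}{|\Sigma|!}\sum_\pi P_\pi^T v$ with $\|v\|=1$. Then
\begin{align}
w^T\Gamma' w \ge \frac{1}{(|\Sigma|!)^2}\sum_\pi (P_\pi^Tv)^T\Gamma_\pi(P_\pi^Tv)=\frac{\|\Gamma\|}{|\Sigma|!},
\end{align}
which only gives a lossy bound. The clean route is to use the symmetry itself. Since $\Gamma'$ commutes with every $P_\pi$, its Perron eigenspace is invariant, so it has an invariant nonnegative eigenvector. Moreover $\|\Gamma'\|\ge u^T\Gamma' u$ for the invariant unit vector $u$ obtained by taking $u[(\sigma,j)]=\big(\frac{1}{|\Sigma|}\sum_{\sigma'}v[(\sigma',j)]^2\big)^{1/2}$. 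Expanding $u^T\Gamma_\pi u$ and applying Cauchy--Schwarz entrywise (valid because all entries are nonnegative) gives $u^T\Gamma' u\ge v^T\Gamma v=\|\Gamma\|$. Verifying this inequality carefully is the main obstacle, and it is exactly where nonnegativity is essential. If the direct computation is awkward, the fallback is the Høyer--Lee--Špalek formulation, which works with $\min$ over $\Gamma$ subject to the operator-norm constraint as a semidefinite program. That program is convex and invariant under the group, so averaging a feasible optimal solution keeps it feasible and optimal.

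Finally, we read off uniformity. $\Gamma'$ satisfies $\Gamma'[(\sigma_1,a),(\sigma_2,b)]=\Gamma'[(\pi\sigma_1,a),(\pi\sigma_2,b)]$ for all $\pi$. Since $\mathrm{Sym}(\Sigma)$ is transitive on ordered pairs of distinct elements, the entry depends only on $a$, $b$ and $[\sigma_1=\sigma_2]$. It is zero when $\sigma_1=\sigma_2$, and it equals some $A[a,b]$ otherwise, with $A$ nonnegative and symmetric because $\Gamma'$ is. Hence $\Gamma'$ is uniform with tile $A$, and its ratio is at least $SA(g)$. This gives $SA^U(g)\ge SA(g)$.
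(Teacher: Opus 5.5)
Your symmetry setup (invariance of each $D^g_i$, validity of each $\Gamma_\pi$), your uniformity argument via 2-transitivity, and your denominator bound for the plain average are all fine. The gap is the numerator. For $\Gamma'=\frac{1}{|\Sigma|!}\sum_\pi\Gamma_\pi$, the inequality $u^T\Gamma'u\ge v^T\Gamma v$ is false, and in fact $\|\Gamma'\|<\|\Gamma\|$ can happen. Take $|\Sigma|=3$ and let $\Gamma$ have only the entries $\Gamma[(1,j),(2,j)]=\Gamma[(2,j),(1,j)]=1$. Then $\|\Gamma\|=1$. But $\Gamma'$ equals $\frac13(J_3-I_3)$ on the three instances of variant $j$ and is zero elsewhere, where $J_3$ is the all-ones matrix. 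So $u^T\Gamma'u=\|\Gamma'\|=2/3$.

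This is structural, not a matter of a better test vector. The map $\Gamma\mapsto\|\Gamma\|$ is convex, so averaging can lower the numerator. Suppose the largest $\|\Gamma\circ D^g_i\|$ is attained on an already-invariant block supported on other variants. Then averaging does not lower the denominator, and the ratio genuinely drops. Your SDP fallback does not close this as stated. The adversary bound is a maximum of a convex function, not a minimum. It becomes an averaging-friendly convex program only after a change of variables that makes the objective linear, and you do not say what that change of variables is.

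The missing idea, which is what the paper does, is to weight by the principal eigenvector before averaging. Set $\delta_\pi[x]=v[\pi(x)]$ and $\beta[x]=(\sum_\pi\delta_\pi[x]^2)^{1/2}$, discarding orbits on which $v$ vanishes so that $\beta>0$. Then define $\Gamma'[x,y]=\sum_\pi\Gamma_\pi[x,y]\,\delta_\pi[x]\delta_\pi[y]/(\beta[x]\beta[y])$. Your $u$ is exactly $\beta/\sqrt{|\Sigma|!}$, and for this $\Gamma'$ the weights cancel: $u^T\Gamma'u=\frac{1}{|\Sigma|!}\sum_\pi\delta_\pi^T\Gamma_\pi\delta_\pi=\|\Gamma\|$. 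No Cauchy--Schwarz is needed.

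The price is that the denominator no longer follows from the triangle inequality, because the weights vary entry by entry. Instead, apply the Schur product theorem to $I-\Gamma_\pi\circ D^g_i\succeq 0$, which holds by your invariance of $D^g_i$. Pair it with the rank-one PSD matrix $w_\pi w_\pi^T$, where $w_\pi=\delta_\pi/\beta$ entrywise, and sum over $\pi$. The diagonal parts add up to $I$, because $\sum_\pi w_\pi[x]^2=1$. Hence $I-\Gamma'\circ D^g_i\succeq 0$, and by nonnegativity $\|\Gamma'\circ D^g_i\|\le 1$. The weighted $\Gamma'$ is still invariant because $\beta$ is, so your last paragraph then finishes the proof.
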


We extend the definition of distinguisher matrices to the tiles of uniform adversary matrices as follows.
\begin{definition}[Distinguisher matrix of a tile]
    \label{def:tile-distinguisher-matrix}
    Let $\Phi, \Sigma$ be finite alphabets and let $n \in \mathbb{N}$. Let $g$ be a generalized search function with domain $S_g \subseteq \Phi^n$ and range $\Sigma$. Let $\Gamma_g$ be a uniform adversary matrix for $g$ with tile $A$. Then for each $i \in [n]$, let $D^A_i$ denote the \emph{distinguisher matrix of $A$}: the $|\Sigma| \times |\Sigma|$ matrix (indexed by elements of $\Sigma$) with entries
    \begin{align}
        D^A_i [a, b] = \begin{cases}
            1 & \text{if $(\sigma_1, a)_i \neq (\sigma_2, b)_i$ for all $\sigma_1 \neq \sigma_2 \in \Sigma$} \\
            0 & \text{if $(\sigma_1, a)_i = (\sigma_2, b)_i$ for all $\sigma_1 \neq \sigma_2 \in \Sigma$}
        \end{cases}
    \end{align}
    Which is well-defined because, since $g$ is a generalized search function, whether or not $(\sigma_1, a)_i = (\sigma_2, b)_i$ depends only on $a$, $b$, and the constraint $\sigma_1 \neq \sigma_2$.
\end{definition}

The following lemma shows that we can get all information about a uniform adversary matrix relevant to the spectral adversary method from its tile. Its proof follows from being able to express a uniform adversary matrix as the tensor product of its tile with a simple matrix.

\begin{restatable}{lemma}{uniformAdversaryTile}
    \label{lem:uniform-adversary-bound-in-terms-of-tile}
    Let $\Phi, \Sigma$ be finite alphabets and let $m, n \in \mathbb{N}$. Let $g$ be a generalized search function with $m$ variants, domain $S_g \subseteq \Phi^n$, and range $\Sigma$. Let $\Gamma_g$ be a uniform adversary matrix for $g$ with tile $A \in \mathbb{R}^{m \times m}$. Then:
    \begin{align}
        \min_{i \in [n]} \frac{\|\Gamma_g\|}{\|\Gamma_g \circ D_i^g\|} = \min_{i \in [n]} \frac{\|A\|}{\|A \circ D_i^A\|}
    \end{align}
\end{restatable}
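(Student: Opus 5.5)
We prove the lemma by writing $\Gamma_g$ and each $\Gamma_g \circ D_i^g$ as tensor products and then reading off spectral norms. Order the instances of $S_g$ as pairs $(\sigma, j) \in \Sigma \times [m]$. Let $J$ denote the $|\Sigma| \times |\Sigma|$ all-ones matrix and $I$ the identity, and set $E = J - I$. Then $E[\sigma_1, \sigma_2] = 1$ exactly when $\sigma_1 \neq \sigma_2$. The definition of a uniform adversary matrix (\cref{def:uniform-adversary-matrix}) gives
\begin{align}
    \Gamma_g = E \otimes A\,.
\end{align}

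Next we show that each distinguisher matrix restricts, on the support of $\Gamma_g$, to $E \otimes D_i^A$. Fix $i \in [n]$. Because $g$ is a generalized search function, for $\sigma_1 \neq \sigma_2$ the entry $D_i^g[(\sigma_1,a),(\sigma_2,b)]$ depends only on $a$ and $b$, and it equals $D_i^A[a,b]$ by \cref{def:tile-distinguisher-matrix}. On the diagonal blocks $\sigma_1 = \sigma_2$, the entries of $\Gamma_g$ vanish, so whatever $D_i^g$ is there does not matter in the Hadamard product. Hence
\begin{align}
    \Gamma_g \circ D_i^g = (E \otimes A) \circ (E \otimes D_i^A) = (E \circ E) \otimes (A \circ D_i^A) = E \otimes (A \circ D_i^A)\,,
\end{align}
using the mixed-product rule for Hadamard and tensor products and $E \circ E = E$, since $E$ has $0/1$ entries.

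Finally, the spectral norm is multiplicative under tensor products: $\|X \otimes Y\| = \|X\|\,\|Y\|$. Therefore $\|\Gamma_g\| = \|E\|\,\|A\|$ and $\|\Gamma_g \circ D_i^g\| = \|E\|\,\|A \circ D_i^A\|$. Here $\|E\| = |\Sigma| - 1$, which is positive whenever $|\Sigma| \geq 2$. The factor $\|E\|$ cancels in each ratio, so the ratios agree termwise in $i$ and in particular their minima over $i$ coincide.

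Two small points need care. The first is the convention when the denominator $\|A \circ D_i^A\|$ is zero: both ratios are then infinite simultaneously, since the two norms differ by the positive factor $|\Sigma|-1$. The second is the degenerate case $|\Sigma| = 1$, where $\Gamma_g = 0$ and the statement is vacuous or by convention. The only genuinely delicate step is justifying that the diagonal blocks of $D_i^g$ are irrelevant, because $D_i^A$ is only defined from pairs with $\sigma_1 \neq \sigma_2$. This holds because $\Gamma_g$ is zero on those blocks, and everything else is routine linear algebra.
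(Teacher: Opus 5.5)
Your proposal is correct and follows the paper's proof essentially step for step. The paper also writes $\Gamma_g = B \otimes A$ with $B$ the zero-diagonal all-ones matrix, and replaces $D_i^g$ by $B \otimes D_i^A$, which is harmless because $\Gamma_g$ vanishes on the diagonal blocks. It then uses $(B \otimes A) \circ (B \otimes D_i^A) = (B \circ B) \otimes (A \circ D_i^A)$ together with multiplicativity of the spectral norm, and cancels $\|B\|$ to get equality of the ratios for every $i$. Your explicit value $\|E\| = |\Sigma| - 1$ and your remarks on the degenerate cases are harmless additions that the paper leaves implicit.
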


We now give the construction used in our composition theorem.

\begin{definition}[Composition adversary matrix]
    \label{def:composition-adversary-matrix}
    Let $h = f \circ (g_1, \ldots, g_k)$ be a composite function as in \cref{def:composite-function}. Let $\Gamma_f \in \mathbb{R}^{|S_f| \times |S_f|}$ be an adversary matrix for $f$. Suppose that, for each $i \in [k]$, the function $g_i$ is a generalized search function with $m_i$ variants and range $\Sigma$. Let $A_i \in \mathbb{R}^{m_i \times m_i}$ be the tile of a uniform adversary matrix for $g_i$. For each $a, b \in \Sigma$, define the matrix $\Gamma_{g_i}^{(a, b)} \in \mathbb{R}^{m_i \times m_i}$ as:
    \begin{align}
        \Gamma^{(a, b)}_{g_i} &= \begin{cases}
            \|A_i\| I & \text{if $a=b$} \\
            A_i & \text{otherwise}
        \end{cases}
    \end{align}
    Then the \emph{composition adversary matrix} generated by $\Gamma_f$ and the $A_i$ is the matrix $\Gamma_h \in \mathbb{R}^{|S_h| \times |S_h|}$ with entries:
    \begin{align}
        \Gamma_h [x, y] &= \Gamma_f [\tilde{x}, \tilde{y}] \cdot \left(\bigotimes_{i=1}^k \Gamma_{g_i}^{(\tilde{x}_i, \tilde{y}_i)}\right)[x,y]
    \end{align}
\end{definition}

The numerator in the spectral adversary method can then be computed by the following lemma.

\begin{restatable}{lemma}{compositeMatrixNumeratorBound}
    \label{lem:composite-function-matrix-construction}
    Let $h$ be a composite function $f \circ (g_1, \ldots, g_k)$ as in \cref{def:composite-function}. Let $\Gamma_f \in \mathbb{R}^{|S_f| \times |S_f|}$ be a non-negative symmetric matrix. Suppose that, for each $i \in [k]$, the function $g_i$ is a generalized search function with $m_i$ variants. For each $i \in [k]$, let $A_i \in \mathbb{R}^{m_i \times m_i}$ be a non-negative symmetric matrix. 
    Then the composition adversary matrix $\Gamma_h$ of \cref{def:composition-adversary-matrix} generated by $\Gamma_f$ and the $A_i$ satisfies:
    \begin{align}
        \label{eq:gamma-h-norm-equals-gamma-f-times-A}
        \|\Gamma_h\| &= \|\Gamma_f\| \cdot \prod_{i=1}^k \|A_i\|
    \end{align}
\end{restatable}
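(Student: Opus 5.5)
The plan is to identify $S_h$ with a product index set and then prove matching lower and upper bounds on $\|\Gamma_h\|$. The lower bound comes from an explicit product eigenvector. The upper bound comes from a block decomposition in which the nonnegativity of $\Gamma_f$ is used.

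\textbf{Indexing.} By \cref{def:composite-function}, $S_h=\bigcup_{\sigma\in S_f}\prod_i g_i^{-1}(\sigma_i)$. Since each $g_i$ is a generalized search function with $m_i$ variants, every $x\in S_h$ is uniquely written as $x=((\sigma_1,j_1),\ldots,(\sigma_k,j_k))$ with $\sigma=\tilde x\in S_f$ and $j\in\prod_i[m_i]$. Hence $S_h\cong S_f\times\prod_i[m_i]$, and in this indexing
\begin{align}
\Gamma_h[(\sigma,j),(\tau,l)]=\Gamma_f[\sigma,\tau]\cdot\Bigl(\bigotimes_{i=1}^k\Gamma^{(\sigma_i,\tau_i)}_{g_i}\Bigr)[j,l].
\end{align}
So $\Gamma_h$ is a block matrix whose $(\sigma,\tau)$ block is $\Gamma_f[\sigma,\tau]\,B_{\sigma\tau}$, where $B_{\sigma\tau}=\bigotimes_i\Gamma^{(\sigma_i,\tau_i)}_{g_i}$. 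This matrix is symmetric, because $\Gamma_f$ and each $A_i$ are symmetric and $\Gamma^{(a,b)}_{g_i}=\Gamma^{(b,a)}_{g_i}$. It is also nonnegative. The main care needed is in this bookkeeping step; after it, both bounds are short.

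\textbf{Lower bound.} By Perron--Frobenius, choose nonnegative unit vectors $w$ with $\Gamma_f w=\|\Gamma_f\|w$ and $u_i$ with $A_iu_i=\|A_i\|u_i$. Each $u_i$ is also an eigenvector of $\|A_i\|I$ with eigenvalue $\|A_i\|$. Therefore every $\Gamma^{(a,b)}_{g_i}$ satisfies $\Gamma^{(a,b)}_{g_i}u_i=\|A_i\|u_i$, whichever case of the definition applies.

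Set $v[(\sigma,j)]=w[\sigma]\prod_i u_i[j_i]$, that is, $v=w\otimes u_1\otimes\cdots\otimes u_k$. Then
\begin{align}
(\Gamma_h v)[(\sigma,j)]=\sum_\tau\Gamma_f[\sigma,\tau]\,w[\tau]\prod_i\bigl(\Gamma^{(\sigma_i,\tau_i)}_{g_i}u_i\bigr)[j_i]=\|\Gamma_f\|\prod_i\|A_i\|\cdot v[(\sigma,j)].
\end{align}
So $v$ is an eigenvector with eigenvalue $\|\Gamma_f\|\prod_i\|A_i\|$, which gives $\|\Gamma_h\|\ge\|\Gamma_f\|\prod_i\|A_i\|$.

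\textbf{Upper bound.} Since $\Gamma_h$ is symmetric, $\|\Gamma_h\|=\max_{\|z\|=1}|z^T\Gamma_h z|$. Split a unit vector $z$ into blocks $z_\sigma$ for $\sigma\in S_f$. Two facts are used: the spectral norm is multiplicative under tensor products, and $\|\Gamma^{(a,b)}_{g_i}\|=\|A_i\|$ in both cases of the definition. Together they give
\begin{align}
|z^T\Gamma_h z|\le\sum_{\sigma,\tau}\Gamma_f[\sigma,\tau]\,|z_\sigma^TB_{\sigma\tau}z_\tau|\le\prod_i\|A_i\|\sum_{\sigma,\tau}\Gamma_f[\sigma,\tau]\,\|z_\sigma\|\,\|z_\tau\|.
\end{align}
The vector $(\|z_\sigma\|)_\sigma$ has unit norm, and the last sum is a quadratic form in it. This step needs $\Gamma_f\ge0$, so that taking absolute values inside the sum is legitimate. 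The last sum is therefore at most $\|\Gamma_f\|$. This gives the reverse inequality and hence \eqref{eq:gamma-h-norm-equals-gamma-f-times-A}.
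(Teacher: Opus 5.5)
Your proposal is correct and follows the paper's proof essentially step for step. The upper bound uses the same block decomposition $z^T\Gamma_h z=\sum_{\sigma,\tau}\Gamma_f[\sigma,\tau]\,z_\sigma^T B_{\sigma\tau} z_\tau$, with nonnegativity of $\Gamma_f$ reducing it to a quadratic form in $(\|z_\sigma\|)_\sigma$, and the lower bound uses the same product vector $w\otimes u_1\otimes\cdots\otimes u_k$; the only cosmetic differences are that you spell out the identification $S_h\cong S_f\times\prod_i[m_i]$ and check that this vector is a genuine eigenvector, whereas the paper just evaluates its Rayleigh quotient.
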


\begin{proof}[Proof sketch]
    We show \eqref{eq:gamma-h-norm-equals-gamma-f-times-A} by proving inequalities in both directions. To show $\|\Gamma_h\|$ is not too large, we consider an arbitrary unit vector $u \in \mathbb{R}^{S_h}$ and express the product $u^T \Gamma_h u$ in terms of the sub-vectors of $u$ corresponding to each $g_i$. The contributions of sub-vector $i$ are then bounded by $\|A_i\|$ and their combination introduces a factor of $\|\Gamma_f\|$.

    To show that $\|\Gamma_h\|$ is not too small, we explicitly construct an eigenvector and prove it has the correct eigenvalue. It is constructed from principal eigenvectors for $\Gamma_f$ and the $A_i$ analogously to the way $\Gamma_h$ is constructed.
\end{proof}

We will use the following lemma to handle the denominator.

\begin{restatable}{lemma}{compositeMatrixDenominatorBound}
    \label{lem:composite-matrix-respects-D}
    Let $h$ be a composite function $f \circ (g_1, \ldots, g_k)$ as in \cref{def:composite-function} with domain $S_h \subseteq \Phi^n$ for some $n \in \mathbb{N}$ and finite alphabet $\Phi$. Suppose that, for all $j \in [k]$, the function $g_j$ is a generalized search function with range $\Sigma$. 
    Let $\Gamma_f$ be an adversary matrix for $f$. For each $j \in [k]$, let $A_j$ be the tile of some uniform adversary matrix for $g_j$. 
    Let $\Gamma_h$ be the composition adversary matrix generated by $\Gamma_f$ and the $A_j$ as in \cref{def:composition-adversary-matrix}. Let $i \in [n]$, and let $p, q \in \mathbb{N}$ be such that the $i$th character of an input to $h$ is the $q$th character of the corresponding input to $g_p$. Then:
    \begin{align}
        \label{eq:composite-matrix-after-D-norm}
        \|\Gamma_h \circ D_i^h\| &= \|\Gamma_f \circ D^f_p\| \cdot \|A_p \circ D^{A_p}_q\| \cdot \prod_{d \neq p} \|A_d\|
    \end{align}
\end{restatable}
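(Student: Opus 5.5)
The plan is to show that $\Gamma_h \circ D_i^h$ is itself a composition adversary matrix in the sense of \cref{def:composition-adversary-matrix}, with $\Gamma_f$ replaced by $\Gamma_f \circ D^f_p$ and $A_p$ replaced by $A_p \circ D^{A_p}_q$. Once this is done, \cref{lem:composite-function-matrix-construction} gives \eqref{eq:composite-matrix-after-D-norm} directly. That lemma only needs nonnegative symmetric matrices, not adversary matrices, so the modified inputs are admissible.

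The main step is an entrywise identity. Take $x, y \in S_h$ and write $x^p = (\tilde x_p, a)$ and $y^p = (\tilde y_p, b)$ as instances of $g_p$. The $(x,y)$ entry of $\Gamma_h$ contains the factor $\Gamma_{g_p}^{(\tilde x_p, \tilde y_p)}[a,b]$. Now $D_i^h[x,y]=1$ exactly when $(x^p)_q \neq (y^p)_q$. I split into two cases.

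\emph{Case $\tilde x_p \neq \tilde y_p$.} Here $D^f_p[\tilde x,\tilde y]=1$. By the generalized search property, $(x^p)_q \neq (y^p)_q$ holds iff $D^{A_p}_q[a,b]=1$. The factor is $A_p[a,b]$, so the entry of $\Gamma_h\circ D_i^h$ equals
\[
(\Gamma_f\circ D^f_p)[\tilde x,\tilde y]\cdot (A_p\circ D^{A_p}_q)[a,b]\cdot\prod_{d\neq p}\Gamma_{g_d}^{(\tilde x_d,\tilde y_d)}[x^d,y^d].
\]

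\emph{Case $\tilde x_p = \tilde y_p$.} Here $D^f_p[\tilde x,\tilde y]=0$. The factor is $\|A_p\|I[a,b]$, which is nonzero only when $a=b$. In that case $x^p=y^p$, so $D_i^h[x,y]=0$. Both sides therefore vanish.

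The case analysis shows that $\Gamma_h\circ D_i^h$ equals the composition matrix built from $\Gamma_f\circ D^f_p$ and the tiles $A_1,\dots,A_p\circ D^{A_p}_q,\dots,A_k$, with one caveat. In the $\tilde x_p=\tilde y_p$ block, the definition would put $\|A_p\circ D^{A_p}_q\|I$, not $\|A_p\|I$, but these blocks are multiplied by $(\Gamma_f\circ D^f_p)[\tilde x,\tilde y]=0$. So the diagonal choice is irrelevant, and the identification holds exactly.

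The main obstacle is this diagonal-block convention. I need to check that the proof of \cref{lem:composite-function-matrix-construction} really produces $\|\Gamma'_f\|\prod\|A'_d\|$ when the "$a=b$" block for coordinate $p$ uses $\|A'_p\|I$. That is what the definition prescribes, and the upper and lower bound arguments in its sketch depend only on that form. Since those blocks are zeroed out anyway, I would apply the lemma to the matrix defined with $\|A_p\circ D^{A_p}_q\|I$ and note it coincides entrywise with $\Gamma_h\circ D_i^h$.

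A second point to confirm is that $D^{A_p}_q$ is defined with the right index set, namely variants $[m_p]$ rather than $\Sigma$. This ensures the Hadamard product $A_p\circ D^{A_p}_q$ is the intended $m_p\times m_p$ nonnegative symmetric matrix. Symmetry follows because the relation "$(\sigma_1,a)_q\neq(\sigma_2,b)_q$" is symmetric.
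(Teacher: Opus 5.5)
Your proposal is correct and is essentially the paper's proof. The paper likewise defines the modified blocks $\Gamma^{(a,b)}_{(g_p,q)}$, with $\|A_p\circ D^{A_p}_q\|I$ on the diagonal; these entries are harmless because they are multiplied by $D^f_p[\tilde x,\tilde y]=0$. It then verifies the entrywise identity with $\Gamma_h\circ D^h_i$ and applies \cref{lem:composite-function-matrix-construction}. Your two-case split on whether $\tilde x_p=\tilde y_p$ just merges the paper's four cases, and your remark that $D^{A_p}_q$ should be indexed by the variants $[m_p]$ rather than by $\Sigma$ correctly fixes a slip in the paper's definition.
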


\begin{proof}[Proof sketch]
    Our desired result has a similar form to \cref{lem:composite-function-matrix-construction}. We show that it is in fact exactly the same form by comparing $\Gamma_h \circ D^h_i$ element-by-element to the composite adversary matrix generated by $\Gamma_f \circ D^f_p$, the $A_d$ for $d \neq p$, and $A_p \circ D_p^{A_p}$. This boils down to $D^h_i$ zeroing out exactly the same entries on the left-hand side as $D^f_p$ and $D^{A_p}_q$ do on the right-hand side.
\end{proof}

We now prove our composition theorem.

\generalizedSearchComposition*

\begin{proof}
    Let $\Gamma_f$ be an optimal adversary matrix for $f$ and, for each $i \in [k]$, let $\Gamma_{g_i}$ be an optimal uniform adversary matrix for $g_i$ with tile $A_i$. Let $\Gamma_h$ be the composition adversary matrix generated by $\Gamma_f$ and the $A_i$ as in \cref{def:composition-adversary-matrix}. Consider an arbitrary $j \in [n]$, where $n$ is the length of inputs to $h$, and let $p, q \in \mathbb{N}$ be such that the $j$th character of an input to $h$ is the $q$th character of the part passed to $g_p$. Then by Lemmas \ref{lem:composite-function-matrix-construction} and \ref{lem:composite-matrix-respects-D}, we have:
    \begin{align}
        \frac{\|\Gamma_h\|}{\|\Gamma_h \circ D^h_j\|} &= \frac{\|\Gamma_f\| \cdot \prod_{d=1}^k \|A_d\|}{\|\Gamma_f \circ D_p^f\| \cdot \|A_p \circ D_q^{A_p}\| \cdot \prod_{d \neq p} \|A_d\|} \\
        &= \frac{\|\Gamma_f\|}{\|\Gamma_f \circ D_p^f\|} \cdot \frac{\|A_p\|}{\|A_p \circ D_q^{A_p}\|}
    \end{align}
    Therefore, applying \cref{lem:uniform-adversary-bound-in-terms-of-tile}:
    \begin{align}
        \min_{i \in [n]} \frac{\|\Gamma_h\|}{\|\Gamma_h \circ D^h_i\|} &= \min_{p \in [k]} \left(\frac{\|\Gamma_f\|}{\|\Gamma_f \circ D_p^f\|} \cdot \min_{q \in [m_p]} \frac{\|A_p\|}{\|A_p \circ D_q^{A_p}\|}\right) \\
        &\geq \left(\min_{p \in [k]} \frac{\|\Gamma_f\|}{\|\Gamma_f \circ D_p^f\|}\right) \cdot \left(\min_{i \in [k]} \min_{q \in [m_i]} \frac{\|\Gamma_{g_i}\|}{\|\Gamma_{g_i} \circ D_q^{g_i}\|}\right) \\
        &= SA(f) \cdot \min_{i \in [k]} SA^U(g_i)
    \end{align}
    So using $\Gamma_h$ as an adversary matrix for $h$, we have:
    \begin{align}
        SA(h) \geq SA(f) \cdot \min_{i \in [k]} SA^U(g_i)
    \end{align}
    By \cref{lem:uniform-adversary-optimal} we have $SA^U(g_i) = SA(g_i)$ for all $i \in [k]$, which completes the proof.
\end{proof}

\subsection{Ordered Search}
\label{sec:ordered-search}

In this section, we define the specific version of nested ordered search we will use. We will also apply \cref{thm:composition-with-generalized-search} to construct adversary matrices for it.

\begin{definition}[Ordered search]
    Let $m \in \mathbb{N}$. Then ordered search on $m$ elements, denoted $OS_m$, is the function $OS_m : \{\uparrow^{k-1} * \downarrow^{m-k} \mid 1 \leq k \leq m\} \to [m]$ defined by:
    \begin{align}
        \uparrow^{k-1} * \downarrow^{m-k} \mapsto k
    \end{align}
    That is, given a string with a single $*$ where all characters before the $*$ are $\uparrow$ and all characters after the $*$ are $\downarrow$, return the location of the $*$.
\end{definition}

\begin{definition}[Hidden-symbol ordered search]
    Let $m \in \mathbb{N}$. Then hidden-symbol ordered search on $m$ elements, denoted $HSOS_m$, is the function $HSOS_m : \{\rightarrow^{k-1} x \leftarrow^{m-k} \mid 1 \leq k \leq m, x \in \{\uparrow, \downarrow, *\}\} \to \{\uparrow, \downarrow, *\}$ defined by:
    \begin{align}
        \rightarrow^{k-1} x \leftarrow^{m-k} \mapsto x
    \end{align}
    That is, given a string with a single hidden symbol $x \in \{\uparrow, \downarrow, *\}$ where all characters before the hidden symbol are $\rightarrow$ and all characters after the hidden symbol are $\leftarrow$, return the value of the hidden symbol.
\end{definition}

Note that $HSOS_m$ is a generalized search function as defined in \cref{def:generalized-search-function}, under the labeling where $(\sigma, j)$ refers to $\rightarrow^{j-1} \sigma \leftarrow^{m-j}$.

We will specifically use the composition of hidden-symbol ordered search inside ordered search.

\begin{definition}[Nested ordered search]
    Let $a, b \in \mathbb{N}$. The nested ordered search problem, denoted $NOS_{a, b}$, is the composition $OS_a \circ (HSOS_b, \ldots, HSOS_b)$.
\end{definition}

An example instance of $NOS_{4, 5}$ is the string:
\[
\rightarrow \rightarrow \uparrow \leftarrow \leftarrow,
\rightarrow \rightarrow \rightarrow * \leftarrow,
\rightarrow \downarrow \leftarrow \leftarrow \leftarrow,
\downarrow \leftarrow \leftarrow \leftarrow \leftarrow
\]
For readability, we have inserted commas between the $HSOS_5$ instances. The $*$ is the output of the second instance of $HSOS_5$, so the answer would be $2$.

In order to apply \cref{thm:composition-with-generalized-search} to lower-bound $SA(NOS_{a, b})$, we need to lower-bound $SA(OS_a)$ and $SA(HSOS_b)$. Quantum lower bounds for ordered search already exist, but we need spectral adversary matrices for our two specific variants. Our constructions adapt the methods of \cite{hoyer2002ordered}. Broadly, our adversary matrices use a weight of $1/n$ for instances whose answer (in the case of $OS_a$) or hidden-symbol location (in the case of $HSOS_b$) differ by $n$. The norms of the resulting matrices can be lower-bounded by the harmonic series and upper-bounded by properties of the Hilbert matrix. Their complete proofs are in \cref{app:ordered-search-deferred-proofs}.

\begin{restatable}{lemma}{hsosLowerBound}
    \label{lem:hsos-lower-bound}
    For all $m \in \mathbb{N}$, we have $SA(HSOS_m) \in \Omega(\log m)$.
\end{restatable}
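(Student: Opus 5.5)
We will use a uniform adversary matrix and work entirely with its tile. By \cref{lem:uniform-adversary-bound-in-terms-of-tile}, a uniform adversary matrix $\Gamma$ for $HSOS_m$ with tile $A \in \mathbb{R}^{m \times m}$ gives the bound $\min_q \|A\| / \|A \circ D^A_q\|$. So it suffices to exhibit a nonnegative symmetric $A$ with $\|A\| = \Omega(\log m)$ and $\|A \circ D^A_q\| = O(1)$ for every position $q$. Under the labeling $(\sigma,j) \leftrightarrow \rightarrow^{j-1}\sigma\leftarrow^{m-j}$, we first compute $D^A_q$ for two instances $(\sigma_1,a)$ and $(\sigma_2,b)$ with $\sigma_1\neq\sigma_2$:
\begin{itemize}
\item If $a=b$, the two instances differ exactly at position $q=a$.
\item If $a<b$, they differ at every $q$ with $a \le q \le b$: at $q=a$ the symbol $\sigma_1$ meets $\rightarrow$, at $q=b$ the symbol $\leftarrow$ meets $\sigma_2$, and in between $\leftarrow$ meets $\rightarrow$.
\end{itemize}
Hence $D^A_q[a,b]=1$ exactly when $q$ lies between $a$ and $b$ inclusive.

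Following \cite{hoyer2002ordered}, the plan is to take $A[a,b] = 1/(|a-b|+1)$, which includes the diagonal weight $1$ that is allowed here. The choice of $|a-b|+1$ rather than $|a-b|$ is made so the diagonal stays finite.

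For the numerator, we test $A$ on the all-ones vector $u=\mathbf{1}/\sqrt m$. Each row sum is at least $\sum_{d=0}^{\lfloor m/2\rfloor} 1/(d+1) = \Omega(\log m)$, so $u^T A u = \Omega(\log m)$ and therefore $\|A\| = \Omega(\log m)$.

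For the denominator, fix $q$. The matrix $A\circ D^A_q$ is supported on pairs with $a \le q \le b$ or $b \le q \le a$, with entries $1/(|a-b|+1)$. Reindex by $i=q-a\ge0$ and $j=b-q\ge0$, so the entry becomes $1/(i+j+1)$. The matrix is then block-structured: the off-diagonal blocks are submatrices of the Hilbert matrix $H[i,j]=1/(i+j+1)$, which has spectral norm at most $\pi$ by Hilbert's inequality. The remaining entries sit on the row and column for index $q$ itself, namely the $a=q$ or $b=q$ cases. These are covered either by the same Hilbert block, since index $0$ is included in both ranges, or by a rank-one correction. We would bound that correction by treating the symmetric matrix as $\begin{pmatrix}0&H'\\H'^T&0\end{pmatrix}$ plus overlap at index $q$, whose norm is at most $\|H'\|+1 \le \pi+1$.

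The main obstacle is handling this overlap cleanly. The set $\{a\le q\}$ and the set $\{b \ge q\}$ share the index $q$, so the matrix is not exactly bipartite. We would split $A\circ D^A_q$ into the bipartite part, with $a<q\le b$ or its symmetric counterpart, plus the row and column through $q$. The row and column have entries $1/(|a-q|+1)$, whose $\ell_2$ norm is $O(1)$, so that part contributes $O(1)$ to the spectral norm. Combining the two parts gives $\|A\circ D^A_q\| = O(1)$ uniformly in $q$, and the ratio is $\Omega(\log m)$, which yields $SA(HSOS_m) = \Omega(\log m)$.
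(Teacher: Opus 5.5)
Your proposal is correct and is essentially the paper's proof. It uses the same tile $1/(|a-b|+1)$, the same distinguisher computation, the same all-ones lower bound on $\|A\|$, and the same Hilbert-inequality upper bound on $\|A\circ D^A_q\|$. The only difference is how the overlap at index $q$ is handled: the paper double-counts against a nonnegative principal eigenvector to get $2\pi$, while you split off the row and column through $q$.

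One minor slip: that split-off piece is an arrow matrix, rank two in general, and its norm can reach about $1.45$ rather than at most $1$. So your bound is $\pi+O(1)$, not $\pi+1$, which still suffices.
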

\begin{restatable}{lemma}{osLowerBound}
    \label{lem:os-lower-bound}
    For all $m \in \mathbb{N}$, we have $SA(OS_m) \in \Omega(\log m)$.
\end{restatable}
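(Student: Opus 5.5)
We prove \cref{lem:os-lower-bound} by exhibiting an explicit adversary matrix for $OS_m$, following \cite{hoyer2002ordered}. Index instances by their answer $k \in [m]$, writing $x^{(k)} = \uparrow^{k-1} * \downarrow^{m-k}$, and set
\begin{align}
    \Gamma[x^{(k)}, x^{(\ell)}] = \begin{cases} \frac{1}{|k-\ell|} & k \neq \ell \\ 0 & k = \ell \end{cases}
\end{align}
This matrix is nonnegative and symmetric, and it vanishes whenever the answers agree. It is therefore a valid adversary matrix, and it suffices to show $\|\Gamma\| = \Omega(\log m)$ and $\max_i \|\Gamma \circ D_i\| = O(1)$.

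For the numerator, test $\Gamma$ against the all-ones vector normalized to $\frac{1}{\sqrt m}\mathbf{1}$. This gives
\begin{align}
    \|\Gamma\| \geq \frac{1}{m}\sum_{k \neq \ell} \frac{1}{|k-\ell|} = \frac{2}{m}\sum_{d=1}^{m-1}\frac{m-d}{d},
\end{align}
which is $2H_{m-1} - 2 + o(1) = \Omega(\log m)$ by the harmonic series.

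For the denominator, fix a position $i$. The characters $x^{(k)}_i$ and $x^{(\ell)}_i$ differ exactly when $i$ lies between $k$ and $\ell$ inclusive of at least one endpoint. More precisely, with $k<\ell$, position $i$ distinguishes the pair iff $k \leq i \leq \ell$: if $i<k$ both characters are $\downarrow$, and if $i>\ell$ both are $\uparrow$.

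So $\Gamma \circ D_i$ is supported on pairs $(k,\ell)$ with $k \leq i \leq \ell$ and $k \neq \ell$. This support splits into two pieces:
\begin{itemize}
    \item the pairs involving $i$ itself, namely row and column $i$, with entries $1/|k-i|$;
    \item the ``cross'' block with $k<i<\ell$, with entries $1/(\ell-k)$.
\end{itemize}
By symmetry and the triangle inequality, $\|\Gamma\circ D_i\|$ is at most the norm of the first piece plus twice the norm of the off-diagonal block of the second.

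The first piece is a star-shaped matrix whose row-$i$ vector $v$ has entries $1/|k-i|$. Its norm is $\|v\| \leq \sqrt{2\sum_d 1/d^2} = O(1)$.

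The second piece is the main obstacle. Reindex by $k = i-s$ and $\ell = i+t$ with $s,t\geq 1$, so the block becomes the matrix $B[s,t] = 1/(s+t)$. This is a submatrix of the Hilbert matrix $1/(s+t-1)$, shifted by one. By Hilbert's inequality the infinite Hilbert matrix has norm at most $\pi$. Since all entries are nonnegative and entrywise dominated by the Hilbert entries, $\|B\| \leq \pi$. (The comparison is valid because the norm of a nonnegative matrix is monotone in its entries, and a submatrix has norm at most that of the full matrix.)

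Combining the two pieces gives $\max_i\|\Gamma\circ D_i\| = O(1)$. Hence $SA(OS_m) \geq \Omega(\log m)/O(1) = \Omega(\log m)$. For small $m$ the statement holds trivially, since $SA \geq 1$ whenever $m \geq 2$ by choosing any nonzero $\Gamma$, and the case $m=1$ is vacuous up to constants.
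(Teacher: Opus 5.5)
Your proof is correct and follows the paper's overall strategy: harmonically decaying weights, the all-ones test vector for the numerator, and Hilbert's inequality for the denominator. It differs from the paper in two details.

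First, the paper weights pairs by $1/(|k-\ell|+1)$ rather than $1/|k-\ell|$, i.e.\ it takes $\Gamma_{OS_m} = A_m - I$ with $A_m$ from \cref{lem:hilbert-matrix-bounds}. This costs a $-1$ in the numerator. In exchange, the denominator follows from the entrywise domination $\Gamma_{OS_m}\circ D^{OS_m}_i \le A_m\circ D^{A_m}_i$. That dominating matrix is the same one used as the tile for $HSOS_m$ in \cref{lem:hsos-lower-bound}, where a nonzero diagonal is genuinely needed: two instances with the same hidden-symbol position but different symbols must receive weight.

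Second, the paper does not split off row and column $i$. It takes a nonnegative principal eigenvector of $A_m\circ D^{A_m}_i$ and bounds the two overlapping off-diagonal quadrants, each a reflected top-left submatrix of $H_m$, obtaining $2\pi$. Your star-plus-cross-block decomposition via the triangle inequality is more elementary and self-contained for $OS_m$. The factor of two on the cross block is unnecessary, though: its index sets $\{k<i\}$ and $\{\ell>i\}$ are disjoint, so the symmetric matrix with off-diagonal blocks $B$ and $B^T$ has norm exactly $\|B\|$. The paper's choice, by contrast, buys a single technical lemma reused for both ordered-search variants.

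One small slip: with $x^{(k)} = \uparrow^{k-1} * \downarrow^{m-k}$ and $k<\ell$, for $i<k$ both characters are $\uparrow$ and for $i>\ell$ both are $\downarrow$, so your arrows are swapped. Your conclusion that position $i$ distinguishes the pair iff $k\le i\le \ell$ is unaffected.
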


We can now combine these lower bounds with our composition theorem.

\begin{lemma}
    \label{lem:nos-lower-bound}
    For all $a, b \in \mathbb{N}$, we have $SA(NOS_{a, b}) \in \Omega(\log(a) \log(b))$.
\end{lemma}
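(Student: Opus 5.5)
The plan is to apply \cref{thm:composition-with-generalized-search} directly with $f = OS_a$ and $g_1 = \cdots = g_a = HSOS_b$, so the main work is checking that its hypotheses hold. By definition, $NOS_{a,b} = OS_a \circ (HSOS_b, \ldots, HSOS_b)$ is a composite function as in \cref{def:composite-function}. Its parameters are $k = a$, $\Phi = \{\rightarrow, \leftarrow, \uparrow, \downarrow, *\}$, $\Sigma = \{\uparrow, \downarrow, *\}$ and $\Psi = [a]$. The inner functions must be generalized search functions in the sense of \cref{def:generalized-search-function}, which I will confirm using the labeling $(\sigma, j) \leftrightarrow\ \rightarrow^{j-1}\sigma\leftarrow^{b-j}$ noted after the definition of $HSOS_m$.

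To confirm this, I check the two conditions. First, each $\sigma \in \Sigma$ has exactly $m = b$ preimages, one for each position $j$, and $g((\sigma,j)) = \sigma$ holds by construction. Second, for a position $i$ and instances $(\sigma_1,j_1)$, $(\sigma_2,j_2)$, I split into cases.
\begin{itemize}
\item If $i \notin \{j_1, j_2\}$, the two characters are arrows determined by the sign of $i - j_1$ and of $i - j_2$, so whether they agree depends only on $j_1, j_2$.
\item If $i = j_1 = j_2$, the characters agree iff $\sigma_1 = \sigma_2$.
\item If $i$ equals exactly one of $j_1, j_2$, one character is in $\Sigma$ and the other is an arrow, so they always differ.
\end{itemize}
In every case, equality depends only on $j_1, j_2$ and on whether $\sigma_1 = \sigma_2$, as required.

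With the hypotheses verified, \cref{thm:composition-with-generalized-search} gives
\begin{align}
SA(NOS_{a,b}) \geq SA(OS_a)\cdot \min_{i\in[a]} SA(HSOS_b) = SA(OS_a)\cdot SA(HSOS_b).
\end{align}
Plugging in \cref{lem:os-lower-bound} and \cref{lem:hsos-lower-bound} gives $\Omega(\log a)\cdot\Omega(\log b) = \Omega(\log(a)\log(b))$. The implied constants come from those two lemmas, so they are uniform in $a$ and $b$.

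There is no real obstacle here. The only subtle point is that \cref{def:composite-function} requires the inner ranges to match the outer alphabet. The inner functions output symbols in $\{\uparrow,\downarrow,*\}$, and $OS_a$ takes strings over exactly this alphabet. The domain $S_h$ consists of those concatenations whose induced string $\tilde{x}$ lies in the domain of $OS_a$, which is how $NOS_{a,b}$ is defined. The degenerate cases $a = 1$ or $b = 1$ make the bound trivially $0$, so nothing needs to be checked there.
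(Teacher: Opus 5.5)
Your proposal is correct and matches the paper's proof: apply the composition theorem with $OS_a$ as the outer function and $HSOS_b$ as each inner function, then plug in the $\Omega(\log a)$ and $\Omega(\log b)$ spectral adversary bounds for ordered search and hidden-symbol ordered search. Your explicit case check that $HSOS_b$ is a generalized search function under the labeling $(\sigma, j)$ is a welcome addition, since the paper only asserts this in passing.
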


\begin{proof}
    Applying \cref{thm:composition-with-generalized-search} with the lower bounds from \cref{lem:hsos-lower-bound} and \cref{lem:os-lower-bound}:
    \begin{align}
        SA(NOS_{a, b}) &\geq SA(OS_a) \cdot SA(HSOS_b) \in \Omega(\log(a) \log(b))
    \end{align}
\end{proof}

\subsection{Reduction to $TARSKI(n, 2)$}
\label{sec:reduction-to-tarski}

We will use a subset of the herringbone functions from \cite{etessami2019tarski}. Each herringbone function has a unique fixed point somewhere on its \emph{spine}, a monotone sequence of connected vertices from the lowest corner of the grid to the highest. On the spine, the function flows towards the fixed point. Off the spine, the function flows diagonally towards the spine. An example can be seen in \cref{fig:herringbone-example}.

\begin{figure}[h]
\begin{center}
    \includegraphics[width=0.4\textwidth]{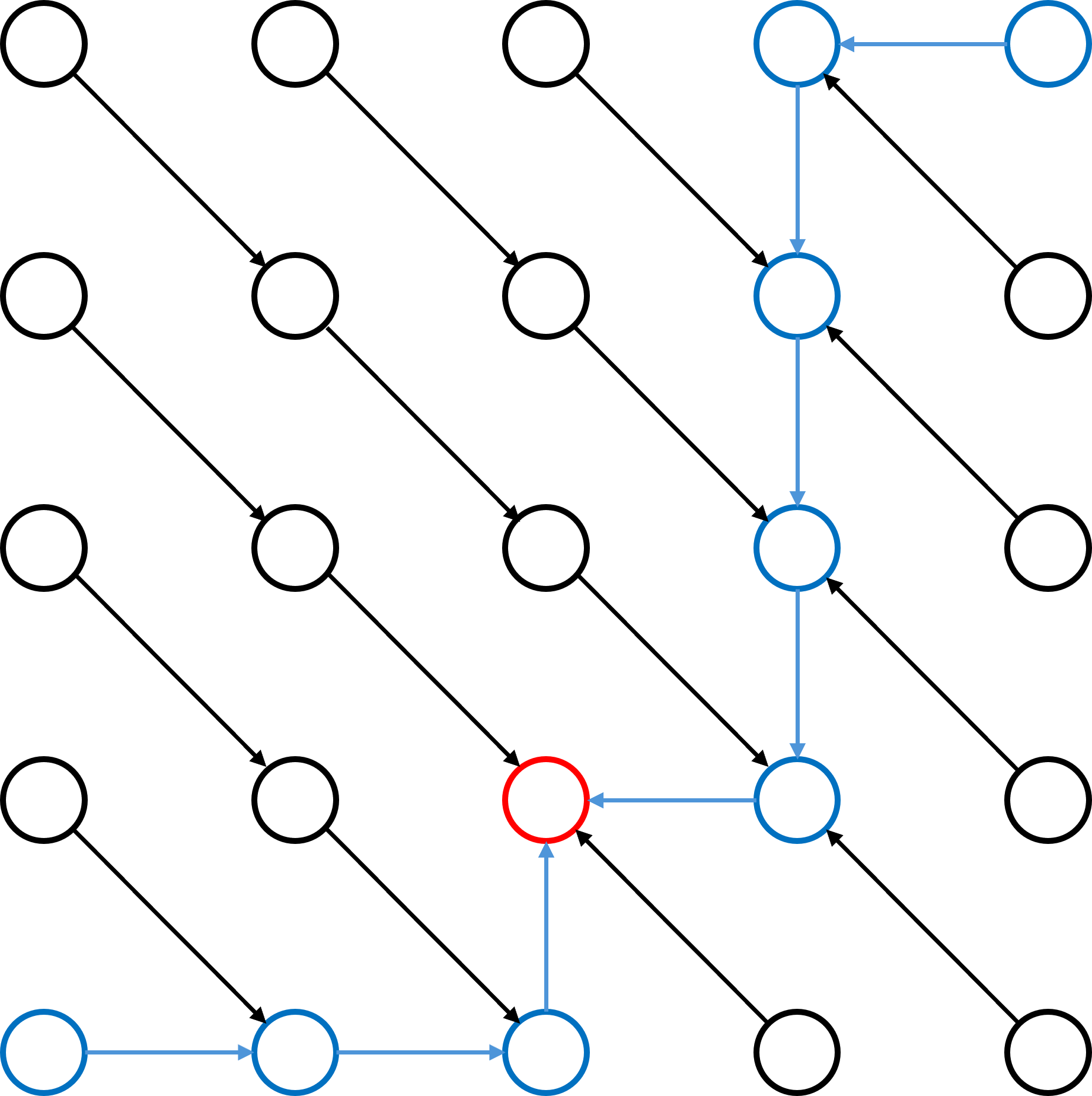}
\end{center}
    \caption{An example herringbone function for $TARSKI(5, 2)$. The blue nodes are the spine, which runs through the red fixed point. Blue arrows flow along the spine while black arrows flow diagonally towards the spine.}
    \label{fig:herringbone-example}
\end{figure}

There is a natural $O((\log n)^2)$ algorithm for finding the fixed point of a herringbone function: run binary search on the spine, where each spine vertex is found by running binary search perpendicular to the spine. This nested binary search algorithm strongly suggests a connection to the nested ordered search problem, and is the basis of the classical lower bound in \cite{etessami2019tarski}. The rest of this section focuses on making that connection tight enough for a direct reduction. For now, we give the formal definition of a herringbone function.

\begin{definition}
    \label{def:herringbone}
    Let $n \in \mathbb{N}$. Let $\vec{s} = \{s^i\}_{i=1}^{2n-1}$ be a monotone connected path in the $n \times n$ grid from $(1, 1)$ to $(n, n)$. Let $j \in [2n-1]$. The herringbone function with spine $\vec{s}$ and fixed point index $j$ is the function $h^{\vec{s}, j} : \mathcal{L}^2_n \to \mathcal{L}^2_n$ defined by:
    \begin{align}
        h^{\vec{s}, j}(v) &= \begin{cases}
            v & \text{if $v = s^j$} \\
            s^{i+1} & \text{if $v = s^i$ for some $i<j$}\\
            s^{i-1} & \text{if $v = s^i$ for some $i>j$}\\
            v + (1, -1) & \text{if $v$ is above the spine}\\
            v + (-1, 1) & \text{if $v$ is below the spine}
        \end{cases}
    \end{align}
    Where by ``above the spine" we mean $v \geq s^i$ for some $s^i$ with the same $x$-coordinate as $v$, and by ``below the spine" we mean $v \leq s^i$ for some $s^i$ with the same $x$-coordinate as $v$.
\end{definition}

We will use herringbone functions with spines of a specific form to prove the connection between $TARSKI(n, 2)$ and nested ordered search. These spines will be made of several straight-line segments as in the following definition.
\begin{definition}
    \label{def:straight-line-segment}
    Let $n \in \mathbb{N}$ and let $u, v \in [n]^2$ be such that $u \leq v$. For $\alpha \in \mathbb{R}$, let $\lfloor \alpha \rceil$ denote the rounding of $\alpha$ to the nearest integer, breaking ties towards the larger result. For $c \in [2n]$, let $L(u, v, c) \in [n]^2$ be the vertex with coordinates:
    \begin{align}
        (L(u, v, c))_1 &= \left\lfloor u_1\frac{v_1 + v_2-c}{v_1 + v_2 - u_1 - u_2} + v_1\frac{c-u_1-u_2}{v_1+v_2-u_1-u_2}\right\rceil \\
        (L(u, v, c))_2 &= c - (L(u, v, c))_1
    \end{align}
    The \emph{grid line} from $u$ to $v$ is the sequence of vertices $L(u, v, c)$ for $c \in \{u_1 + u_2, \ldots, v_1 + v_2\}$.
\end{definition}
We verify that \cref{def:straight-line-segment} gives valid spine segments.
\begin{restatable}{lemma}{straightLineValidSpine}
    \label{lem:straight-line-valid-spine}
    Let $n \in \mathbb{N}$ and let $u, v \in [n]^2$ be such that $u \leq v$. Then the grid line from $u$ to $v$ is a connected monotone path from $u$ to $v$.
\end{restatable}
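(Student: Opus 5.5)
Write $S = u_1+u_2$, $T = v_1+v_2$ and $d = T - S$. The degenerate case $d = 0$ forces $u = v$, and the sequence is then the single vertex $u$. Assume from now on that $d>0$. For $c \in \{S,\ldots,T\}$ define the real interpolant
\begin{align}
    \ell(c) = u_1\frac{T-c}{d} + v_1\frac{c-S}{d},
\end{align}
so that $(L(u,v,c))_1 = \lfloor \ell(c) \rceil$. The plan has three steps: the endpoints are correct, every vertex lies in the grid, and consecutive vertices differ by exactly one unit step.

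\textbf{Endpoints.} Substituting $c = S$ gives $\ell(S) = u_1$, which is an integer, so rounding does nothing. Hence $L(u,v,S)_1 = u_1$ and $L(u,v,S)_2 = S - u_1 = u_2$. The same substitution at $c=T$ gives $L(u,v,T) = v$.

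\textbf{Vertices lie in the box between $u$ and $v$.} The map $c \mapsto \ell(c)$ is affine with slope $(v_1-u_1)/d$. Since $v_1 - u_1 \ge 0$ and $v_2-u_2 \ge 0$ sum to $d$, this slope lies in $[0,1]$. Therefore $\ell(c) \in [u_1,v_1]$, and because both endpoints are integers, rounding keeps the first coordinate in $[u_1,v_1]$.

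For the second coordinate, write $c - \ell(c) = u_2\frac{T-c}{d} + v_2\frac{c-S}{d}$. This shows the second coordinate equals $c - \lfloor \ell(c)\rceil$, which is also an integer rounding of a value in $[u_2,v_2]$. The rounding rule is $\lfloor c - x\rceil$ versus $c - \lfloor x \rceil$, and these differ only in how ties break. Since $c$ is an integer, $c - \lfloor x\rceil$ is $c - x$ rounded to the nearest integer with ties broken downward, and this still lies in $[u_2,v_2]$ because both endpoints are integers. So every vertex lies in the box from $u$ to $v$, and hence in $[n]^2$.

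\textbf{Unit steps.} This step carries the real content. Consecutive coordinate sums differ by exactly $1$, so it suffices to show that the first coordinate increases by either $0$ or $1$ when $c$ goes to $c+1$; the second coordinate then increases by $1$ or $0$ respectively. The step in $\ell$ is $\delta = (v_1-u_1)/d \in [0,1]$.

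The map $x \mapsto \lfloor x\rceil = \lfloor x + 1/2\rfloor$ is nondecreasing, so the first coordinate never decreases. It also satisfies $\lfloor x+\delta\rceil \le \lfloor x + 1\rceil = \lfloor x\rceil + 1$, so the first coordinate increases by at most $1$. This rules out a jump of $2$, which is the only point that needs care, and is where the consistent tie-breaking rule (always toward the larger result) is used.

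Monotonicity of the path follows from both coordinates being nondecreasing. Connectivity follows from each step being a single unit move in exactly one coordinate. Together with the endpoint computation, this shows the grid line is a connected monotone path from $u$ to $v$.
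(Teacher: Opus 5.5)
Your proof is correct and follows the paper's argument. Both check the endpoints by direct substitution, note that the interpolant increases by $(v_1-u_1)/d \in [0,1]$, and use a monotone rounding rule with $\lfloor x+1\rceil = \lfloor x\rceil + 1$ to get first-coordinate steps in $\{0,1\}$, which the fixed coordinate sum converts into unit steps $(1,0)$ or $(0,1)$. Your separate handling of $d=0$ and the explicit bound $\lfloor x+\delta\rceil \le \lfloor x\rceil+1$ fill in details the paper glosses. Your box-containment step is valid but redundant, since nondecreasing unit steps from $u$ to $v$ already keep every vertex between $u$ and $v$.
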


The intuition behind \cref{def:straight-line-segment} is that grid lines are discrete analogues of Euclidean line segments. Indeed, another way of constructing the grid line from $u$ to $v$ is by rounding points on the line segment between $u$ and $v$ to points in $[n]^2$. The upshot is that moving the endpoints of a grid line continuously moves the points in between. This idea is formally captured by the following two technical lemmas.

Loosely, the first of them states that moving an endpoint cannot cause the rest of the grid line to move in the opposite direction. Its proof follows directly from \cref{def:straight-line-segment}.
\begin{restatable}{lemma}{straightLineMonotone}
    \label{lem:straight-line-monotone}
    Let $n \in \mathbb{N}$. Let $b, c, d \in \mathbb{N}$. Then among $u, v \in [n]^2$ such that $v \geq u$, $u_1+u_2=b$, and $v_1 + v_2 = d$, the $x$-coordinate of $L(u, v, c)$ is:
    \begin{itemize}
        \item Monotone increasing in the $x$-coordinate of $u$ if $c \leq d$, and
        \item Monotone increasing in the $x$-coordinate of $v$ if $c \geq b$.
    \end{itemize}
\end{restatable}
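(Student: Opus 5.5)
We read off the monotonicity from the formula in \cref{def:straight-line-segment}. Fix $b=u_1+u_2$, $d=v_1+v_2$ and $c$, and write $D = d-b$. When $D=0$ we have $u=v$, and the grid line is a single vertex, so we assume $D>0$ throughout. For such $u,v$ the quantity inside the rounding is
\begin{align}
    \phi(u_1, v_1) = u_1 \frac{d-c}{D} + v_1 \frac{c-b}{D}.
\end{align}
The denominator $D$ is fixed once $b$ and $d$ are fixed, so $\phi$ is an affine function of the pair $(u_1,v_1)$. Its coefficients $(d-c)/D$ on $u_1$ and $(c-b)/D$ on $v_1$ do not depend on $u$ or $v$.

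Next, the rounding map $\alpha \mapsto \lfloor \alpha \rceil$ is monotone nondecreasing. This holds because ties are always broken upward: nearest-integer rounding with a consistent tie-breaking rule is a nondecreasing step function. Hence $(L(u,v,c))_1 = \lfloor \phi(u_1,v_1)\rceil$ is nondecreasing in $u_1$ exactly when the coefficient $(d-c)/D$ is nonnegative, which is when $c \le d$. Likewise it is nondecreasing in $v_1$ exactly when $(c-b)/D \ge 0$, which is when $c \ge b$. Changing $u_1$ while keeping $b$ fixed forces $u_2 = b - u_1$, and the formula depends on $u$ only through $u_1$ and $b$. So ``increasing in the $x$-coordinate of $u$'' is well-defined within the stated family, and the same holds for $v$. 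The constraint $v \ge u$ only restricts which pairs are admissible; it does not affect the monotonicity of the formula.

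No step here is a real obstacle. The main thing to check is that nothing else in the expression depends on $u_1$ or $v_1$, since $b$ and $d$ are held fixed. Note also that $c$ may lie outside $[b,d]$ in the statement, and the sign conditions above cover that case too. Finally, monotone here should be read as non-strict, because the rounding can make the function constant over stretches.
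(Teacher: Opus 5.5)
Your proof is correct and follows the paper's argument: with $b$ and $d$ fixed, the pre-rounding quantity is $u_1\frac{d-c}{d-b}+v_1\frac{c-b}{d-b}$, the signs of the two coefficients give the monotonicity, and rounding with ties broken upward preserves it. Your separate treatment of the degenerate case $d=b$ (where the formula is $0/0$) is a small clarification the paper omits, and only the ``if'' direction of your ``exactly when'' is needed.
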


The second one builds on \cref{lem:straight-line-monotone} by showing that, as an endpoint moves, the points on the grid line move continuously in the same direction. It does so by showing that, for any particular point $(x, y) \in [n]^2$, the behavior of a grid line near $(x, y)$ is determined by which of several contiguous regions its endpoints are in.
\begin{restatable}{lemma}{pointSlidingContiguousRegions}
    \label{lem:point-sliding-contiguous-regions}
    Let $n \in \mathbb{N}$. Let $b \leq d \in \mathbb{N}$. Let $S_1, S_2 \subseteq [n]^2$ be such that:
    \begin{itemize}
        \item For all $u \in S_1$, we have $u_1+u_2=b$.
        \item For all $v \in S_2$, we have $v_1+v_2=d$.
        \item For all $u \in S_1$ and $v \in S_2$, we have $u \leq v$.
    \end{itemize}
    Let $(x, y) \in [n]^2$ be such that $b \leq x+y \leq d$. Then for all $u \in S_1$, there exist $\delta_1, \delta_4 \in [n]$ such that:
    \begin{itemize}
        \item For all $v \in S_2$ such that $v_1 \leq \delta_1$, we have $(L(u, v, c))_1 < x$.
        \item For all $v \in S_2$ such that $v_1 \in (\delta_1, \delta_4]$, we have $L(u, v, c)=(x, y)$.
        \item For all $v \in S_2$ such that $v_1 > \delta_4$, we have $(L(u, v, c))_1 > x$.
    \end{itemize}
    And there exist $\delta_2, \delta_3 \in [n]$ such that:
    \begin{itemize}
        \item For all $v \in S_2$ such that $v_1 \in (\delta_1, \delta_2]$, we have $L(u, v, c+1) = L(u, v, c)+(0, 1)$.
        \item For all $v \in S_2$ such that $v_1 \in (\delta_2, \delta_4]$, we have $L(u, v, c+1)=L(u, v, c)+(1, 0)$.
        \item For all $v \in S_2$ such that $v_1 \in (\delta_1, \delta_3]$, we have $L(u, v, c-1) = L(u, v, c)+(-1, 0)$.
        \item For all $v \in S_2$ such that $v_1 \in (\delta_3, \delta_4]$, we have $L(u, v, c-1)=L(u, v, c)+(0, -1)$.
    \end{itemize}
    And the same holds symmetrically: for all $v \in S_2$, there exist $\delta_1, \delta_2, \delta_3, \delta_4$ such that the above hold for all $u \in S_1$.
\end{restatable}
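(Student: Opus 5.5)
Fix $u \in S_1$ and a target point $(x,y)$ with $c = x+y \in [b,d]$. Everything reduces to the one-variable real function
\[
\phi_c(v_1) = u_1\frac{d-c}{d-b} + v_1\frac{c-b}{d-b},
\]
since $v_1+v_2=d$ and $u_1+u_2=b$ make the denominator $d-b$ constant over $S_2$. Then $(L(u,v,c))_1 = \lfloor \phi_c(v_1) \rceil$, and $\phi_c$ is affine and nondecreasing in $v_1$ with slope $t_c = (c-b)/(d-b) \in [0,1]$. The case $b=d$ is degenerate: then $c=b=d$ and everything is constant, so we take the intervals trivially.

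\textbf{Step 1: the trichotomy.} Since $\lfloor\cdot\rceil$ is nondecreasing, $v_1 \mapsto (L(u,v,c))_1$ is nondecreasing, which is \cref{lem:straight-line-monotone}. Hence the set of $v_1$ giving first coordinate $<x$, $=x$, $>x$ consists of three consecutive intervals of integers. Define
\begin{align}
\delta_1 &= \max\{v_1 : \lfloor\phi_c(v_1)\rceil < x\},\\
\delta_4 &= \max\{v_1 : \lfloor\phi_c(v_1)\rceil \le x\},
\end{align}
clamped into $[n]$ (use $0$ or $n$ conventions if the sets are empty or full; the statement only needs the intervals). When the first coordinate equals $x$, the second equals $c-x=y$, so $L=(x,y)$ on $(\delta_1,\delta_4]$.

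\textbf{Step 2: the neighbours.} For $v_1 \in (\delta_1,\delta_4]$, \cref{lem:straight-line-valid-spine} says the grid line is a connected monotone path. So $L(u,v,c+1)$ is $(x,y)+(1,0)$ or $(x,y)+(0,1)$, provided $c<d$. Which one occurs is decided by whether $\lfloor\phi_{c+1}(v_1)\rceil \ge x+1$. Since $\phi_{c+1}$ is also nondecreasing in $v_1$, this condition defines an up-set of $v_1$. Let $\delta_2$ be its threshold intersected with $(\delta_1,\delta_4]$: below it we step $(0,1)$, above it $(1,0)$. Symmetrically, $L(u,v,c-1)$ is $(x-1,y)$ or $(x,y-1)$, decided by whether $\lfloor\phi_{c-1}(v_1)\rceil \ge x$, which is again an up-set; this gives $\delta_3$.

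At the boundary $c=d$ (resp. $c=b$), the point $c+1$ (resp. $c-1$) lies outside the line. There I would note that $\phi$ is independent of $v_1$ at $c=b$ and the endpoints are fixed, so one can set $\delta_2=\delta_4$ or $\delta_3=\delta_1$ vacuously. This is the step needing the most care about what the statement means off the segment. The rounding convention (ties upward) matters only in that it is a fixed monotone map, so no tie-breaking issue arises.

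\textbf{Step 3: symmetry.} For fixed $v$ and varying $u \in S_1$, the same argument goes through with $\phi_c$ viewed as affine in $u_1$ with slope $(d-c)/(d-b) \ge 0$. The thresholds are defined on $u_1$; the statement's "$v_1$" in the symmetric clause is read as $u_1$.

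The main obstacle I expect is purely bookkeeping: ensuring the $\delta$'s lie in $[n]$ with half-open conventions when some region is empty, and handling the endpoints $c\in\{b,d\}$. The monotonicity of the rounded affine function does all the real work.
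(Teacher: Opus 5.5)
Your proposal is correct and follows the paper's proof essentially step for step: monotonicity of the rounded interpolation in $v_1$ (\cref{lem:straight-line-monotone}) gives the trichotomy thresholds $\delta_1,\delta_4$, \cref{lem:straight-line-valid-spine} restricts the neighbours at $c\pm1$ to two options, monotonicity of $(L(u,v,c\pm1))_1$ gives $\delta_2,\delta_3$, and the $u$-varying case is handled the same way. Your extra care at the endpoints is justified: at $c=b$ the formula for $(L(u,v,b-1))_1$ is nonincreasing in $v_1$, so the monotonicity argument for $\delta_3$ fails there and the $c-1$ claim only makes sense on the segment itself, and your handling of this, of $b=d$, and of empty ranges (which need $\delta=0$) covers points the paper's proof skips.
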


With the grid line segments defined, we now construct a family of spines by concatenating many grid lines together. These spines are constrained to a small \emph{tube} along the main diagonal. The tube itself is further subdivided into \emph{chunks} and those chunks are divided into \emph{regions}. The spine's path through the tube follows grid lines as follows:
\begin{enumerate}[1.]
    \item Start from the lower-left corner $(1, 1)$ and go to a point on the lower-left boundary of the first chunk.
    \item Choose a ``special" region in the first chunk based on the point at which the spine entered. Head directly up-right until entering the special region, then to a point on the upper-right boundary of that region, then directly up-right until entering the second chunk.
    \item Repeat step 2 for all subsequent chunks.
    \item Leave the last chunk and go to the upper-right corner $(n, n)$.
\end{enumerate}

A sketch of such a spine is drawn in \cref{fig:chunked-spine-sketch}. This structure of chunks and regions with a ``special" region in each chunk was used by \cite{etessami2019tarski}. Our key innovation is the correlation between the entry points into each chunk and the choice of special regions.

\begin{figure}[h]
    \includegraphics[width=0.5\textwidth]{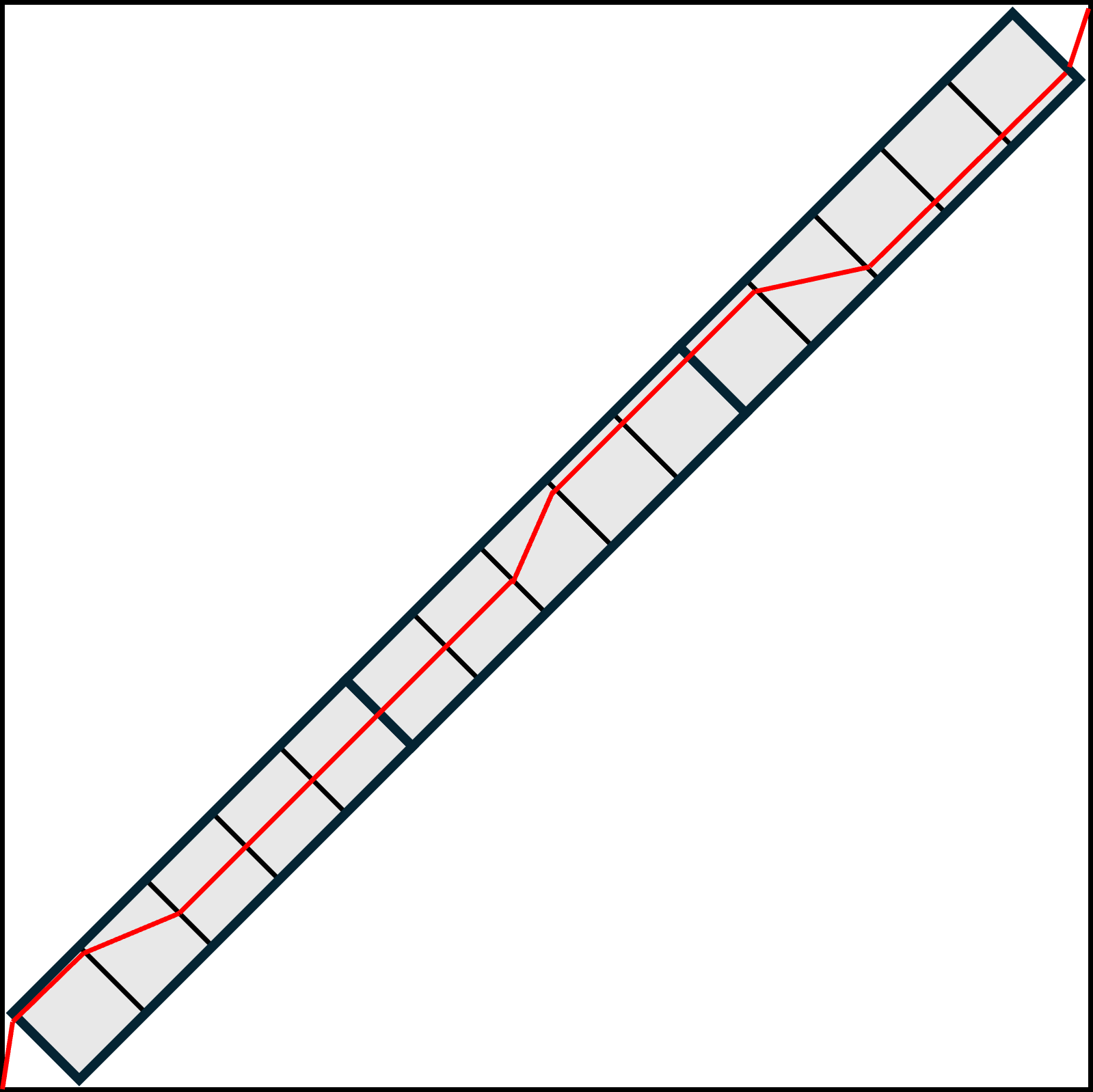}
    \caption{Using the notation of \cref{def:chunked-spine}, an example sketch of a chunked spine for $n=3$. The three chunks are the bold black rectangles, each subdivided into five square regions. The spine is in red. It uses $C = (1, 2, 1, 3)$.}
    \label{fig:chunked-spine-sketch}
\end{figure}

Formally, we first define the chunk and region structure.

\begin{definition}
    \label{def:chunks-regions}
    Let $n \in \mathbb{N}$ and let $n' = n(n^2+n-1)$. For $c \in [2n']$, let $B^c$ denote the set:
    \begin{align}
        B^c &= \{(x, y) \in [n']^2 \mid x+y=c, |x-y| \leq n-1\}
    \end{align}
    And let the elements of $B^c$ be indexed $B^c_1$, $B^c_2$, $\ldots$, ordered by increasing $x$-coordinate. For any two $c \leq d \in [2n']$, let $R(c, d)$ denote the set:
    \begin{align}
        R(c, d) &= \{w \in [n']^2 \mid \exists u \in B^c, v \in B^d \text{s.t. $w$ is in the grid line from $u$ to $v$}\}
    \end{align}
    For $i \in [n]$ and $j \in [n+2]$, define:
    \begin{align}
        \textsc{Low}(i, j) &= 2(n-1)((n+2)(i-1)+j-1)+n+1 \\
        \textsc{High}(i, j) &= 2(n-1)((n+2)(i-1)+j)+n+1
    \end{align}
    Then for $i \in [n]$ and $j \in [n+2]$, we say \emph{region $j$ of chunk $i$} is the set $R(\textsc{Low}(i, j), \textsc{High}(i, j))$.
\end{definition}

We then formally define our \emph{chunked spines}.
\begin{definition}
    \label{def:chunked-spine}
    Let $n \in \mathbb{N}$ and let $n'=n(n^2+n-1)$. Let $C \in [n]^{n+1}$. The \emph{chunked spine following $C$} is constructed by splicing grid lines along each of the following segments:
    \begin{itemize}
        \item Go from $(1, 1)$ to $B^{\textsc{Low}(1, 1)}_{C_1}$.
        \item For each chunk $i \in [n]$:
        \begin{itemize}
            \item Go from $B^{\textsc{Low}(i, 1)}_{C_i}$ to $B^{\textsc{Low}(i, C_i+1)}_{C_i}$.
            \item Go from $B^{\textsc{Low}(i, C_i+1)}_{C_i}$ to $B^{\textsc{High}(i, C_i+1)}_{C_{i+1}}$.
            \item Go from $B^{\textsc{High}(i, C_i+1)}_{C_{i+1}}$ to $B^{\textsc{High}(i, n+2)}_{C_{i+1}}$.
        \end{itemize}
        \item Go from $B^{\textsc{High}(n, n+2)}_{C_{n+1}}$ to $(n', n')$.
    \end{itemize}
\end{definition}

\cref{fig:chunked-spine-sketch} depicts a solid tube of square regions. The following two lemmas show that this picture is accurate. First, that the region boundaries defined by the $B^c$ are of a consistent size and shape.

\begin{restatable}{lemma}{regionBoundarySizeN}
    \label{lem:region-boundary-size-n}
    Let $n \in \mathbb{N}$. For all $i \in \{0, 1, \ldots, n(n+2)\}$, we have:
    \begin{align}
        \left|B^{2(n-1)i+n+1}\right| &= n \label{eq:B-set-size-equals-n}
    \end{align}
    And its elements are, for $j \in [n]$:
    \begin{align}
        B^{2(n-1)i+n+1}_j &= \bigl((n-1)i+j, (n-1)i+n+1-j\bigr) \label{eq:B-elements-characterization}
    \end{align}
\end{restatable}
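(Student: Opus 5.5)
We prove the lemma by computing $B^c$ directly from \cref{def:chunks-regions}, with $c = 2(n-1)i+n+1$ and $n' = n(n^2+n-1)$. A point of $B^c$ is $(x, c-x)$ with $x, c-x \in [n']$ and $|2x - c| \leq n-1$.

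\textbf{Step 1: the diagonal constraint.} The condition $|2x-c| \leq n-1$ is equivalent to
\begin{align}
    \frac{c-n+1}{2} \leq x \leq \frac{c+n-1}{2}\,.
\end{align}
Substituting $c$ gives $(c-n+1)/2 = (n-1)i+1$ and $(c+n-1)/2 = (n-1)i+n$. Both are integers, so the admissible $x$ are exactly $x = (n-1)i + j$ for $j \in [n]$.

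\textbf{Step 2: the coordinate of the other axis.} For such $x$ we get $y = c - x = (n-1)i + n + 1 - j$. This already matches \eqref{eq:B-elements-characterization}. The elements are listed in increasing order of $x$, so the indexing $B^c_j$ agrees with the ordering convention of \cref{def:chunks-regions}.

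\textbf{Step 3: staying inside the grid.} It remains to check that these points lie in $[n']^2$, i.e.\ that neither the grid boundary $1$ nor $n'$ cuts off any of them. Once this holds, $|B^c| = n$ follows.
\begin{itemize}
    \item \emph{Lower bound.} The smallest coordinate that appears is $(n-1)i+1 \geq 1$.
    \item \emph{Upper bound.} The largest coordinate is $(n-1)i+n$. Since $i \leq n(n+2)$,
    \begin{align}
        (n-1)i + n \leq (n-1)n(n+2) + n = n(n^2+n-1) = n'\,.
    \end{align}
\end{itemize}

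Step 3 is the only place the specific value of $n'$ matters: the upper bound is tight exactly at $i = n(n+2)$. The case $n = 1$ needs a separate remark, since then $n-1 = 0$ and $c = 2$. In that case $B^2 = \{(1,1)\}$, which is consistent with the formula. There is no real obstacle beyond careful arithmetic.
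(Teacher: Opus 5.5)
Your proof is correct and takes essentially the same route as the paper: combine $x+y=c$ with the diagonal constraint $|x-y|\le n-1$ to force $x\in\{(n-1)i+1,\ldots,(n-1)i+n\}$, then check that the grid constraint $[n']^2$ removes none of these points, with the upper bound tight at $i=n(n+2)$. Your separate remark on $n=1$ is harmless but unnecessary, since the general computation already covers that case.
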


\begin{proof}[Proof sketch]
    There are three constraints that define points $(x, y) \in B^{2(n-1)i+n+1}$:
    \begin{enumerate}[(i)]
        \item $x+y=2(n-1)i+n+1$,
        \item $|x-y| \leq n$, and
        \item $(x, y) \in [n']^2=[n(n^2+n-1)]^2$.
    \end{enumerate}
    Solving this system of equations implies the lemma.
\end{proof}

Second, that each region actually fills the space between its boundaries. We show this by proving that each region is spanned by the up-right grid lines between corresponding points on its boundaries, which fill the space because the shape of grid lines is translation-invariant.

\begin{restatable}{lemma}{regionShapeRectangular}
    \label{lem:region-shape-rectangular}
    Let $n \in \mathbb{N}$. Let $i \in [n]$ and $j \in [n+2]$. Let $w$ be a vertex in region $j$ of chunk $i$. Then there exists $\ell \in [n]$ such that, for all $c_1, d_1$ such that $\textsc{Low}(c_1, d_1) \leq \textsc{Low}(i, j)$ and all $c_2, d_2$ such that $\textsc{High}(c_2, d_2) \geq \textsc{High}(i, j)$, vertex $w$ is on the grid line from $B^{\textsc{Low}(c_1, d_1)}_{\ell}$ to $B^{\textsc{High}(c_2, d_2)}_{\ell}$.
\end{restatable}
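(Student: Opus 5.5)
The plan is to identify the index $\ell$ explicitly from a single grid line through $w$, and then use the translation invariance of grid lines between points with the same index. Since $w$ lies in region $j$ of chunk $i$, by \cref{def:chunks-regions} there are $u \in B^{\textsc{Low}(i,j)}$ and $v \in B^{\textsc{High}(i,j)}$ with $w$ on the grid line from $u$ to $v$. By \cref{lem:region-boundary-size-n} we may write $u = B^{\textsc{Low}(i,j)}_{a}$ and $v = B^{\textsc{High}(i,j)}_{b}$ for some $a,b \in [n]$. The boundary sets are translates of one another along the diagonal: by \eqref{eq:B-elements-characterization}, $B^{2(n-1)i'+n+1}_{\ell} = (n-1)i'(1,1) + (\ell, n+1-\ell)$.

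Step one treats the case $a=b=\ell$. Here $v-u$ is a multiple of $(1,1)$, and the grid line is exactly the diagonal segment $\{u + t(1,1)\}$ together with the half-steps. More precisely, \cref{def:straight-line-segment} with $v-u=(t,t)$ gives $L(u,v,c)_1 = \lfloor u_1 + (c-u_1-u_2)/2 \rceil$. This formula depends on $u$ only through $u_1$ and the offset $c-u_1-u_2$, so it does not depend on $v$ at all. Hence every grid line from $B^{c'}_\ell$ to $B^{d'}_\ell$ with $c' \le \textsc{Low}(i,j)$ and $d' \ge \textsc{High}(i,j)$ passes through the same vertices in the levels between. Moreover, the parity of $c' - \textsc{Low}(i,j)$ is even, since all the levels are spaced by $2(n-1)$, so the rounding pattern is the same. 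Thus $w$ lies on all of them.

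Step two reduces the general case to step one: I must show that $w$ also lies on the diagonal line through some $B_\ell$. The diagonal lines with $\ell = 1,\dots,n$ sweep the tube at each level $c$ between $\textsc{Low}(i,j)$ and $\textsc{High}(i,j)$. For even offset they hit exactly the $n$ points $(x,y)$ with $x+y=c$ of the form $\ell + k$, $n+1-\ell+k$. For odd offset they hit $n$ consecutive points, shifted by the tie-breaking rounding. So I need to check that the points of $R(\textsc{Low}(i,j),\textsc{High}(i,j))$ at level $c$ are among these $n$ points. For this I use \cref{lem:straight-line-monotone}: $L(u,v,c)_1$ is monotone in $u_1$ and $v_1$. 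It is therefore sandwiched between the $x$-coordinates of the lines with $a=b=1$ and $a=b=n$, which are the extreme diagonal lines. Consecutive diagonal lines differ by exactly one in $x$-coordinate at each level, so every value in between is attained by some $\ell$. This gives the $\ell$ with $w = L(B_\ell, B_\ell, w_1+w_2)$.

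The main obstacle is the odd-offset levels. There the rounding with ties upward must be checked so that the extreme lines are exactly $n-1$ apart and the sandwich argument leaves no gap. I would handle this by a direct computation from \cref{def:straight-line-segment}, using the width constraint $|x-y|\le n-1$ in $B^c$ together with \cref{lem:straight-line-valid-spine}.
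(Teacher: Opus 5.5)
Your proposal is correct and follows essentially the paper's argument. The paper also shows that the same-index grid line has $x$-coordinate $\lfloor \ell + (c-n-1)/2\rceil$ at level $c$, whichever boundary levels it joins. It then bounds $w_1$ by expanding $L$ along the $(a,b)$ line as $\lfloor (c-n-1)/2 + (\text{weighted average of } a,b)\rceil$, which is the same fact as your sandwich via \cref{lem:straight-line-monotone}, and takes $\ell = w_1 - \lfloor (w_1+w_2-(n+1))/2\rceil$ explicitly. The odd-level ``obstacle'' you flag is not a real issue: $\lfloor \ell + X\rceil = \ell + \lfloor X\rceil$ for every integer $\ell$ under the consistent tie-breaking rule, so the $n$ same-index lines always hit exactly $n$ consecutive $x$-values.
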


\begin{proof}[Proof sketch]
    The desired $\ell$ is:
    \begin{align}
        \ell &= w_1 - \left\lfloor \frac{w_1 + w_2 - (n+1)}{2} \right\rceil
    \end{align}
    Which we show works by plugging the points from \cref{lem:region-boundary-size-n} into \cref{def:straight-line-segment}.
\end{proof}

We now combine the chunked spines of \cref{def:chunked-spine} with a particular choice of fixed point to get the set of $TARSKI$ instances used in our lower bound.

\begin{definition}
    \label{def:tarski-instances}
    Let $n \in \mathbb{N}$ and let $n' = n(n^2+n-1)$. Let $\textsc{Bound} : [n+1] \to [2n']$ denote the function that maps $i \in [n+1]$ to the $i$th chunk boundary's coordinate sum:
    \begin{align}
        \textsc{Bound}(i) &= \begin{cases}
            \textsc{Low}(i, 1) & \text{if $i \leq n$} \\
            \textsc{High}(n, n+2) & \text{if $i=n+1$}
        \end{cases}
    \end{align}
    Then the set $\mathcal{T}(n')$ of $TARSKI(n', 2)$ instances we consider is:
    \begin{align}
        \mathcal{T}(n') &= \left\{h^{\vec{s}, \textsc{Bound}(i)} \mid \vec{s} \text{ is a chunked spine, } i \in [n+1] \right\}
    \end{align}
\end{definition}

We now prove the lemma that makes up most of our lower bound.

\begin{restatable}{lemma}{tarskiNOSReduction}
    \label{lem:tarski-nos-reduction}
    Let $n \in \mathbb{N}$. Then:
    \begin{align}
        SA(TARSKI(n(n^2+n-1), 2)) \geq \frac{1}{7} SA(NOS_{n+1, n})
    \end{align}
\end{restatable}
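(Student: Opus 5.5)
We build an explicit bijection-like map from $NOS_{n+1,n}$ instances to the Tarski instances in $\mathcal{T}(n')$, with $n' = n(n^2+n-1)$, and transport an optimal adversary matrix along it. By \cref{lem:nos-lower-bound} and the form of \cref{thm:spectral-adversary-method}, it suffices to show that for any adversary matrix $\Gamma$ for $NOS_{n+1,n}$ we can build an adversary matrix $\Gamma'$ for $TARSKI(n',2)$ (restricted to $\mathcal{T}(n')$) with $\|\Gamma'\| = \|\Gamma\|$ and $\max_v \|\Gamma' \circ D_v\| \le 7 \max_i \|\Gamma\circ D_i\|$.

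\textbf{The encoding.} An $NOS_{n+1,n}$ instance consists of $n+1$ $HSOS_n$ blocks with hidden positions $C_1,\dots,C_{n+1}\in[n]$ and hidden symbols forming $\uparrow^{k-1}*\downarrow^{n+1-k}$. Map it to $h^{\vec s,\textsc{Bound}(k)}$, where $\vec s$ is the chunked spine following $C$. The block-$i$ hidden position $C_i$ becomes the entry point $B^{\textsc{Bound}(i)}_{C_i}$ of the spine into chunk $i$. The hidden symbol in block $i$ becomes the direction the function takes at that boundary point: up along the spine if $i<k$, fixed if $i=k$, and down if $i>k$. The map is injective and sends the NOS answer $k$ to the unique fixed point $s^{\textsc{Bound}(k)}$. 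Distinct NOS answers therefore give distinct Tarski answers, so $\Gamma'$, defined as $\Gamma$ pulled back through the map, is a legitimate adversary matrix with the same norm.

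\textbf{Bounding the query matrices.} For each vertex $v\in[n']^2$ we must show that the set of pairs $(x,y)$ of instances whose Tarski functions differ at $v$ is covered by the union of at most $7$ sets of the form $\{(x,y): x_{i}\neq y_{i}\}$ for NOS positions $i$. Since $\Gamma$ is nonnegative, we have $\Gamma'\circ D_v \le \sum_{t=1}^{7}\Gamma\circ D_{i_t}$ entrywise, and entrywise domination of nonnegative matrices gives $\|\Gamma'\circ D_v\|\le 7\max_i\|\Gamma\circ D_i\|$. This is exactly where nonnegativity, flagged in the introduction, is used. To prove the covering, we fix $v$ and locate it within a region $j$ of chunk $i$ (or in a boundary segment). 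The value $h(v)$ is determined by whether $v$ lies above, below, or on the spine, together with the spine's local direction and the fixed-point position relative to $v$.

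\textbf{Structure of the spine at $v$.} By the construction of \cref{def:chunked-spine}, the spine inside chunk $i$ depends only on $C_i$ and $C_{i+1}$. Inside region $j$, it depends only on $C_i$ if $j< C_i+1$, only on $C_{i+1}$ if $j>C_i+1$, and on both if $j=C_i+1$. We then use \cref{lem:point-sliding-contiguous-regions} and \cref{lem:region-shape-rectangular} to turn these dependencies into comparisons. Whether $v$ is above, on, or below the spine, and which direction the spine moves at $v$, is determined by threshold comparisons of $C_i$ and $C_{i+1}$ against fixed cutoffs $\delta$. These comparisons are exactly the information revealed by querying positions in the $HSOS$ blocks $i$ and $i+1$: a query at position $q$ of block $i$ reveals whether $C_i<q$, $C_i=q$, or $C_i>q$. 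The direction along the spine at $v$ additionally depends on the hidden symbols in the blocks whose boundaries bracket $v$, which are revealed by queries at position $C_i$ of those blocks.

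We therefore aim to show that the following $7$ NOS positions suffice: a constant number of positions $q$ (derived from the $\delta$'s) in block $i$ and in block $i+1$, plus the positions that reveal the relevant hidden symbols. If two instances agree at all of these positions, they have the same spine behavior at $v$ and the same $h(v)$.

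\textbf{Main obstacle.} The hard step is this local analysis, in particular the special region $j=C_i+1$. There the spine runs diagonally from entry index $C_i$ to exit index $C_{i+1}$, so whether $v$ lies on the spine depends jointly on both values. We will handle this by fixing one endpoint and applying the symmetric two-sided version of \cref{lem:point-sliding-contiguous-regions}. We also still need to verify that the total count is exactly $7$ and not larger. In particular, the case where $v$ sits on a chunk boundary, and the rounding ties in $\lfloor\cdot\rceil$, need separate checks.
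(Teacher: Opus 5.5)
\textbf{There is a genuine gap in the in-chunk covering step.} Your framework matches the paper's. You identify $\mathcal{T}(n')$ with $NOS_{n+1,n}$ via $(C,\ell)$, reuse the NOS adversary matrix unchanged, and cover each $D^{\text{Tarski}}_v$ entrywise by at most seven chunk-boundary distinguishers, using nonnegativity.

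The problem is the step that carries all the content: exhibiting, for each in-chunk vertex $v$, at most seven NOS positions that depend only on $v$ and not on the pair $(f,g)$. You leave this as an aim, and the mechanism you sketch would not give a constant.

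\begin{itemize}
\item Reading the spine direction off ``queries at position $C_i$'' is not allowed, since $C_i$ varies with the instance. A cover whose positions depend on the pair only gives $D_v\le\sum_p D_p$ over all $n(n+1)$ positions, which loses a factor of order $n^2$.
\item You treat the special region as a joint dependence on $(C_i,C_{i+1})$, to be handled by fixing one endpoint and using the two-sided version of \cref{lem:point-sliding-contiguous-regions}. That produces thresholds that depend on the other endpoint, which again varies with the instance.
\item You never handle mixed pairs, where region $j$ is special for one instance but lies before or after the special region for the other.
\item The boundary identity $D^{\text{Tarski}}_{B^{\textsc{Bound}(i)}_j}=D^{NOS}_{i,j}$, on which everything rests, is asserted via the encoding but deferred.
\end{itemize}

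\textbf{The missing idea.} Region $\beta$ of chunk $\alpha$ is special exactly when $C_\alpha=\beta-1$. So its entry endpoint is never free, and one fixed query detects this case.

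Take $\gamma$ from \cref{lem:region-shape-rectangular}, so that $v$ lies on the index-$\gamma$ line across the chunk. Suppose $f$ and $g$ agree at $B^{\textsc{Bound}(\alpha)}_\gamma$ and $B^{\textsc{Bound}(\alpha+1)}_\gamma$, and $C^f_\alpha\le C^g_\alpha$. Then $f(v)=g(v)$ unless $\beta-1\in[C^f_\alpha,C^g_\alpha]$. Otherwise region $\beta$ lies before, or after, the special region for both instances, and both spines there are constant-index lines. They either lie on the same side of $v$, or both pass through $v$ with orientation read off the common on-spine point at index $\gamma$.

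Next, agreement at the fixed point $B^{\textsc{Bound}(\alpha)}_{\beta-1}$ forces $C^f_\alpha=C^g_\alpha=\beta-1$. This single query disposes of the mixed pairs and also fixes the spine orientation. Only then is \cref{lem:point-sliding-contiguous-regions} applied, with the single entry $S_1=\{B^{\textsc{Low}(\alpha,\beta)}_{\beta-1}\}$ and $S_2=B^{\textsc{High}(\alpha,\beta)}$. It gives at most four threshold positions in block $\alpha+1$, for $3+4=7$ in total. Together with the analogous three-point covers for $x+y\le n$ and $x+y\ge n'-n$, this is what actually establishes the constant.
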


\begin{proof}[Proof sketch]
    Let $n' = n(n^2+n-1)$. We will consider only instances of $TARSKI(n', 2)$ in the set $\mathcal{T}(n')$ from \cref{def:tarski-instances}. Each such function is parameterized by a choice of $C \in [n]^{n+1}$ and $i \in [n+1]$. These parameters are also enough to specify an instance of $NOS_{n+1, n}$: the answer locations of the $n+1$ inner $HSOS_n$ instances can be given by $C$, and the answer to the outer $OS_{n+1}$ instance can be given by $i$.

    By matching these parameters, we get a one-to-one correspondence between elements of $\mathcal{T}(n')$ and $NOS_{n+1, n}$. We show that, under this correspondence, each $NOS_{n+1, n}$ instance is exactly embedded into its partner in $\mathcal{T}(n')$. Specifically, querying the $j$th character of the $i$th $HSOS_n$ instance gives the same information as querying $B^{\textsc{Bound}(i)}_j$, the $j$th point on the $i$th chunk boundary. This follows from the spine carrying the information of the outer $OS_{n+1}$ instance through the answers to the $HSOS_n$ instances.

    All that remains is showing that an algorithm cannot get more information by querying other points in $[n']^2$. The points of most concern here are those within the regions, since their values depend on both of the adjacent $HSOS_n$ instances. More worryingly, finding a spine vertex would bypass solving the $HSOS_n$ instances altogether.

    This problem is solved by having $C$ determine both the spine's entry points into each chunk and the region within the chunk that the spine transitions from one to the next. Let $(x, y)$ be an arbitrary point in region $\beta$ of chunk $\alpha$. Let $\gamma$ be the index guaranteed by \cref{lem:region-shape-rectangular} such that that $(x, y)$ is on the grid line from $B^{\textsc{Bound}(\alpha)}_{\gamma}$ to $B^{\textsc{Bound}(\alpha+1)}_{\gamma}$. If two instances $f, g \in \mathcal{T}(n')$ disagree at $(x, y)$ but agree at $B^{\textsc{Bound}(\alpha)}_{\gamma}$ and $B^{\textsc{Bound}(\alpha+1)}_{\gamma}$, it must be that $(x, y)$ lies between the special regions of $f$ and $g$ in chunk $\alpha$. Therefore, if $f$ and $g$ also agree at $B^{\textsc{Bound}(\alpha)}_{\beta-1}$, it must be because region $\beta$ is the special region for both $f$ and $g$, and thus $B^{\textsc{Bound}(\alpha)}_{\beta-1}$ is on both of their spines. We can then use \cref{lem:point-sliding-contiguous-regions} to find four ``threshold" points in $B^{\textsc{Bound}(\alpha+1)}$ that determine where $(x, y)$ is in relation to the spine.

    We therefore have, for any fixed $(x, y)$ in a region, a set $V$ of seven points on chunk boundaries on which two instances $f, g \in \mathcal{T}(n')$ that disagree at $(x, y)$ must not completely agree. This is enough for a black-box reduction, since querying those seven points would give enough information to compute the value at $(x, y)$. Rather than take this approach, our proof uses the structure of the spectral adversary method to ``natively" get a bound from this $V$. Specifically, let $\Gamma_{NOS}$ be an optimal adversary matrix for $NOS_{n+1, n}$. We can then set $\Gamma_{\text{Tarski}}=\Gamma_{NOS}$, following our correspondence between $NOS_{n+1, n}$ and $\mathcal{T}(n')$. Letting $D^{\text{Tarski}}_{v}$ denote the distinguisher matrix for $TARSKI(n', 2)$ when querying $v$, we have:
    \begin{align}
        D^{\text{Tarski}}_{(x, y)} &\leq \sum_{v \in V} D^{\text{Tarski}}_v \, ,
    \end{align}
    Where the inequality is taken elementwise. Because $\Gamma_{\text{Tarski}}$ is nonnegative, we can apply the triangle inequality to get:
    \begin{align}
        \|\Gamma_{\text{Tarski}} \circ D^{\text{Tarski}}_{(x, y)}\| &\leq \sum_{v \in V} \|\Gamma_{\text{Tarski}} \circ D^{\text{Tarski}_v}\| \\
        &\leq 7 \cdot \max_{i \in [n+1], j \in [n]} \|\Gamma_{NOS} \circ D^{NOS}_{i, j}\|\, ,
    \end{align}
    Where the last inequality uses distinguisher matrices $D^{NOS}_{i, j}$ for $NOS_{n+1, n}$ and the fact that $NOS_{n+1, n}$ is exactly embedded into the chunk boundaries. By extending this inequality to all $v\in [n']^2$ through similar arguments, we get:
    \begin{align}
        \max_{v \in [n']^2} \|\Gamma_{\text{Tarski}} \circ D^{\text{Tarski}}_v\| &\leq 7 \cdot \max_{i \in [n+1], j \in [n]} \|\Gamma_{\text{NOS}} \circ D^{NOS}_{i, j}\|
    \end{align}
    Which implies:
    \begin{align}
        SA(TARSKI(n', 2)) &\geq \frac{\|\Gamma_{\text{Tarski}}\|}{\max_{v \in [n']^2} \|\Gamma_{\text{Tarski}} \circ D^{\text{Tarski}}_v\|} \\
        &\geq \frac{1}{7} \cdot \frac{\|\Gamma_{NOS}\|}{\max_{i \in [n+1], j \in [n]} \|\Gamma_{\text{NOS}} \circ D^{NOS}_{i, j}\|} \\
        &= \frac{1}{7} \cdot SA(NOS_{n+1, n})
    \end{align}
    Which completes the proof.
\end{proof}

\cref{lem:tarski-nos-reduction} only applies directly to $TARSKI(n, 2)$ for $n$ of a specific form. For completeness, we use the following lemma to extend it to all $n$ and to extend our lower bound to $TARSKI(n, k)$ for $k \geq 3$. In the classical setting, the monotonicity in $n$ was shown by \cite{BPR24} for the special case of comparing $TARSKI(2, k)$ with $TARSKI(n, k)$. The monotonicity in $k$ was shown by \cite{fearnley2022faster}. We combine their proof ideas into one reduction.

\begin{restatable}{lemma}{tarskiComplexityMonotone}
    \label{lem:complexity-monotone}
    Let $n', n \in \mathbb{N}$ such that $n' \leq n$. Let $k', k \in \mathbb{N}$ such that $k' \leq k$. Then the quantum query complexity of $TARSKI(n, k)$ is at least the quantum query complexity of $TARSKI(n', k')$.
\end{restatable}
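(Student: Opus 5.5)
We prove the statement by a black-box reduction: from any monotone $f' : \mathcal{L}^{k'}_{n'} \to \mathcal{L}^{k'}_{n'}$ we build a monotone $F : \mathcal{L}^k_n \to \mathcal{L}^k_n$ such that each query to $F$ is answered with one query to $f'$, and any fixed point of $F$ yields a fixed point of $f'$ with no extra queries. A quantum algorithm for $TARSKI(n,k)$ run on the oracle $F$ then solves $TARSKI(n',k')$ with the same number of queries. Each quantum query to $F$ can be simulated by one query to $f'$ plus query-independent unitaries: compute the projection of the index register into a work register, query $f'$, compute $F(v)$ classically into the answer register, and uncompute the junk with a second query. This costs a constant factor $2$, which suffices for the asymptotic use in \cref{cor:k-geq-2-tarski-lower-bound}. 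To get the clean inequality as stated, we instead write $F(v)$ as a coordinatewise function of $(v, f'(\pi(v)))$ in which the oracle answer appears copied and can be XORed directly.

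\textbf{The construction.} Let $\pi : [n]^k \to [n']^{k'}$ be the map that keeps the first $k'$ coordinates and clamps each of them, $c \mapsto \min(c, n')$. Define
\begin{align}
    F(v)_i &= f'(\pi(v))_i \quad \text{for } i \leq k', \\
    F(v)_i &= 1 \quad \text{for } i > k'.
\end{align}

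\textbf{Monotonicity.} $\pi$ is monotone because clamping is monotone and coordinate projection is monotone. Composing with $f'$, which is monotone, keeps it monotone, and the constant coordinates are trivially monotone. The image of $F$ lies in $[n']^{k'} \times \{1\}^{k-k'} \subseteq [n]^k$.

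\textbf{Fixed points.} Suppose $F(v) = v$. Then $v_i = 1$ for $i > k'$, and $v_i = f'(\pi(v))_i \leq n'$ for $i \leq k'$. So $\pi(v) = (v_1, \ldots, v_{k'})$, and hence $f'(\pi(v)) = \pi(v)$. Conversely, a fixed point $x$ of $f'$ padded with ones is a fixed point of $F$. The output map is just truncation, which needs no queries.

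\textbf{Query simulation.} This is the main obstacle, since the oracle $\mathcal{O}_s$ XORs $s_v = F(v)$ into the answer register. The extra coordinates are constant, so their contribution is a fixed known value we can XOR in without a query. The first $k'$ coordinates equal $f'(\pi(v))$ exactly, and $\pi$ is a classical reversible-computable function of $v$. The plan is:
\begin{enumerate}[1.]
    \item Compute $\pi(v)$ into a fresh register.
    \item Apply one $f'$-oracle query with that register as index and the first $k'$ answer components as target.
    \item Uncompute $\pi(v)$.
\end{enumerate}
This uses exactly one query per $F$-query. The construction was chosen so that the oracle answer is copied verbatim rather than transformed, which is what avoids a second uncomputing query. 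So the $F$-algorithm with $T$ queries becomes an $f'$-algorithm with $T$ queries and the same success probability. This proves the lemma.
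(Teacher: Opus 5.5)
Your proof is correct and takes essentially the same route as the paper: clamp the first $k'$ coordinates to $[n']$, compose with $f'$, and note that a fixed point of the composite lies in the image of $f'$, is therefore unchanged by the clamp, and so is a fixed point of $f'$. Your explicit padding of the extra $k-k'$ coordinates with the constant $1$, and your compute--query--uncompute simulation of the oracle, spell out two details the paper leaves implicit; the paper tacitly treats $\mathcal{L}^{k'}_{n'}$ as a subset of $\mathcal{L}^k_n$ and justifies the one-query simulation only by saying the clamp needs no knowledge of $f'$.
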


We can now prove \cref{thm:intro_main} and its extension \cref{cor:k-geq-2-tarski-lower-bound}.

\begin{proof}[Proof of \cref{thm:intro_main}]
Let $n \in \mathbb{N}$. Let $n' \in \mathbb{N}$ be the largest natural number such that $n' \leq n$ and, for some $a \in \mathbb{N}$, we have $n' = a(a^2+a-1)$. Because $n'$ grows only cubically as a function of $a$, the difference between $n'$ and $n$ is $o(n)$, so $a \in \Theta(n^{1/3})$. By Lemmas \ref{lem:tarski-nos-reduction} and \ref{lem:nos-lower-bound}, we have:
\begin{align}
    SA(TARSKI(n', 2)) &\geq \frac{1}{7} \cdot SA(NOS_{a+1, a}) \\
    &\in \Omega(\log(a+1) \log(a)) \\
    &= \Omega((\log a)^2) \\
    &= \Omega((\log n)^2)
\end{align}
By \cref{thm:spectral-adversary-method}, the quantum query complexity of $TARSKI(n', 2)$ is $\Omega((\log n)^2)$. By \cref{lem:complexity-monotone}, the same lower bound applies to $TARSKI(n, 2)$, which completes the proof.
\end{proof}

\begin{proof}[Proof of \cref{cor:k-geq-2-tarski-lower-bound}]
    By \cref{thm:intro_main}, the query complexity of $TARSKI(n, 2)$ is $\Omega((\log n)^2)$. By \cref{lem:complexity-monotone}, this same bound applies to all $k \geq 2$.
\end{proof}

\section{Conclusion and Future Work}

The one-dimensional $TARSKI(n, 1)$ is effectively the same as ordered search. Combining our lower bound and \cite{dang2011computational}'s upper bound, the same relationship holds between $TARSKI(n, 2)$ and nested ordered search. This trend cannot directly generalize to $k \geq 3$ due to the upper bounds of \cite{fearnley2022faster} and \cite{chen2022improved}, so it would be interesting to see if some other generalization of nested ordered search captures the structure of $TARSKI(n, k)$.

The most direct path forwards for the quantum setting is to extend the classical lower bounds of \cite{BPR24} and \cite{BPR25_arxiv}. They work by constructing a hard distribution of inputs and bound the rate an algorithm can learn information about the input. These arguments could possibly be converted into one of the quantum adversary methods. Notably, the bound of \cite{BPR25_arxiv} uses a generalization of herringbone functions, so our constructions of adversary matrices for nested ordered search may be a useful component.

\bibliographystyle{alpha}

\bibliography{ref}

\appendix

\section{Composition Lower Bound Deferred Proofs}\label{app:composition-lower-bounds}

In this appendix we give complete proofs of the results of \cref{sec:composition-lower-bounds}.

\uniformMatrixSuffices*

\begin{proof}
    The proof is nearly identical to that of the automorphism principle in \cite{hoyer07negative}. Let $\Gamma$ be an optimal adversary matrix for $g$. Assume without loss of generality that it is normalized so that $\max_{i \in [n]} \|\Gamma \circ D^g_i\| = 1$; we then have $\|\Gamma\| = SA(g)$. We construct a uniform adversary matrix $\Gamma'$ that achieves the same bound by averaging various permutations of the rows and columns of $\Gamma$.

    Let $G$ be the group of all permutations of $\Sigma$. For each $\pi \in G$, we have $\pi$ act on $S_g$ by mapping $(\sigma, j)$ to $(\pi(\sigma), j)$ for all $\sigma \in \Sigma$ and $j \in [m]$. We then allow $\pi$ to permute the rows and columns of $\Gamma$ to produce the matrices $\Gamma_{\pi}$ with entries:
    \begin{align}
        \Gamma_{\pi}[x, y] &= \Gamma [\pi(x), \pi(y)]
    \end{align}
    Let $\delta$ be a principal eigenvector of $\Gamma$. We similarly permute its entries to produce vectors $\delta_{\pi}$ with entries:
    \begin{align}
        \delta_{\pi}[x] &= \delta [\pi(x)]
    \end{align}
    Since we used the same permutation of the rows and columns, the vector $\delta_{\pi}$ is a principal eigenvector of $\Gamma_{\pi}$ for all $\pi \in G$. Moreover, since $g$ is a generalized search function, the instances $\pi(x)$ and $\pi(y)$ have the same answer if and only if the instances $x$ and $y$ do. Therefore, all of the $\Gamma_{\pi}$ are valid adversary matrices for $g$.

    Let $\beta \in \mathbb{R}^{|S_g|}$ be the vector with entries:
    \begin{align}
        \beta [x] &= \sqrt{\sum_{\pi \in G} \delta_{\pi}[x]^2}
    \end{align}
    We assume without loss of generality that there does not exist $x \in S_g$ such that $\delta_{\pi}[x] = 0$ for all $\pi \in G$. Removing the rows and columns corresponding to any such $x$ would leave $\|\Gamma\|$ and the $\|\Gamma_{\pi}\|$ unchanged while not increasing the $\|\Gamma \circ D^g_i\|$ or the $\|\Gamma_{\pi} \circ D^g_i\|$, so the resulting adversary bound would be just as good. Under this assumption, all of the entries of $\beta$ are positive. We can therefore define the matrix $B \in \mathbb{R}^{|S_g| \times |S_g|}$ with entries:
    \begin{align}
        B[x, y] &= \frac{1}{\beta[x] \beta[y]}
    \end{align}
    We can now construct our desired uniform adversary matrix $\Gamma'$:
    \begin{align}
        \Gamma' &= B \circ \sum_{\pi \in G} \left(\Gamma_{\pi} \circ \delta_{\pi} \delta_{\pi}^T\right) \label{eq:gamma-prime-matrix-expression}
    \end{align}
    Or equivalently, its entries are:
    \begin{align}
        \Gamma'[x, y] &= \frac{\sum_{\pi \in G} \Gamma [\pi(x), \pi(y)] \delta [\pi(x)] \delta [\pi(y)]}{\beta[x] \beta[y]} \label{eq:gamma-prime-elementwise-expression}
    \end{align}
    We now show that $\Gamma'$ is a uniform adversary matrix. It is an adversary matrix because all of the $\Gamma_{\pi}$ are adversary matrices, so by the definition in \eqref{eq:gamma-prime-matrix-expression} it satisfies $\Gamma'[x, y] = 0$ if $g(x) = g(y)$. It is uniform because both $\Gamma'$ and $\beta$ are produced by summing over all elements of $G$. Specifically, for any $x, y \in S_g$ and $\rho \in G$, we have:
    \begin{align}
        \Gamma'[\rho(x), \rho(y)] &= \frac{\sum_{\pi \in G} \Gamma[\pi(\rho(x)), \pi(\rho(y))] \delta[\pi(\rho(x))]\delta[\pi(\rho(y))]}{\sqrt{\left(\sum_{\pi \in G}\delta_{\pi}[\rho(x)]^2\right)\left(\sum_{\pi \in G}\delta_{\pi}[\rho(y)]^2\right)}} \\
        &= \frac{\sum_{\pi \in G} \Gamma[\pi(x), \pi(y)] \delta[\pi(x)]\delta[\pi(y)]}{\sqrt{\left(\sum_{\pi \in G}\delta_{\pi}[x]^2\right)\left(\sum_{\pi \in G}\delta_{\pi}[y]^2\right)}} \\
        &= \Gamma'[x, y]
    \end{align}
    Since $\Gamma'$ is a uniform adversary matrix, we have:
    \begin{align}
        \frac{\|\Gamma\|}{\max_{i \in [n]} \|\Gamma \circ D^g_i\|} = SA(g) &\geq SA^U(g) \geq \frac{\|\Gamma'\|}{\max_{i \in [n]} \|\Gamma' \circ D^g_i\|} \label{eq:sa-sandwich-with-sau}
    \end{align}
    So if we can show that $\Gamma'$ achieves at least as good a bound as $\Gamma$, we have shown $SA(g)=SA^U(g)$. We first consider the numerator $\|\Gamma'\|$. The vector $\beta/\sqrt{|G|}$ has unit length, since:
    \begin{align}
        \|\beta\| &= \sqrt{\sum_{x \in S_g} \sum_{\pi \in G} \delta[\pi(x)]^2} = \sqrt{\sum_{\pi \in G} \|\delta\|^2} = \sqrt{|G|}
    \end{align}
    So we have:
    \begin{align}
        \|\Gamma'\| &\geq \frac{1}{|G|} \beta^T \Gamma' \beta \\
        &= \frac{1}{|G|} \sum_{x, y \in S_g} \frac{\sum_{\pi \in G} \Gamma[\pi(x), \pi(y)] \delta[\pi(x)] \delta[\pi(y)] \beta[x] \beta[y]}{\beta[x] \beta[y]} \\
        &= \frac{1}{|G|} \sum_{\pi \in G} \delta_{\pi}^T \Gamma_{\pi} \delta_{\pi} \\
        &= \|\Gamma\| \label{eq:gamma-prime-norm-at-least-gamma}
    \end{align}
    We now consider the denominator. Let $i \in [n]$ be arbitrary. By our normalization of $\Gamma$, we have $\|\Gamma \circ D^g_i\| \leq 1$. Therefore, the matrix $I - (\Gamma \circ D^g_i)$ is positive semi-definite. The same holds for all $\Gamma_{\pi}$:
    \begin{align}
        I - (\Gamma_{\pi} \circ D^g_i) \succeq 0 \qquad \forall \pi \in G \label{eq:i-minus-gamma-pi-psd}
    \end{align}
    The matrix $\delta_{\pi} \delta_{\pi}^T$ the outer product of a vector with itself, so it is positive semi-definite. The matrix $B$ can also be written as the outer product of a vector with itself, so it too is positive semi-definite. By the Schur Product Theorem, we have:
    \begin{align}
        (B \circ \delta_{\pi} \delta_{\pi}^T \circ I) - (B \circ \delta_{\pi} \delta_{\pi}^T \circ \Gamma_{\pi} \circ D^g_i) \succeq 0 \qquad \forall \pi \in G \label{eq:delta-delta-i-product-minus-gamma-pi-psd}
    \end{align}
    Summing \eqref{eq:delta-delta-i-product-minus-gamma-pi-psd} over all $\pi \in G$, we have:
    \begin{align}
        \left(I \circ B \circ \sum_{\pi \in G} \delta_{\pi} \delta_{\pi}^T\right) - \left(\left(B \circ \sum_{\pi \in G} \Gamma_{\pi} \circ \delta_{\pi} \delta_{\pi}^T\right) \circ D^g_i\right) \succeq 0 \label{eq:i-b-delta-product-minus-b-gamma-deltas-psd}
    \end{align}
    The subtracted term of \eqref{eq:i-b-delta-product-minus-b-gamma-deltas-psd} is precisely $\Gamma' \circ D^g_i$. We claim the term it is being subtracted from simplifies to $I$ by computing each of its entries:
    \begin{itemize}
        \item The $[x, y]$ entries for $x \neq y$ are all zero since $I[x, y] = 0$ for all such $x, y$.
        \item The $[x, y]$ entries for $x=y$ each simplify to:
        \begin{align}
            \left(I \circ B \circ \sum_{\pi \in G} \delta_{\pi} \delta_{\pi}^T\right)[x, x] &= 1 \cdot \frac{1}{\beta[x]^2} \cdot \sum_{\pi \in G} \delta_{\pi}[x]^2 = 1
        \end{align}
    \end{itemize}
    So \eqref{eq:i-b-delta-product-minus-b-gamma-deltas-psd} simplifies to:
    \begin{align}
        I - (\Gamma' \circ D^g_i) \succeq 0
    \end{align}
    Since $\Gamma' \circ D^g_i$ is non-negative, we therefore have $\|\Gamma' \circ D^g_i\| \leq 1$. By \eqref{eq:gamma-prime-norm-at-least-gamma}, we then have:
    \begin{align}
        \frac{\|\Gamma'\|}{\max_{i \in [n]} \|\Gamma' \circ D^g_i\|} &\geq \frac{\|\Gamma\|}{\max_{i \in [n]} \|\Gamma \circ D^g_i\|}
    \end{align}
    Which ties together the ends of \eqref{eq:sa-sandwich-with-sau}, completing the proof.
\end{proof}

\uniformAdversaryTile*

\begin{proof}
    We will show the stronger claim that the two ratios are equal for all $i \in [n]$. We first consider the numerators $\|\Gamma_g\|$ and $\|A\|$. Let $B \in \mathbb{R}^{|\Sigma| \times |\Sigma|}$ be the matrix with zeroes along the main diagonal and ones everywhere else. As $\Gamma_g$ can be expressed as the tensor product $B \otimes A$, its norm is the product of those norms:
    \begin{align}
        \|\Gamma_g\| &= \|B\| \cdot \|A\|
        \label{eq:gamma-g-norm-equals-b-times-a}
    \end{align}
    For the denominators, consider two cases of the entries of $(\Gamma_g \circ D_i^g)$ for $(\sigma_1, j_1), (\sigma_2, j_2) \in S_g$:
    \begin{itemize}
        \item If $\sigma_1 = \sigma_2$, then $\Gamma_g[(\sigma_1, j_1), (\sigma_2, j_2)] = 0$. Therefore, $D_i^g[(\sigma_1, j_1), (\sigma_2, j_2)]$ does not affect the result.
        \item If $\sigma_1 \neq \sigma_2$, then $D_i^g[(\sigma_1, j_1), (\sigma_2, j_2)] = D_i^A[j_1, j_2]$.
    \end{itemize}
    In both cases, we can replace $D_i^g[(\sigma_1, j_1), (\sigma_2, j_2)]$ with $D_i^A[j_1, j_2]$ without affecting $\Gamma_g \circ D_i^g$. Therefore:
    \begin{align}
        \|\Gamma_g \circ D_i^g\| &= \left\|\left(B \otimes A\right) \circ \left(B \otimes D_i^A\right)\right\| \\
        &= \left\|\left(B \circ B\right) \otimes \left(A \circ D_i^A\right)\right\| \\
        &= \|B\| \cdot \|A \circ D_i^A\|
    \end{align}
    Combined with \eqref{eq:gamma-g-norm-equals-b-times-a}, the factor of $\|B\|$ cancels out, leaving:
    \begin{align}
        \frac{\|\Gamma_g\|}{\|\Gamma_g \circ D_i^g\|} = \frac{\|A\|}{\|A \circ D_i^A\|}
    \end{align}
    Which completes the proof.
\end{proof}

\compositeMatrixNumeratorBound*

\begin{proof}
    We will show the equality \eqref{eq:gamma-h-norm-equals-gamma-f-times-A} by showing inequalities in both directions. We start by showing that $\|\Gamma_h\| \leq \|\Gamma_f\| \cdot \prod_{i=1}^k \|A_i\|$.

    For $a \in S_f$, let $X_a = \{x \in S_h \mid \tilde{x} = a\}$. Then consider an arbitrary unit-length vector $u \in \mathbb{R}^{S_h}$. We can express the product $u^T \Gamma_h u$ as:
    \begin{align}
        \label{eq:ustar-gammah-u-equals-big-nested-sums}
        u^T \Gamma_h u &= \sum_{a, b \in S_f} \Gamma_f[a, b] \cdot \sum_{x \in X_a, y \in X_b} \left(\bigotimes_{i=1}^k \Gamma_{g_i}^{(a_i, b_i)}\right)[x, y] u[x] u[y]
    \end{align}
    For fixed $a$ and $b$, the inner sum of \eqref{eq:ustar-gammah-u-equals-big-nested-sums} can itself be written as a matrix product. Specifically, letting $u_a$ denote the sub-vector of $u$ made up of the components in $X_a$ (and similarly for $u_b$), we have:
    \begin{align}
        \sum_{x \in X_a, y \in X_b} \left(\bigotimes_{i=1}^k \Gamma_{g_i}^{(a_i, b_i)}\right)[x, y] u[x] u[y] &= u_a^T \left(\bigotimes_{i=1}^k \Gamma_{g_i}^{(a_i, b_i)}\right) u_b
    \end{align}
    By the definition of $\Gamma_{g_i}^{(a_i, b_i)}$, we have $\|\Gamma_{g_i}^{(a_i, b_i)}\| = \|A_i\|$ for all $i$. Therefore, we can upper-bound the product by:
    \begin{align}
        u_a^T \left(\bigotimes_{i=1}^k \Gamma_{g_i}^{(a_i, b_i)}\right) u_b &\leq \|u_a\| \|u_b\| \prod_{i=1}^k \|A_i\|
    \end{align}
    Substituting back into \eqref{eq:ustar-gammah-u-equals-big-nested-sums} and using the fact that $\Gamma_f$ is non-negative, we get:
    \begin{align}
        u^T \Gamma_h u &\leq \sum_{a, b \in S_f} \Gamma_f[a, b] \|u_a\| \|u_b\| \prod_{i=1}^k \|A_i\| \\
        &= \left(\prod_{i=1}^k \|A_i\|\right) \cdot \sum_{a, b \in S_f} \Gamma_f[a, b] \|u_a\| \|u_b\|
        \label{eq:ustar-gammah-u-less-than-sum-of-f}
    \end{align}
    The sum in \eqref{eq:ustar-gammah-u-less-than-sum-of-f} is itself a matrix product: specifically, the vector with components $\|u_a\|$ for all $a \in S_f$, both left- and right-multiplied by $\Gamma_f$. Therefore:
    \begin{align}
        u^T \Gamma_h u &\leq \left(\prod_{i=1}^k \|A_i\|\right) \cdot \left(\sqrt{\sum_{a \in S_f} \|u_a\|^2}\right)^2 \|\Gamma_f\| \\
        &= \left(\prod_{i=1}^k \|A_i\|\right) \cdot \|u\|^2 \|\Gamma_f\| \\
        &= \|\Gamma_f\| \cdot \left(\prod_{i=1}^k \|A_i\|\right)
        \label{eq:ustar-gammah-u-less-than-final-result}
    \end{align}
    As \eqref{eq:ustar-gammah-u-less-than-final-result} holds for all unit-length vectors $u$, we have $\|\Gamma_h\| \leq \|\Gamma_f\| \cdot \prod_{i=1}^k \|A_i\|$ as desired.

    We now show the other direction: $\|\Gamma_h\| \geq \|\Gamma_f\| \cdot \prod_{i=1}^k \|A_i\|$. We do this by constructing an eigenvector with that eigenvalue. Let $\delta_f$ be a principal unit-length eigenvector of $\Gamma_f$. For each $i \in [k]$, let $\delta_{A_i}$ be a principal unit-length eigenvector of $A_i$. Then define $\delta_h \in \mathbb{R}^{S_h}$ to be the vector with components:
    \begin{align}
        \delta_h[x] &= \delta_f[\tilde{x}] \cdot \left(\bigotimes_{i=1}^k \delta_{A_i}\right)[x]
    \end{align}
    Note that because both $\delta_f$ and all of the $\delta_{A_i}$ have unit length, so does $\delta_h$.

    To work through the algebra, we will need the following claim:
    \begin{align}
        \label{eq:delta-ai-product-constant}
        \delta_{A_i}^T \Gamma_{g_i}^{(a, b)} \delta_{A_i} = \|A_i\| \quad \forall i \in [k] \text{ and } a, b \in \Sigma
    \end{align}
    We break this claim into two cases.
    \begin{itemize}
        \item Case $a=b$. Then $\Gamma_{g_i}^{(a, b)} = \|A_i\|I$, and because $\delta_{A_i}$ is a unit vector the claim immediately follows.
        \item Case $a \neq b$. Then $\Gamma_{g_i}^{(a, b)} = A_i$. As $\delta_{A_i}$ is a principal eigenvector of $A_i$, we have $\delta_{A_i}^T A_i \delta_{A_i} = \delta_{A_i}^T \delta_{A_i} \|A_i\| = \|A_i\|$.
    \end{itemize}
    We now evaluate the product $\delta_h^T \Gamma_h \delta_h$.
    \begin{align}
        \delta_h^T \Gamma_h \delta_h &= \sum_{a, b \in S_f} \left(\delta_f[a] \cdot \bigotimes_{i=1}^k \delta_{A_i}\right)^T \left(\Gamma_f[a, b] \cdot \bigotimes_{i=1}^k \Gamma_{g_i}^{(a_i, b_i)}\right) \left(\delta_f[a] \cdot \bigotimes_{i=1}^k \delta_{A_i}\right) \\
        &= \sum_{a, b \in S_f} \left(\delta_f[a] \Gamma_f[a, b] \delta_f[b]\right) \cdot \left(\bigotimes_{i=1}^k \delta_{A_i}\right)^T \left(\bigotimes_{i=1}^k \Gamma_{g_i}^{(a_i, b_i)}\right) \left(\bigotimes_{i=1}^k \delta_{A_i}\right) \\
        &= \sum_{a, b \in S_f} \delta_f[a] \Gamma_f[a, b] \delta_f[b] \prod_{i=1}^k \|A_i\| \\
        &= \|\Gamma_f\| \cdot \prod_{i=1}^k \|A_i\|
    \end{align}
    Where the last equality follows because $\delta_f$ is a principal eigenvector of $\Gamma_f$. Therefore, we have $\|\Gamma_h\| \geq \|\Gamma_f\| \cdot \prod_{i=1}^k \|A_i\|$, completing the proof.
\end{proof}

\compositeMatrixDenominatorBound*

\begin{proof}
    We will apply \cref{lem:composite-function-matrix-construction}. To do so, we need to show that $\Gamma_h \circ D^h_i$ is the composition adversary matrix generated by $\Gamma_f \circ D^f_p$, the $A_d$ for $d \neq p$, and $A_p \circ D^{A_p}_q$. To that end, for $a, b \in \Sigma$ let $\Gamma^{(a, b)}_{(g_p, q)}$ be the matrix:
    \begin{align}
        \Gamma_{(g_p, q)}^{(a, b)} &= \begin{cases}
            \|A_p \circ D^{A_p}_q\|I & \text{if $a=b$} \\
            A_p \circ D^{A_p}_q & \text{otherwise}
        \end{cases}
    \end{align}
    Then we claim that, for all $x, y \in S_h$:
    \begin{align}
        \label{eq:composite-matrix-respects-D-conclusion}
        &(\Gamma_h \circ D^h_i)[x, y] \\
        =& (\Gamma_f \circ D^f_p)[\tilde{x}, \tilde{y}] \cdot \left(\left(\bigotimes_{d < p} \Gamma_{g_d}^{(\tilde{x}_d, \tilde{y}_d)}\right) \otimes \Gamma_{(g_p, q)}^{(\tilde{x}_p, \tilde{y}_p)} \otimes \left(\bigotimes_{d > p} \Gamma_{g_d}^{(\tilde{x}_d, \tilde{y}_d)}\right)\right)[x, y] \notag
    \end{align}
    We will show \eqref{eq:composite-matrix-respects-D-conclusion} by splitting into several cases based on $x$ and $y$ and their relation to $i$.
    \begin{itemize}
        \item \textbf{Case $x_i=y_i$ and $\tilde{x}_p = \tilde{y}_p$.} Then $D^h_i[x, y] = 0$ and $D_p^f[\tilde{x}, \tilde{y}] = 0$, so both sides of \eqref{eq:composite-matrix-respects-D-conclusion} are zero.
        \item \textbf{Case $x_i=y_i$ and $\tilde{x}_p \neq \tilde{y}_p$.} As in the previous case, we have $D^h_i[x, y] = 0$. Since $x_i = x^p_q$ and $y_i = y^p_q$, we also have $D^{A_p}_q[x^p, y^p] = 0$. Since $\tilde{x}_p \neq \tilde{y}_p$ we have $\Gamma^{(\tilde{x}_p, \tilde{y}_p)}_{(g_p, q)} = A_p \circ D^{A_p}_q$, so both sides of \eqref{eq:composite-matrix-respects-D-conclusion} are zero.
        \item \textbf{Case $x_i \neq y_i$ and $\tilde{x}_p = \tilde{y}_p$.} Then $D^f_p[\tilde{x}, \tilde{y}] = 0$. One component of the tensor product making up $\Gamma_h$ is $\Gamma_{g_p}^{(\tilde{x}_p, \tilde{y}_p)}$, which equals $\|A_p\|I$ because $\tilde{x}_p = \tilde{y}_p$. Since $x_i \neq y_i$, we have $x^p \neq y^p$, so $\Gamma_h[x, y] = 0$ and again both sides of \eqref{eq:composite-matrix-respects-D-conclusion} are zero.
        \item \textbf{Case $x_i \neq y_i$ and $\tilde{x}_p \neq \tilde{y}_p$.} Then $D_i^h[x, y] = D^f_p[\tilde{x}, \tilde{y}] = 1$, so they can both be ignored. Since $\tilde{x}_p \neq \tilde{y}_p$, we have $\Gamma_{(g_p, q)}^{(\tilde{x}_p, \tilde{y}_p)} = A_p \circ D_q^{A_p}$. But we also have $x_i \neq y_i$, so $x^p \neq y^p$ and therefore:
        \begin{align}
            \Gamma^{(\tilde{x}_p, \tilde{y}_p)}_{(g_p, q)}[x^p, y^p] = (A_p \circ D^{A_p}_q)[x^p, y^p] = A_p[x^p, y^p] = \Gamma^{(\tilde{x}_p, \tilde{y}_p)}_{g_p}[x^p, y^p]\,.
        \end{align}
        Therefore, both sides of \eqref{eq:composite-matrix-respects-D-conclusion} reduce to $\Gamma_h[x, y]$.
    \end{itemize}
    In all cases \eqref{eq:composite-matrix-respects-D-conclusion} holds. Applying \cref{lem:composite-function-matrix-construction} then completes the proof.
\end{proof}

\section{Ordered Search Deferred Proofs} \label{app:ordered-search-deferred-proofs}

In this section we give the proofs omitted from \cref{sec:ordered-search}. We start with a technical lemma that is used for both ordered search variants.

\begin{lemma}
    \label{lem:hilbert-matrix-bounds}
    For $m \in \mathbb{N}$, let $A_m \in \mathbb{R}^{m \times m}$ denote the matrix with entries $A_m[i, j] = \frac{1}{|i-j|+1}$. For $i \in [m]$, let $D_i^{A_m}$ denote the matrix with entries:
    \begin{align}
        D_i^{A_m}[x, y] &= \begin{cases}
            1 & \text{if $x \leq i \leq y$ or $y \leq i \leq x$} \\
            0 & \text{otherwise}
        \end{cases}
    \end{align}
    Then we have:
    \begin{align}
        \|A_m\| \in \Omega(\log m) \qquad \text{and} \qquad \|A_m \circ D_i^{A_m}\| \leq 2\pi
    \end{align}
\end{lemma}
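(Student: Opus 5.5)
For the lower bound $\|A_m\| \in \Omega(\log m)$ I will use the Rayleigh quotient with the all-ones unit vector $u = \mathbf{1}/\sqrt{m}$:
\begin{align}
    \|A_m\| \geq u^T A_m u = \frac{1}{m}\sum_{i,j \in [m]} \frac{1}{|i-j|+1} \geq \frac{1}{m}\sum_{i=1}^{\lceil m/2\rceil} \sum_{d=0}^{\lfloor m/2 \rfloor - 1} \frac{1}{d+1},
\end{align}
where the second inequality keeps only the rows $i \leq \lceil m/2 \rceil$ and the columns $j = i+d$, which all lie inside $[m]$. Each inner sum is a harmonic number $H_{\lfloor m/2\rfloor} \geq \ln(m/2)$. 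There are at least $m/2$ rows, so the whole expression is at least $\tfrac12 \ln(m/2) \in \Omega(\log m)$. For very small $m$ the bound is trivial, since the diagonal already gives $\|A_m\| \geq 1$.

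For the upper bound, note that $A_m \circ D_i^{A_m}$ has support only on pairs $(x,y)$ that straddle $i$. After splitting off the diagonal entry $(i,i)$, which contributes a norm of at most $1$, the remaining support splits into two blocks. The first is $x < i \leq y$ and the second is its transpose $y \leq i < x$; entries with $x=i$ or $y=i$ can be put in either block, provided each is counted exactly once. The matrix is therefore a symmetric matrix of the form $\begin{pmatrix} 0 & M \\ M^T & 0\end{pmatrix}$ plus a diagonal piece, so its norm is at most $\|M\| + 1$, or even just $\|M\|$ if the boundary row and column are absorbed into $M$. Index $M$ by $p = i - x \geq 0$ and $q = y - i \geq 0$, with the boundary handled as noted. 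Then $|x-y|+1 = p+q+1$, so $M$ is a submatrix of the infinite Hilbert matrix $H[p,q] = 1/(p+q+1)$ with $p,q \geq 0$. Submatrices of a nonnegative matrix have norm at most that of the full matrix, and Hilbert's inequality gives $\|H\| = \pi$. Hence $\|A_m \circ D_i^{A_m}\| \leq \|M\| \leq \pi$ when $M$ includes the boundary row and column. The statement only asks for $2\pi$, so I would use the cruder decomposition that covers the support by two overlapping Hilbert-type blocks (the block $x \leq i \leq y$ and its transpose) and apply the triangle inequality, which gives $\pi + \pi = 2\pi$. The triangle inequality with nonnegative entrywise domination is valid here because the norm is monotone under entrywise domination for nonnegative matrices.

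The main obstacle is the bookkeeping in this last step, because the entrywise sum of the two blocks double-counts the diagonal entry $(i,i)$. Entrywise domination of a nonnegative matrix by a sum of nonnegative matrices still bounds its spectral norm by the norm of the sum, so the overcount is harmless. Each block, after reindexing by $p = i-x$ and $q = y-i$, is exactly a finite section of the Hilbert matrix. It then remains to cite Hilbert's inequality $\sum_{p,q\geq 0} \frac{a_p b_q}{p+q+1} \leq \pi \|a\|\|b\|$.
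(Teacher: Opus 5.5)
Your proof is correct and follows the paper's argument: the all-ones Rayleigh quotient gives the harmonic-sum lower bound. For the upper bound, you cover the support of $A_m\circ D_i^{A_m}$ by the block $x\le i\le y$ and its transpose, each a section of the Hilbert matrix of norm at most $\pi$, which gives $2\pi$. The paper runs this through a nonnegative principal eigenvector rather than entrywise domination plus the triangle inequality, but it is the same decomposition.

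Your side remark that the norm is even at most $\pi$ is not justified, though. For $1<i<m$ and $x<i<y$, the entries $(x,i)$, $(i,y)$, $(x,y)$ all lie in the support, so the off-diagonal part is not a bipartite block matrix under any split. Absorbing row and column $i$ into $M$ counts index $i$ twice, so Hilbert's inequality only yields $\delta^T(A_m\circ D_i^{A_m})\delta\le\pi+(\pi-1)\delta_i^2$ for unit $\delta\ge 0$. This is harmless here, since only $2\pi$ is needed.
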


The notation $D^{A_m}_i$ deliberately matches that of \cref{def:tile-distinguisher-matrix}, since they will later be distinguisher matrices for $HSOS_m$.

\begin{proof}
    We first show the $\Omega(\log m)$ lower bound on $\|A_m\|$ by computing $\mathbbold{1}^T A_m \mathbbold{1}$, where $\mathbbold{1}$ denotes the $m$-length all-ones vector:
    \begin{align}
        \mathbbold{1}^T A_m \mathbbold{1} &= \sum_{i=1}^m \sum_{j=1}^m \frac{1}{|i-j|+1} \\
        &\geq \sum_{i=1}^m \sum_{k=1}^{\max \{i, m-i+1\}} \frac{1}{k} \\
        &\geq \sum_{i=1}^m \sum_{k=1}^{\lceil m/2 \rceil} \frac{1}{k} \\
        &= \|\mathbbold{1}\|^2 \cdot \Theta(\log m)
    \end{align}
    Therefore, we have $\|A_m\| \in \Omega(\log m)$.

    For the element-wise product $A_m \circ D^{A_m}_i$, we first note that it has the form:
    \begin{align}
        A_m \circ D^{A_m}_i &= \begin{pmatrix}
            0 & \cdots & 0 & \frac{1}{i} & \frac{1}{i+1} & \cdots & \frac{1}{m} \\
            \vdots & \ddots & \vdots & \vdots & \vdots & \ddots & \vdots \\
            0 & \cdots & 0 & \frac{1}{2} & \frac{1}{3} & \cdots & \frac{1}{m-i+2} \\
            \frac{1}{i} & \cdots & \frac{1}{2} & 1 & \frac{1}{2} & \cdots & \frac{1}{m-i+1} \\
            \frac{1}{i+1} & \cdots & \frac{1}{3} & \frac{1}{2} & 0 & \cdots & 0 \\
            \vdots & \ddots & \vdots & \vdots & \vdots & \ddots & \vdots \\
            \frac{1}{m} & \cdots & \frac{1}{m-i+2} & \frac{1}{m-i+1} & 0 & \cdots & 0
        \end{pmatrix}
    \end{align}
    Where the central $1$ is in the $i$th row and column. 
    The upper-right and lower-left quadrants (with one row and column of overlap) are rotated submatrices of a Hilbert matrix $H_m$, defined by:
    \begin{align}
        H_m &= \begin{pmatrix}
            1 & \frac{1}{2} & \cdots & \frac{1}{m} \\
            \frac{1}{2} & \frac{1}{3} & \cdots & \frac{1}{m+1} \\
            \vdots & \vdots & \ddots & \vdots \\
            \frac{1}{m} & \frac{1}{m+1} & \cdots & \frac{1}{2m-1}
        \end{pmatrix}
    \end{align}
    Because $A_m \circ D_i^{A_m}$ is nonnegative and symmetric, it has a nonnegative principal eigenvector $\delta$. Let $\delta_{\ell} \in \mathbb{R}^{i}$ consist of the first $i$ entries of $\delta$, but in reverse order. Let $\delta_r \in \mathbb{R}^{m-i+1}$ consist of the last $m-i+1$ entries of $\delta$. Let $H_{i, m-i+1} \in \mathbb{R}^{i \times (m-i+1)}$ denote the top-left submatrix of $H_m$ of the appropriate dimensions. Then, upper-bounding $\delta^T(A_m \circ D_i^{A_m})\delta$ by summing the contributions from its upper-right and lower-left quadrants:
    \begin{align}
        \delta^T (A_m \circ D_i^{A_m}) \delta &\leq \delta_{\ell}^T H_{i, m-i+1} \delta_r + \delta_r^T H_{i, m-i+1}^T \delta_{\ell} = 2 \delta_{\ell}^T H_{i, m-i+1} \delta_r
    \end{align}
    Extend $\delta_{\ell}$ into $u \in \mathbb{R}^m$ by appending the last $m-i$ entries of $\delta$. Similarly, extend $\delta_r$ into $v \in \mathbb{R}^m$ by appending the first $i-1$ entries of $\delta$. Because $u$ and $v$ are permutations of $\delta$, we have $\|u\|=\|v\|=1$. Then:
    \begin{align}
        2 \delta_{\ell}^T H_{i, m-i+1} \delta_r &\leq 2 u^T H_m v \leq 2 \|u\| \cdot \|H_m\| \cdot \|v\| = 2\|H_m\|
    \end{align}
   So since $\|H_m\| \leq \pi$ (see, for example, \cite{choi1980hilbert}), we have $\|A_m \circ D_i^{A_m}\| \leq 2\pi$.
\end{proof}

We now prove each of our two lower bounds.

\hsosLowerBound*

\begin{proof}
    The function $HSOS_m$ is a generalized search function under the labeling where $(\sigma, j)$ refers to $\rightarrow^{j-1}\sigma \leftarrow^{m-j}$. We can then use the tile $A_m$ with entries:
    \begin{align}
        A_m[i, j] &= \frac{1}{|i-j|+1}
    \end{align}
    For each $i \in [m]$ and any two instances $(\sigma_1, x)$ and $(\sigma_2, y)$ for $\sigma_1 \neq \sigma_2$, the values of $(\sigma_1, x)_i$ and $(\sigma_2, y)_i$ differ if and only if character $i$ lies weakly between characters $x$ and $y$, so the distinguisher matrices for $A_m$ are:
    \begin{align}
        D^{A_m}_i[x, y] &= \begin{cases}
            1 & \text{if $x \leq i \leq y$ or $y \leq i \leq x$} \\
            0 & \text{otherwise}
        \end{cases}
    \end{align}
    These are exactly the same matrices as in \cref{lem:hilbert-matrix-bounds}, so we have $\|A_m\| \in \Omega(\log m)$ and $\|A_m \circ D^{A_m}_i\| \leq 2\pi$ for all $i \in [m]$. Therefore, \cref{lem:uniform-adversary-bound-in-terms-of-tile} gives us:
    \begin{align}
        SA(HSOS_m) &\geq \min_{i \in [m]} \frac{\|A_m\|}{\|A_m \circ D_i^{A_m}\|} \in \Omega(\log m)
    \end{align}
\end{proof}

\osLowerBound*

\begin{proof}
    We label each instance of $OS_m$ by its answer, so that $i \in [m]$ refers to $\uparrow^{i-1} * \downarrow^{m-i}$. We use the adversary matrix $\Gamma_{OS_m} \in \mathbb{R}^{m \times m}$ with entries:
    \begin{align}
        \Gamma_{OS_m}[x, y] &= \begin{cases}
            0 & \text{if $x=y$} \\
            \frac{1}{|x-y|+1} & \text{otherwise}
        \end{cases}
    \end{align}
    Let $A_m$ denote the matrix from \cref{lem:hilbert-matrix-bounds} with entries:
    \begin{align}
        A_m[x, y] &= \frac{1}{|x-y|+1}
    \end{align}
    We then have $\Gamma_{OS_m} = A_m - I$. Therefore, since $\|A_m\| \in \Omega(\log m)$:
    \begin{align}
        \|\Gamma_{OS_m}\| &= \|A_m - I\| \geq \|A_m\| - 1 \in \Omega(\log m) \label{eq:gamma-os-m-norm-greater-than-a-minus-one}
    \end{align}
    For each $i \in [m]$, the distinguisher matrix $D^{OS_m}_i$ has entries of the form:
    \begin{align}
        D^{OS_m}_i[x, y] &= \begin{cases}
            1 & \text{if $x \leq i \leq y$ or $y \leq i \leq x$} \\
            0 & \text{otherwise}
        \end{cases}
    \end{align}
    These are precisely the matrices $D_i^{A_m}$ from \cref{lem:hilbert-matrix-bounds}, so we have:
    \begin{align}
        \|\Gamma_{OS_m} \circ D_i^{OS_m}\| &\leq \|A_m \circ D_i^{A_m} \| \leq 2\pi \label{eq:gamma-os-m-times-d-norm-less-than-2-pi}
    \end{align}
    Combining \eqref{eq:gamma-os-m-norm-greater-than-a-minus-one} and \eqref{eq:gamma-os-m-times-d-norm-less-than-2-pi}, we have:
    \begin{align}
        SA(OS_m) &\geq \min_{i \in [m]} \frac{\|\Gamma^{OS_m}\|}{\|\Gamma^{OS_m} \circ D_i^{A_m}\|} \geq \min_{i \in [m]} \frac{\|A_m\| - 1}{\|A_m \circ D_i^{A_m}\|} \in \Omega(\log m)\,,
    \end{align}
    Which completes the proof.
\end{proof}

\section{Reduction to $TARSKI(n, 2)$ Deferred Proofs} \label{app:reduction-deferred-proofs}

\straightLineValidSpine*

\begin{proof}
    We first show that it starts at $u$ and ends at $v$.  The interpolation that defines $(L(u, v, u_1 + u_2))_1$ reduces to $(L(u, v, u_1+u_2))_1=u_1$ and therefore $(L(u, v, u_1+u_2))_2=u_2$. On the other end, we get $(L(u, v, v_1+v_2))_1=v_1$ and therefore $(L(u, v, v_1+v_2))_2=v_2$.
    
    We now show that it is connected and monotone. We do this by showing for arbitrary $c \in \{1, 2, \ldots, 2n-1\}$ that $L(u, v, c+1) - L(u, v, c)$ is either $(1, 0)$ or $(0, 1)$. For $\alpha \in \mathbb{R}$, let $x(u, v, \alpha) = u_1 \frac{v_1+v_2-\alpha}{v_1+v_2-u_1-u_2} + v_1 \frac{\alpha-u_1-u_2}{v_1+v_2-u_1-u_2}$. Then since $u \leq v$, we have:
    \begin{align}
        x(u, v, c+1) - x(u, v, c) &= \frac{v_1-u_1}{(v_1-u_1) + (v_2-u_2)} \leq 1 \label{eq:x-u-v-c-diff-less-than-one}
    \end{align}
    Since $u_1 \leq v_1$ and $x(u, v, \alpha)$ uses $\alpha$ to linearly interpolate between $u_1$ and $v_1$, we also have that $x$ is nondecreasing in $\alpha$, so:
    \begin{align}
        x(u, v, c+1) - x(u, v, c) &\geq 0 \label{eq:x-u-v-c-diff-nonnegative}
    \end{align}
    The coordinates $(L(u, v, c))_1$ and $(L(u, v, c+1))_1$ are chosen by rounding $x(u, v, c)$ and $x(u, v, c+1)$, consistently breaking ties towards higher values. Combining the bounds of \eqref{eq:x-u-v-c-diff-less-than-one} and \eqref{eq:x-u-v-c-diff-nonnegative}, therefore, we have:
    \begin{align}
        (L(u, v, c+1))_1 - (L(u, v, c))_1 \in \{0, 1\}
    \end{align}
    But the coordinate sum of $L(u, v, c+1)$ is exactly one greater than the coordinate sum of $L(u, v, c)$, so the possible $x$-coordinate differences of $0$ and $1$ correspond to $y$-coordinate differences of $1$ and $0$, respectively. Therefore, we have:
    \begin{align}
        L(u, v, c+1) - L(u, v, c) \in \{(0, 1), (1, 0)\}
    \end{align}
    Which completes the proof.
\end{proof}

\straightLineMonotone*

\begin{proof}
    Let $x(u, v, c) = u_1 \frac{v_1+v_2-c}{v_1+v_2-u_1-u_2} + v_1 \frac{c-u_1-u_2}{v_1+v_2-u_1-u_2}$. Because we only consider $u$ such that $u_1+u_2=b$ and $v$ such that $v_1 + v_2=d$, this formula simplifies to:
    \begin{align}
        x(u, v, c) &= u_1 \frac{d - c}{d - b} + v_1 \frac{c - b}{d - b}
    \end{align}
    Now the only dependence of $x(u, v, c)$ on $u$ and $v$ comes from the single instances of $u_1$ and $v_1$. If $c \leq d$, then the coefficient of $u_1$ is nonnegative, so $x(u, v, c)$ is nondecreasing in $u_1$. Similarly, if $c \geq b$, then the coefficient of $v_1$ is nonnegative, so $x(u, v, c)$ is nondecreasing in $v_1$. Rounding $x(u, v, c)$ to get $(L(u, v, c))_1$ preserves these monotonicities, which completes the proof. 
\end{proof}

\pointSlidingContiguousRegions*

\begin{proof}
    First consider the case where $u \in S_1$ is fixed and $v \in S_2$ varies. By \cref{lem:straight-line-monotone} the value of $(L(u, v, c))_1$ is monotone increasing in $v_1$, so the range of values of $v_1$ for which $(L(u, v, c))_1 < x$, $(L(u, v, c))_1 = x$, and $(L(u, v, c))_1 > x$ are each contiguous (or possibly empty). The dividing lines between these regions can be taken as $\delta_1$ and $\delta_4$. By \cref{lem:straight-line-valid-spine}, for $v_1 \in (\delta_1, \delta_4]$ we have:
    \begin{itemize}
        \item Either $L(u, v, c+1) = L(u, v, c)+(0, 1)$ or $L(u, v, c+1) = L(u, v, c)+(1, 0)$.
        \item Either $L(u, v, c-1) = L(u, v, c)+(-1, 0)$ or $L(u, v, c-1) = L(u, v, c)+(0, -1)$.
    \end{itemize}
    So by applying \cref{lem:straight-line-monotone} to $(L(u, v, c+1))_1$ and $(L(u, v, c-1))_1$, we get thresholds $\delta_2$ and $\delta_3$ that divide these contiguous regions too.

    Now consider the case where $v \in S_2$ is fixed and $u \in S_1$ varies. The same reasoning applies: as $u_1$ varies, the values of $(L(u, v, c))_1$, $(L(u, v, c+1))_1$, and $(L(u, v, c-1))_1$ vary in parallel by \cref{lem:straight-line-monotone}, so we get thresholds $\delta_1, \delta_2, \delta_3, \delta_4$ between contiguous regions.
\end{proof}

\regionBoundarySizeN*

\begin{proof}
    We will show \eqref{eq:B-elements-characterization} first, which then implies \eqref{eq:B-set-size-equals-n}. There are three constraints that define points $(x, y) \in B^{2(n-1)i+n+1}$:
    \begin{enumerate}[(i)]
        \item $x+y=2(n-1)i+n+1$,
        \item $|x-y| \leq n$, and
        \item $(x, y) \in [n']^2=[n(n^2+n-1)]^2$.
    \end{enumerate}
    The points that satisfy (i) are precisely those of the form:
    \begin{align}
        B^{2(n-1)i+n+1} \subseteq \left\{\bigl((n-1)i+j, (n-1)i+n+1-j\bigr) \mid j \in \mathbb{Z} \right\} \label{eq:B-set-subset-j-parameterization}
    \end{align}
    So all that remains is to show that the only values of $j$ that satisfy the other constraints are $j \in [n]$. In fact, (ii) by itself makes this restriction, since applying it to points of the form in \eqref{eq:B-set-subset-j-parameterization} gives:
    \begin{align}
        |2j - (n+1)| \leq n-1 \label{eq:2j-minus-n-minus-1-absolute-value}
    \end{align}
    Or, equivalently after breaking up \eqref{eq:2j-minus-n-minus-1-absolute-value} into two linear inequalities:
    \begin{align}
        j \leq n \quad \text{and} \quad 1 \leq j
    \end{align}
    All that remains is to show that these points satisfy (iii). Without loss of generality we show that their $x$-coordinates lie in $[n']$, since their $y$-coordinates cover the same range. Since $i \geq 0$ and $j \geq 1$, we have:
    \begin{align}
        (n-1)i+j &\geq 1
    \end{align}
    And since $i \leq n(n+2)$ and $j \leq n$, we have:
    \begin{align}
        (n-1)i+j &\leq (n-1)n(n+2)+n = n(n^2+n-1)
    \end{align}
    Which completes the proof.
\end{proof}

\regionShapeRectangular*

\begin{proof}
    We claim that the desired $\ell$ is:
    \begin{align}
        \ell &= w_1 - \left\lfloor \frac{w_1 + w_2 - (n+1)}{2} \right\rceil
    \end{align}
    We first show that $\ell \in [n]$. To do so, let $a, b \in [n]$ be such that $w$ is on the grid line from $B^{\textsc{Low}(i, j)}_{a}$ to $B^{\textsc{High}(i, j)}_b$. We then have $w_1 = (L(B^{\textsc{Low}(i, j)}_{a}, B^{\textsc{High}(i, j)}_b, w_1+w_2))_1$. Expanding out the definition of $L$ and the expressions for $B^{\textsc{Low}(i, j)}_{a}$ and $B^{\textsc{High}(i, j)}_b$ from \cref{lem:region-boundary-size-n}:
    \begin{align}
        w_1 =& \left\lfloor \left(\frac{\textsc{Low}(i, j)-(n+1)}{2}+a\right)\frac{\textsc{High}(i, j)-(w_1+w_2)}{\textsc{High}(i, j)-\textsc{Low}(i, j)}\right. \notag \\
        &+\left. \left(\frac{\textsc{High}(i, j)-(n+1)}{2}+b\right)\frac{(w_1+w_2)-\textsc{Low}(i, j)}{\textsc{High}(i, j) - \textsc{Low}(i, j)}\right\rceil \\
        =& \left\lfloor \frac{w_1+w_2-(n+1)}{2} \right. \notag \\
        &+ \left.a\frac{\textsc{High}(i, j)-(w_1+w_2)}{\textsc{High}(i, j)-\textsc{Low}(i, j)} + b\frac{(w_1+w_2) - \textsc{Low}(i, j)}{\textsc{High}(i, j)-\textsc{Low}(i, j)}\right\rceil \label{eq:w1-expression-with-a-b-average}
    \end{align}
    By \cref{lem:straight-line-valid-spine}, vertex $w$ satisfies $B^{\textsc{Low}(i, j)}_a \leq w \leq B^{\textsc{High}(i, j)}_b$, so the second line of \eqref{eq:w1-expression-with-a-b-average} is a weighted average of $a$ and $b$. Therefore, it lies in the interval $[1, n]$. Applying each of these inequalities in turn, we get:
    \begin{align}
        w_1 \geq \left\lfloor \frac{w_1+w_2-(n+1)}{2} \right\rceil + 1 \quad \text{and} \quad w_1 \leq \left\lfloor \frac{w_1+w_2-(n+1)}{2} \right\rceil + n \notag
    \end{align}
    Which combine to give $\ell \in [n]$.

    We now show that this choice of $\ell$ puts $w$ on the correct grid lines. Let $p, q \in \mathbb{N}$ be such that $\textsc{Low}(c_1, d_1) = 2(n-1)p+n+1$ and $\textsc{High}(c_2, d_2) = 2(n-1)q+n+1$. We claim that:
    \begin{align}
        L(B^{2(n-1)p+n+1}_{\ell}, B^{2(n-1)q+n+1}_{\ell}, w_1+w_2) = w
    \end{align}
    So $w$ would be on this grid line. To do so, we explicitly compute the $x$-coordinate from its definition:
    \begin{align}
        & (L(B^{2(n-1)p+n+1}_{\ell}, B^{2(n-1)q+n+1}_{\ell}, w_1+w_2))_1 \\
        =& \left\lfloor ((n-1)p+\ell) \frac{2(n-1)q+n+1-(w_1+w_2)}{2(q-p)(n-1)} \right. \notag \\
        &+ \left.((n-1)q+\ell)\frac{(w_1+w_2)-2(n-1)p-n-1}{2(q-p)(n-1)}\right\rceil \\
        =& \left\lfloor \ell + \frac{w_1+w_2 - (n+1)}{2} \right\rceil = w_1
    \end{align}
    And since the $y$-coordinate is chosen so that the coordinates sum to $w_1+w_2$, its $y$-coordinate is $w_2$ as desired.
\end{proof} 

\tarskiNOSReduction*

\begin{proof}
    Let $n' = n(n^2+n-1)$. We will consider only instances of $TARSKI(n', 2)$ in the set $\mathcal{T}(n')$ from \cref{def:tarski-instances}. Each such function is parameterized by a choice of $C \in [n]^{n+1}$ and $i \in [n+1]$. These parameters are also enough to specify an instance of $NOS_{n+1, n}$: the answer locations of the $n+1$ inner $HSOS_n$ instances can be given by $C$, and the answer to the outer $OS_{n+1}$ instance can be given by $i$. Using this one-to-one correspondence, let $\Gamma_{NOS}$ be an optimal adversary matrix for $NOS_{n+1, n}$ and let $\Gamma_{\text{Tarski}}=\Gamma_{NOS}$ be an adversary matrix for the $TARSKI(n', 2)$ instances in $\mathcal{T}(n')$. This is a valid adversary matrix because whenever two $TARSKI$ instances have the same answer, they must have the same $i$, so the corresponding $NOS_{n+1, n}$ instances also have the same answer.

    Because we use exactly the same matrix, we have $\|\Gamma_{\text{Tarski}}\| = \|\Gamma_{NOS}\|$. All that remains is to analyze the denominator of the spectral adversary method.

    For each query location $(x, y)$, let $D^{\text{Tarski}}_{(x, y)}$ be the distinsuigher matrix:
    \begin{equation}
        D^{\text{Tarski}}_{(x, y)}[f, g] = \begin{cases}
            1 & \text{if $f(x, y) \neq g(x, y)$} \\
            0 & \text{otherwise}
        \end{cases}
    \end{equation}
    For instances $f$ and $g$ of $NOS_{n+1, n}$, arbitrary $i \in [n+1]$, and arbitrary $j \in [n]$, let $D^{NOS}_{i, j}$ be the distinguisher matrix:
    \begin{equation}
        D^{NOS}_{i, j}[f, g] = \begin{cases}
            1 & \text{if the $j$th character of the $i$th block of $f$ and $g$ differ} \\
            0 & \text{otherwise}
        \end{cases}
    \end{equation}

    The matrices $D^{\text{Tarski}}_{(x, y)}$ and $D^{NOS}_{i, j}$ are very closely connected when $(x, y)$ is on a chunk boundary. Specifically, we claim:
    \begin{align}
        D^{\text{Tarski}}_{B^{\textsc{Bound}(i)}_j} = D^{NOS}_{i, j} \quad \forall i \in [n+1], j \in [n] \label{eq:tarski-D-equals-NOS-for-chunk-boundary}
    \end{align}

    We prove this claim by making a one-to-one correspondence between the responses to querying $(i, j)$ in $NOS_{n+1, n}$ and querying $B^{\textsc{Bound}(i)}_j$ in $TARSKI(n', 2)$. Let $C \in [n]^{n+1}$ and $\ell \in [n+1]$ be the parameters of an element of $\mathcal{T}(n')$ and its corresponding instance of $NOS_{n+1, n}$. We consider several cases based on the roles $(i, j)$ plays in these instances.
    \begin{itemize}
        \item If $j=C_i$ and $i=\ell$, then in $NOS_{n+1, n}$ this query location is the uniquely identifiable $*$. By the construction of the chunked spine, the fixed point of the $TARSKI$ instance is $B^{\textsc{Bound}(i)}_j$, which is also uniquely identifiable. Both values are unique in their respective problems, so they are distinguished from all other query locations.
        \item If $j=C_i$ but $i \neq \ell$, then in the $NOS_{n+1, n}$ instance the $(i, j)$ query location is either $\uparrow$ or $\downarrow$ depending on whether $\ell>i$ or $\ell<i$. By the construction of the chunked spine, the vertex $B^{\textsc{Bound}(i)}_j$ is on the spine. The spine leading up to it is a grid line at a predictable angle (either from $(1, 1)$ if $i=1$ or from $B^{\textsc{High}(i-1, n+1)}_j$ if $i > 1$) and so is the spine following it (either towards $B^{\textsc{Low}(i, 2)}_j$ if $i<n+1$ or towards $(n', n')$ if $i=n+1$). Therefore, the value at $B^{\textsc{Bound}(i)}_j$ is determined entirely by whether $\ell>i$ or $\ell<i$, exactly like the $(i, j)$ query location in $NOS_{n+1, n}$. Both possible values are also different from the $i=\ell$ case.
        \item If $j \neq C_i$, then in the $NOS_{n+1, n}$ instance the $(i, j)$ query location is either $\rightarrow$ or $\leftarrow$ depending on whether $j<C_i$ or $j>C_i$. The chunked spine passes through $B^{\textsc{Bound}(i)}_{C_i}$ rather than $B^{\textsc{Bound}(i)}_j$, so the value at $B^{\textsc{Bound}(i)}_j$ is entirely determined on whether it is up-left or down-right of $B^{\textsc{Bound}(i)}_{C_i}$. These cases correspond exactly to $j<C_i$ and $j>C_i$, and are distinct from the values in all $j=C_i$ cases.
    \end{itemize}
    Therefore, the $TARSKI$ instances distinguished by querying $B^{\textsc{Bound}(i)}_j$ correspond exactly to the $NOS_{n+1, n}$ instances distinguished by querying $(i, j)$, so $D^{\text{Tarski}}_{B^{\textsc{Bound}(i)}_j} = D^{NOS}_{i, j}$, proving \eqref{eq:tarski-D-equals-NOS-for-chunk-boundary}.

    We now return to the spectral adversary method by claiming:
    \begin{equation}
        \|\Gamma_{\text{Tarski}} \circ D^{\text{Tarski}}_{(x, y)}\| \leq 7 \cdot \max_{i, j} \|\Gamma_{NOS} \circ D^{NOS}_{i, j}\| \quad \forall (x, y) \in [n']^2 \label{eq:max-tarski-product-less-than-fifteen-max-nos-product}
    \end{equation}
    If \eqref{eq:max-tarski-product-less-than-fifteen-max-nos-product} holds, we have:
    \begin{align}
        SA(TARSKI(n', 2)) &\geq \frac{\|\Gamma_{\text{Tarski}}\|}{\max_{(x, y) \in [n']^2} \|\Gamma_{\text{Tarski}} \circ D^{\text{Tarski}}_{(x, y)}\|} \\
        &\geq \frac{1}{7} \cdot \frac{\|\Gamma_{NOS}\|}{\max_{i \in [n+1], j \in [n]} \|\Gamma_{NOS} \circ D^{NOS}_{i, j}\|} \\
        &= \frac{1}{7}SA(NOS_{n+1, n})
    \end{align}
    Which would complete the proof. To prove \eqref{eq:max-tarski-product-less-than-fifteen-max-nos-product}, we consider several cases for $(x, y)$.

    \paragraph{Case $(x, y)$ up-left or down-right of the chunks.} Since a chunked spine always runs through the chunks, the up-left triangle is always above the spine and the down-right triangle is always below it. In the herringbone construction, this alone determines the function values, so no instances differ here. Therefore, every entry of $\Gamma_{\text{Tarski}} \circ D^{\text{Tarski}}_{(x, y)}$ is zero, so $\|\Gamma_{\text{Tarski}} \circ D^{\text{Tarski}}_{(x, y)}\| = 0$, which proves \eqref{eq:max-tarski-product-less-than-fifteen-max-nos-product} in this case.

    \paragraph{Case $(x, y)$ on a chunk boundary.} Then there exists $i \in [n+1]$ and $j \in [n]$ such that $(x, y) = B^{\textsc{Bound}(i)}_j$. We can appeal directly to the choice of $\Gamma_{\text{Tarski}} = \Gamma_{NOS}$ and to \eqref{eq:tarski-D-equals-NOS-for-chunk-boundary} to get:
    \begin{align}
        \|\Gamma_{\text{Tarski}} \circ D^{\text{Tarski}}_{(x, y)}\| = \|\Gamma_{NOS} \circ D^{NOS}_{i, j}\|\,,
    \end{align}
    Which proves \eqref{eq:max-tarski-product-less-than-fifteen-max-nos-product} in this case.

    \paragraph{Case $x+y \leq n$.} Here $(x, y)$ is in the portion of $[n']^2$ that the spine runs through from $(1, 1)$ to the first chunk boundary. Let $f, g$ be arbitrary instances of $TARSKI(n', 2)$ following our construction such that $f(x, y) \neq g(x, y)$. This and the remaining cases will be handled by using \cref{lem:point-sliding-contiguous-regions} to find a small set $V$, independent of $f$ and $g$, for which we must have $f(v) \neq g(v)$ for some $v \in V$. Here we use \cref{lem:point-sliding-contiguous-regions} immediately with $S_1=\{(1, 1)\}$, $S_2=B^{\textsc{Bound}(1)}$, and the point $(x, y)$ to get thresholds $\delta_1, \delta_2, \delta_3, \delta_4$. We then choose:
    \begin{align}
        V &= \left\{v \in B^{\textsc{Bound}(1)} \mid v_1 \in \{\delta_1, \delta_2, \delta_4\}\right\}
    \end{align}
    Then $|V| \leq 3$. Suppose for contradiction that $f(v)=g(v)$ for all $v \in V$. Since the function values in $B^{\textsc{Bound}(1)}$ give ordered search feedback for $C^f_1$ and $C^g_1$, the spines of $f$ and $g$ fall into the same cases of \cref{lem:point-sliding-contiguous-regions}. Specifically:
    \begin{itemize}
        \item If $\left(B^{\textsc{Bound}(1)}_{C^f_1}\right)_1 \leq \delta_1$, then because $f(v)=g(v)$ for all $v \in V$ the same must hold for $\left(B^{\textsc{Bound}(1)}_{C^g_1}\right)_1$. Therefore by \cref{lem:point-sliding-contiguous-regions}, the spines of $f$ and $g$ both pass above $(x, y)$, so we would have $f(x, y)=g(x, y)=(x-1, y+1)$.
        \item If $\left(B^{\textsc{Bound}(1)}_{C^f_1}\right)_1 \in (\delta_1, \delta_2]$, then the same must hold for $\left(B^{\textsc{Bound}(1)}_{C^g_1}\right)_1$. By \cref{lem:point-sliding-contiguous-regions}, the spines of $f$ and $g$ both pass through $(x, y)$ and $(x, y+1)$. Because the fixed points of $f$ and $g$ have coordinate sums at least $\textsc{Bound}(1)$, their spines are oriented up-right near $(x, y)$, so we would have $f(x, y)=g(x, y)=(x, y+1)$.
        \item If $\left(B^{\textsc{Bound}(1)}_{C^f_1}\right)_1 \in (\delta_2, \delta_4]$, then the same must hold for $\left(B^{\textsc{Bound}(1)}_{C^g_1}\right)_1$. By \cref{lem:point-sliding-contiguous-regions}, the spines of $f$ and $g$ both pass through $(x, y)$ and $(x+1, y)$. Since both functions' spines are oriented up-right near $(x, y)$, we would have $f(x, y)=g(x, y)=(x+1,y)$.
        \item If $\left(B^{\textsc{Bound}(1)}_{C^f_1}\right)_1 > \delta_4$, then the same must hold for $\left(B^{\textsc{Bound}(1)}_{C^g_1}\right)_1$. By \cref{lem:point-sliding-contiguous-regions}, the spines of $f$ and $g$ both pass below $(x, y)$, so we would have $f(x, y)=g(x, y)=(x+1,y-1)$.
    \end{itemize}
    All cases contradict our assumption that $f(x, y) \neq g(x, y)$, so we must have $f(v) \neq g(v)$ for some $v \in V$. Because this holds for arbitrary instances $f$ and $g$, we have:
    \begin{align}
        D^{\text{Tarski}}_{(x, y)} &\leq \sum_{v \in V} D^{\text{Tarski}}_v, \label{eq:bottom-left-triangle-case-d-tarski-matrix-inequality}
    \end{align}
    Where the inequality is taken elementwise. Because $\Gamma_{\text{Tarski}}$ is nonnegative, taking the elementwise product of both sides of \eqref{eq:bottom-left-triangle-case-d-tarski-matrix-inequality} with $\Gamma_{\text{Tarski}}$ preserves the inequality:
    \begin{align}
        \Gamma_{\text{Tarski}} \circ D^{\text{Tarski}}_{(x, y)} &\leq \sum_{v \in V} \Gamma_{\text{Tarski}} \circ D^{\text{Tarski}}_v
    \end{align}
    Taking the norm of both sides and applying the triangle inequality gives:
    \begin{align}
        \left\| \Gamma_{\text{Tarski}} \circ D^{\text{Tarski}}_{(x, y)}\right\| &\leq \sum_{v \in V} \left\|\Gamma_{\text{Tarski}} \circ D^{\text{Tarski}}_v\right\| \\
        &\leq 3 \cdot \max_{v \in V} \left\|\Gamma_{\text{Tarski}} \circ D^{\text{Tarski}}_v\right\|
    \end{align}
    Using \eqref{eq:tarski-D-equals-NOS-for-chunk-boundary} and the choice of $\Gamma_{\text{Tarski}}=\Gamma_{NOS}$, we get:
    \begin{align}
        \left\| \Gamma_{\text{Tarski}} \circ D^{\text{Tarski}}_{(x, y)}\right\| &\leq 3 \cdot \max_{i \in [n+1], j \in [n]} \left\|\Gamma_{NOS} \circ D^{NOS}_{i, j}\right\|
    \end{align}
    Which proves \eqref{eq:max-tarski-product-less-than-fifteen-max-nos-product} in this case.

    \paragraph{Case $x+y \geq n'-n$.} Here $(x, y)$ is in the portion of $[n']^2$ that the spine runs through from the last chunk boundary to $(n', n')$. The proof here is extremely similar to the $x+y \leq n$ case. Let $f, g$ be arbitrary instances of $TARSKI(n', 2)$ following our construction such that $f(x, y) \neq g(x, y)$. Let $\delta_1, \delta_2, \delta_3, \delta_4$ be the thresholds given by \cref{lem:point-sliding-contiguous-regions} with $S_1=B^{\textsc{Bound}(n+1)}$, $S_2=\{(n', n')\}$, and the point $(x, y)$. We then choose:
    \begin{align}
        V &= \left\{v \in B^{\textsc{Bound}(n+1)} \mid v_1 \in \{\delta_1, \delta_3, \delta_4\}\right\}
    \end{align}
    Then $|V| \leq 3$. Suppose for contradiction that $f(v)=g(v)$ for all $v \in V$. Since the function values in $B^{\textsc{Bound}(n+1)}$ give ordered search feedback for $C^f_{n+1}$ and $C^g_{n+1}$, the spines of $f$ and $g$ fall into the same cases of \cref{lem:point-sliding-contiguous-regions}. Specifically:
    \begin{itemize}
        \item If $\left(B^{\textsc{Bound}(n+1)}_{C^f_{n+1}}\right)_1 \leq \delta_1$, then because $f(v)=g(v)$ for all $v \in V$ the same must hold for $\left(B^{\textsc{Bound}(n+1)}_{C^g_{n+1}}\right)_1$. Therefore by \cref{lem:point-sliding-contiguous-regions}, the spines of $f$ and $g$ both pass above $(x, y)$, so we would have $f(x, y)=g(x, y)=(x-1, y+1)$.
        \item If $\left(B^{\textsc{Bound}(n+1)}_{C^f_{n+1}}\right)_1 \in (\delta_1, \delta_3]$, then the same must hold for $\left(B^{\textsc{Bound}(n+1)}_{C^g_{n+1}}\right)_1$. By \cref{lem:point-sliding-contiguous-regions}, the spines of $f$ and $g$ both pass through $(x, y)$ and $(x-1, y)$. Because the fixed points of $f$ and $g$ have coordinate sums at most $\textsc{Bound}(n+1)$, their spines are oriented down-left near $(x, y)$, so we would have $f(x, y)=g(x, y)=(x-1, y)$.
        \item If $\left(B^{\textsc{Bound}(n+1)}_{C^f_{n+1}}\right)_1 \in (\delta_3, \delta_4]$, then the same must hold for $\left(B^{\textsc{Bound}(n+1)}_{C^g_{n+1}}\right)_1$. By \cref{lem:point-sliding-contiguous-regions}, the spines of $f$ and $g$ both pass through $(x, y)$ and $(x, y-1)$. Since both functions' spines are oriented down-left near $(x, y)$, we would have $f(x, y)=g(x, y)=(x,y-1)$.
        \item If $\left(B^{\textsc{Bound}(n+1)}_{C^f_{n+1}}\right)_1 > \delta_4$, then the same must hold for $\left(B^{\textsc{Bound}(n+1)}_{C^g_{n+1}}\right)_1$. By \cref{lem:point-sliding-contiguous-regions}, the spines of $f$ and $g$ both pass below $(x, y)$, so we would have $f(x, y)=g(x, y)=(x+1,y-1)$.
    \end{itemize}
    All cases contradict the assumption that $f(x, y) \neq g(x, y)$, so we must have $f(v) \neq g(v)$ for some $v \in V$. Because this holds for arbitrary instances $f$ and $g$, we have:
    \begin{align}
        D^{\text{Tarski}}_{(x, y)} &\leq \sum_{v \in V} D^{\text{Tarski}}_v,
    \end{align}
    Where the inequality is taken elementwise. By identical reasoning to the $x+y \leq n$ case, we eventually get:
    \begin{align}
        \left\| \Gamma_{\text{Tarski}} \circ D^{\text{Tarski}}_{(x, y)}\right\| &\leq |V| \cdot \max_{i \in [n+1], j \in [n]} \left\| \Gamma_{NOS} \circ D^{NOS}_{i, j}\right\| ,
    \end{align}
    Which, since $|V| \leq 3$, proves \eqref{eq:max-tarski-product-less-than-fifteen-max-nos-product} in this case.
    
    \paragraph{Case $(x, y)$ in a chunk.} All that remains are the points in a chunk but not on a chunk boundary. Let $\alpha \in [n], \beta \in [n+2]$ be such that $(x, y)$ is in region $\beta$ of chunk $\alpha$. Let $\gamma \in [n]$ be the index guaranteed by \cref{lem:region-shape-rectangular} such that $(x, y)$ is on the grid line from $B^{\textsc{Bound}(\alpha)}_{\gamma}$ to $B^{\textsc{Bound}(\alpha+1)}_{\gamma}$. We claim that there exists a set of indices $K \subseteq [n]$ of size at most $4$ such that, for all instances $f$ and $g$ of $TARSKI(n', 2)$ following our construction such that $f(x, y) \neq g(x, y)$, at least one of the following is true:
    \begin{enumerate}[(i)]
        \item $f(B^{\textsc{Bound}(\alpha)}_{\gamma}) \neq g(B^{\textsc{Bound}(\alpha)}_{\gamma})$,
        \item $f(B^{\textsc{Bound}(\alpha+1)}_{\gamma}) \neq g_(B^{\textsc{Bound}(\alpha+1)}_{\gamma})$,
        \item $f(B^{\textsc{Bound}(\alpha)}_{\beta-1}) \neq g(B^{\textsc{Bound}(\alpha)}_{\beta-1})$,
        \item or $f(B^{\textsc{Bound}(\alpha+1)}_i) \neq g(B^{\textsc{Bound}(\alpha+1)}_i)$ for some $i \in K$.
    \end{enumerate}
    These at most $7$ points will be our $V$ for this case.
    To show this, assume that for some $f, g$ all of (i)-(iii) are false. The desired $K$ will then be constructed, independently of $f$ and $g$. Let $C^f, C^g \in [n]^{n+1}$ be the generators of the chunked spines of $f$ and $g$, respectively, and assume without loss of generality that $C^f_\alpha \leq C^g_\alpha$.

    If $\beta<C^f_{\alpha}+1$, then by the construction of the chunked spines, in region $\beta$ the spines of $f$ and $g$ would be along grid lines from $B^{\textsc{Bound}(\alpha)}_{C^f_\alpha}$ to $B^{\textsc{High}(\alpha, \beta)}_{C^f_\alpha}$ and $B^{\textsc{Bound}(\alpha)}_{C^g_\alpha}$ to $B^{\textsc{High}(\alpha, \beta)}_{C^g_\alpha}$, respectively. But since $f(B^{\textsc{Bound}(\alpha)}_\gamma) = g(B^{\textsc{Bound}(\alpha)}_\gamma)$, we have that $C^f_\alpha$ and $C^g_\alpha$ are on the same side of $\gamma$, so their spines would be on the same side of $(x, y)$. This would contradict our assumption that $f(x, y) \neq g(x, y)$, so we have $\beta \geq C^f_{\alpha}+1$.
    
    Similarly, if $\beta > C^g_\alpha+1$, then in region $\beta$ the spines of $f$ and $g$ would be along grid lines from $B^{\textsc{Low}(\alpha, \beta)}_{C^f_{\alpha+1}}$ to $B^{\textsc{Bound}(\alpha+1)}_{C^f_{\alpha+1}}$ and $B^{\textsc{Low}(\alpha, \beta)}_{C^g_{\alpha+1}}$ to $B^{\textsc{Bound}(\alpha+1)}_{C^g_{\alpha+1}}$, respectively. Since $f(B^{\textsc{Bound}(\alpha+1)}_\gamma) = g(B^{\textsc{Bound}(\alpha+1)}_\gamma)$, we have that $C^f_{\alpha+1}$ and $C^g_{\alpha+1}$ are on the same side of $\gamma$, so their spines would be on the same side of $(x, y)$. This would again contradict our assumption that $f(x, y) \neq g(x, y)$, so we have $\beta \leq C^g_\alpha+1$.
    
    Therefore, we have $\beta-1 \in [C^f_\alpha, C^g_\alpha]$. However, we also have $f(B^{\textsc{Bound}(\alpha)}_{\beta-1}) = g(B^{\textsc{Bound}(\alpha)}_{\beta-1})$, so $C^f_\alpha$ and $C^g_\alpha$ must be on the same side of $\beta-1$. The only way for both of these conditions to hold is if $C^f_\alpha=C^g_\alpha=\beta-1$. Therefore, in region $\beta$, the spines of both $f$ and $g$ are transitioning along grid lines from $B^{\textsc{Low}(\alpha, \beta)}_{C^f_\alpha}$ to $B^{\textsc{High}(\alpha, \beta)}_{C^f_{\alpha+1}}$ and $B^{\textsc{Low}(\alpha, \beta)}_{C^g_\alpha}$ to $B^{\textsc{High}(\alpha, \beta)}_{C^g_{\alpha+1}}$, respectively. Moreover, since $B^{\textsc{Bound}(\alpha)}_{\beta-1}$ is on the spines of both $f$ and $g$ and $f(B^{\textsc{Bound}(\alpha)}_{\beta-1}) = g(B^{\textsc{Bound}(\alpha)}_{\beta-1})$, the orientation of the spine is the same near $(x, y)$; all that differs is the shape.
    
    We will choose $K$ to cover all relevant thresholds for that shape. Let $\delta_1, \delta_2, \delta_3, \delta_4$ be the thresholds given by \cref{lem:point-sliding-contiguous-regions} for $S_1=\{B^{\textsc{Low}(\alpha, \beta)}_{\beta-1}\}$, $S_2 = B^{\textsc{High}(\alpha, \beta)}$, and the point $(x, y)$. We then set:
    \begin{align}
        K = \bigcup_{i=1}^4 \left\{j \in [n] \mid (B^{\textsc{High}(\alpha, \beta)}_j)_1=\delta_i\right\}
    \end{align}
    Suppose for contradiction that $f(B^{\textsc{Bound}(\alpha+1)}_i)=g(B^{\textsc{Bound}(\alpha+1)}_i)$ for all $i \in K$. Then because queries in $B^{\textsc{Bound}(\alpha+1)}$ give ordered search feedback on $C^f_{\alpha+1}$ and $C^g_{\alpha+1}$, we have for all $i, j \in K$:
    \begin{align}
        C^f_{\alpha+1} \in (i, j] \iff C^g_{\alpha+1} \in (i, j]
    \end{align}
    By the choice of $K$ and because the spines of $f$ and $g$ pass through $B^{\textsc{High}(\alpha, \beta)}_{C^f_{\alpha+1}}$ and $B^{\textsc{High}(\alpha, \beta)}_{C^g_{\alpha+1}}$, respectively, we therefore have for all $i, j \in [4]$:
    \begin{align}
        \left(B^{\textsc{High}(\alpha, \beta)}_{C^f_{\alpha+1}}\right)_1 \in (\delta_i, \delta_j] \iff \left(B^{\textsc{High}(\alpha, \beta)}_{C^g_{\alpha+1}}\right)_1 \in (\delta_i, \delta_j]
    \end{align}
    Therefore, the spines of $f$ and $g$ fall into the same cases of \cref{lem:point-sliding-contiguous-regions}. The reasoning here is slightly more involved than in the $x+y \leq n$ and $x+y \geq n'-n$ cases, though, since the spines of $f$ and $g$ could both be going up-right or both be going down-left. For brevity, let $x'=(B^{\textsc{High}(\alpha, \beta)}_{C^f_{\alpha+1}})_1$. Going case-by-case:
    \begin{itemize}
        \item If $x' \leq \delta_1$, then by \cref{lem:point-sliding-contiguous-regions} the spines of $f$ and $g$ each pass up-left of $(x, y)$. Therefore $f(x, y)=g(x, y)=(x-1, y+1)$.
        \item If $x' \in (\delta_1, \delta_4]$, then the spines of $f$ and $g$ both pass through $(x, y)$. We now have several sub-cases based on the orientation of the spines of $f$ and $g$.
        \begin{itemize}
            \item If the spines of $f$ and $g$ go up-right through $(x, y)$ and $x' \in (\delta_1, \delta_2]$, then by \cref{lem:point-sliding-contiguous-regions} the spines of $f$ and $g$ both go from $(x, y)$ to $(x, y+1)$. Therefore $f(x, y)=g(x, y)=(x, y+1)$.
            \item If the spines of $f$ and $g$ go up-right through $(x, y)$ and $x' \in (\delta_2, \delta_4]$, then by \cref{lem:point-sliding-contiguous-regions} the spines of $f$ and $g$ both go from $(x, y)$ to $(x+1, y)$. Therefore $f(x, y)=g(x, y)=(x+1, y)$.
            \item If the spines of $f$ and $g$ go down-left through $(x, y)$ and $x' \in (\delta_1, \delta_3]$, then by \cref{lem:point-sliding-contiguous-regions} the spines of $f$ and $g$ both go from $(x, y)$ to $(x-1, y)$. Therefore $f(x, y)=g(x, y)=(x-1, y)$.
            \item If the spines of $f$ and $g$ go down-left through $(x, y)$ and $x' \in (\delta_3, \delta_4]$, then by \cref{lem:point-sliding-contiguous-regions} the spines of $f$ and $g$ both go from $(x, y)$ to $(x, y-1)$. Therefore $f(x, y)=g(x, y)=(x, y-1)$.
        \end{itemize}
        \item If $x' > \delta_4$, then by \cref{lem:point-sliding-contiguous-regions} the spines of $f$ and $g$ both pass down-right of $(x, y)$. Therefore $f(x, y)=g(x, y)=(x+1, y-1)$. 
    \end{itemize}
    In all cases $f(x, y)=g(x, y)$, contradicting the assumption that they differed. Therefore, we must have $f(B^{\textsc{Bound}(\alpha+1)}_i) \neq g(B^{\textsc{Bound}(\alpha+1)}_i)$ for some $i \in K$.

    We now have a set of points on chunk boundaries of size at most seven on which $f$ and $g$ cannot completely agree. These points are our $V$:
    \begin{align}
        V &= \left\{B^{\textsc{Bound}(\alpha)}_\gamma, B^{\textsc{Bound}(\alpha+1)}_\gamma, B^{\textsc{Bound}(\alpha)}_{\beta-1}\right\} \cup \left\{B^{\textsc{Bound}(\alpha+1)}_i \mid i \in K\right\}
    \end{align}
    Since $f$ and $g$ were arbitrary functions such that $f(x, y) \neq g(x, y)$, we therefore have:
    \begin{align}
        D^{\text{Tarski}}_{(x, y)} &\leq \sum_{v \in V} D^{\text{Tarski}}_v,
    \end{align}
    Where ``$\leq$" is applied elementwise. By identical reasoning to the $x+y \leq n$ and $x+y \geq n'-n$ cases, we eventually have:
    \begin{align}
        \left\|\Gamma_{\text{Tarski}} \circ D^{\text{Tarski}}_{(x, y)}\right\| &\leq |V| \cdot \max_{i \in [n+1], j \in [n]} \left\|\Gamma_{NOS} \circ D^{NOS}_{i, j}\right\|
    \end{align}
    Which since $|V| \leq 7$ proves \eqref{eq:max-tarski-product-less-than-fifteen-max-nos-product} in this final case, completing the proof.
\end{proof}

\tarskiComplexityMonotone*

\begin{proof}
    We give a black-box reduction. Let $f : \mathcal{L}^{k'}_{n'} \to \mathcal{L}^{k'}_{n'}$ be an instance of $TARSKI(n', k')$. Let $\mathcal{A}$ be an optimal quantum algorithm for $TARSKI(n, k)$. We will construct an algorithm $\mathcal{A}'$ that solves $TARSKI(n', k')$ in the same number of queries.
    
    Let $g : \mathcal{L}^k_n \to \mathcal{L}^{k'}_{n'}$ be the ``clamp function" defined coordinate-wise by:
    \begin{align}
        (g(x))_i &= \min \{x_i, n'\} \qquad \text{for all $i \in [k']$}
    \end{align}
    Then algorithm $\mathcal{A}'$ will run $\mathcal{A}$ on the function $f \circ g$ and return its result. We need to justify three things to complete the proof:
    \begin{itemize}
        \item First, that $\mathcal{A}'$ uses the same number of queries as $\mathcal{A}$. This follows from being able to implement one query to $f \circ g$ with one query to $f$, since $g$ requires no knowledge of $f$ to compute.
        \item Second, that $f \circ g$ is a valid instance of $TARSKI(n, k)$, i.e. that it is monotone. The function $g$ is monotone, since its operation on each coordinate is monotone. Therefore, the composition $f \circ g$ is also monotone.
        \item Third, that any fixed point of $f \circ g$ is also a fixed point of $f$. Let $x$ be a fixed point of $f \circ g$. We must have $x \in \mathcal{L}^{k'}_{n'}$, since $\mathcal{L}^{k'}_{n'}$ is the codomain of $f$. But then $g(x)=x$, so $x = (f \circ g)(x) = f(x)$.
    \end{itemize}
    Therefore $\mathcal{A}'$ uses the same number of queries as $\mathcal{A}$ and is correct at least as often, so it solves $TARSKI(n', k')$.
\end{proof}

\end{document}